\tikzstyle{player}=[state,draw,rounded rectangle,align=center]
\tikzstyle{widget}=[draw=red,rectangle, rounded rectangle=10pt,dashed,minimum size=6mm,fill=yellow]
\tikzset{every loop/.style={looseness=7}}
\tikzstyle{player1}=[state,draw,rounded rectangle,align=center]
\tikzstyle{player2}=[state,draw,rectangle,align=center]
\tikzstyle{subgraph}=[rectangle,draw,dashed,minimum width=180,minimum height=100,font=\sffamily\Large\bfseries]
\tikzstyle{subgraph2}=[rectangle,draw,dashed,minimum width=95,minimum height=90,font=\sffamily\Large\bfseries]
\definecolor{armygreen}{rgb}{0.29, 0.33, 0.13}
\definecolor{ao(english)}{rgb}{0.0, 0.5, 0.0}
\definecolor{myblue}{rgb}{0.3, 0.4, 0.6}
\definecolor{pakistangreen}{rgb}{0.2, 0.8, 0.2}
\definecolor{hardness}{rgb}{0.16, 0.5, 0.0}
\definecolor{pseudo}{rgb}{1.0, 0.5, 0.0} 
\definecolor{bscc}{rgb}{0.24, 0.7, 0.44}
\definecolor{mikadoyellow}{rgb}{1.0, 0.77, 0.05}
\definecolor{cblue}{rgb}{0.19, 0.73, 0.56}
\definecolor{bluemdp}{rgb}{0.25, 0.29, 0.3}
\definecolor{redmdp}{rgb}{0.6, 0.0, 0.0}
\newtheorem{remark}{Remark}[section]
\newtheorem{assumption}{Assumption}[section]
\DeclareRobustCommand*\cal{\@fontswitch\relax\mathcal}
\DeclareMathOperator{\even}{even}
\DeclareMathOperator{\odd}{odd}
\DeclareMathOperator{\infny}{inf}
\DeclareMathOperator{\proj}{proj}
\newcommand{\sgec}{\ensuremath{{\cal EC}_{\rm I}}}
\newcommand{\ugec}{\ensuremath{{\cal EC}_{\rm II}}}
\newcommand{\nec}{\ensuremath{{\cal EC}_{\rm III}}}
\newcommand{\typeone}{\ensuremath{{\sf Type\:{\rm I}}\ }}
\newcommand{\typetwo}{\ensuremath{{\sf Type\:{\rm II}}\ }}
\newcommand{\typethree}{\ensuremath{{\sf Type\:{\rm III}}\ }}
\DeclareMathOperator{\parobj}{p}
\newcommand{\QPL}{\ensuremath{\sf{QPL}\ }}
\newcommand{\QPLnospace}{\ensuremath{\sf{QPL}}}
\newcommand{\univ}{\ensuremath{\mathtt{A}}}
\newcommand{\qual}{\ensuremath{\mathtt{AS}}}
\newcommand{\posi}{\ensuremath{\mathtt{NZ}}}
\newcommand{\exis}{\ensuremath{\mathtt{E}}}
\newcommand{\event}{\ensuremath{{\sf \lozenge}}}
\newcommand{\parit}{\ensuremath{\mathtt{parity}}}
\newcommand{\induniv}{\ensuremath{a}}
\newcommand{\indqual}{\ensuremath{as}}
\newcommand{\indposi}{\ensuremath{nz}}
\newcommand{\indexis}{\ensuremath{e}}
\newcommand{\setuniv}{\ensuremath{\mathcal{A}}}
\newcommand{\setqual}{\ensuremath{\mathcal{AS}}}
\newcommand{\setposi}{\ensuremath{\mathcal{NZ}}}
\newcommand{\setexis}{\ensuremath{\mathcal{E}}}
\newcommand{\combi}[2]{\ensuremath{\bigwedge(#1,#2)}}
\newcommand*{\satisfy}[3]{#1,#2\models_{#3}}
\newcommand*{\satisfies}[2]{#1\models_{#2}}
\newcommand*{\nmodels}{\ensuremath{\nvDash}}
\newcommand*{\pipe}{\,|\,}
\newcommand*{\nd}{:\ }
\newcommand*{\nin}{\ensuremath{\not\in}}
\newcommand*{\eqdef}{\ensuremath{\overset{\text{\tiny def}}{=}}}
\newcommand{\abs}[1]{\ensuremath{\left|#1\right|}}
\newcommand*{\fun}[3]{#1 \colon #2 \to #3 }
\newcommand*{\limcup}[2]{\underset{#1}{\overset{#2}{\bigcup}}}
\newcommand*{\limwedgeone}[1]{\underset{#1}{\bigwedge}}
\newcommand{\initState}{\ensuremath{s_{{\sf init}}} }
\renewcommand{\proj}[1]{\ensuremath{{\sf proj}_{#1}} }
\newcommand{\commentsymbol}{//}
\algrenewcommand\algorithmiccomment[1]{\hfill \commentsymbol{} \textit{#1}}
\newcommand{\markovChain}{\ensuremath{\mathcal{M}} }
\newcommand{\MDP}{\ensuremath{\Gamma}}
\newcommand{\MDPtoMC}[2]{\ensuremath{{#1}^{[{#2}]}}}
\newcommand{\paths}{\ensuremath{\mathsf{Paths}}}
\newcommand{\fpaths}{\ensuremath{\mathsf{Fpaths}}}
\newcommand{\nat}{\ensuremath{\mathbb{N}} }
\newcommand{\rat}{\ensuremath{\mathbb{Q}} }
\newcommand{\last}[1]{\ensuremath{\mathsf{Last}(#1)} }
\newcommand{\post}{\ensuremath{\mathsf{Post}} }
\newcommand{\fn}{\ensuremath{\mathsf{f}} }
\newcommand{\NPinter}{\ensuremath{\text{NP} \cap \text{coNP}}}
\newcommand{\dists}{\ensuremath{\mathcal{D}} }
\newcommand{\prob}{\ensuremath{\mathsf{Pr}} }
\newcommand{\yes}{\ensuremath{\textsc{Yes}} }
\newcommand{\supp}{\ensuremath{\mathsf{Supp}} }
\newcommand{\zug}[1]{( #1  )}
\newcommand{\stam}[1]{}
\tikzset{circle split part fill/.style  args={#1,#2}{%
 alias=tmp@name, 
  postaction={%
    insert path={
     \pgfextra{%
     \pgfpointdiff{\pgfpointanchor{\pgf@node@name}{center}}%
                  {\pgfpointanchor{\pgf@node@name}{east}}%
     \pgfmathsetmacro\insiderad{\pgf@x}
      \fill[#1] (\pgf@node@name.base) ([xshift=-\pgflinewidth]\pgf@node@name.east) arc
                          (0:180:\insiderad-\pgflinewidth)--cycle;
      \fill[#2] (\pgf@node@name.base) ([xshift=\pgflinewidth]\pgf@node@name.west)  arc
                           (180:360:\insiderad-\pgflinewidth)--cycle;            
         }}}}}  
\colorlet{drouge}{red}
\colorlet{frouge}{red!20!white}
\colorlet{dbleu}{blue}
\colorlet{fbleu}{blue!40!white}
\colorlet{dviolet}{blue!50!red}
\colorlet{fviolet}{blue!50!red!40!white}
\colorlet{dvert}{green!80!black}
\colorlet{fvert}{green!80!black!20!white}
\colorlet{djaune}{yellow!80!black}
\colorlet{fjaune}{yellow!80!black!20!white}
\colorlet{dgris}{white!60!black}
\colorlet{fgris}{white!90!black}
\colorlet{dgrisf}{white!30!black}
\colorlet{fgrisf}{white!70!black}
\colorlet{dorange}{red!50!yellow}
\colorlet{forange}{red!50!yellow!30!white}
\tikzstyle{ptrond}=[draw,circle,minimum height=2mm]
\tikzstyle{ptcarre}=[draw,minimum width=3mm,minimum height=3mm]
\tikzstyle{moyrond}=[draw,circle,minimum height=5mm]
\tikzstyle{moycarre}=[draw,minimum width=4mm,minimum height=4mm]
\tikzstyle{rond}=[draw,circle,minimum height=7mm]
\tikzstyle{carre}=[draw,minimum width=6mm,minimum height=6mm]
\tikzstyle{rouge}=[draw=drouge,fill=frouge]
\tikzstyle{vert}=[draw=dvert,fill=fvert]
\tikzstyle{jaune}=[draw=djaune,fill=fjaune]
\tikzstyle{bleu}=[draw=dbleu,fill=fbleu]
\tikzstyle{violet}=[draw=dviolet,fill=fviolet]
\tikzstyle{orange}=[draw=dorange,fill=forange]
\tikzstyle{gris}=[draw=dgris,fill=fgris]
\tikzstyle{grisf}=[draw=dgrisf,fill=fgrisf]
\tikzstyle{rvert}=[style=rond,style=vert]
\tikzstyle{rrouge}=[style=rond,style=rouge]
\tikzstyle{roundrect}=[draw,rounded rectangle, minimum width=6mm,minimum height=6mm]
\tikzstyle{splitrond}=[draw,circle split,minimum height=7mm,circle split part fill={blue!50,red!50}]
\tikzstyle{splitrondbv}=[draw,circle split,minimum height=7mm,circle split part fill={fbleu,fvert}]
\tikzstyle{splitrondrg}=[draw,circle split,minimum height=7mm,circle split part fill={frouge,fgris}]
\tikzstyle{splitrondbo}=[draw,circle split,minimum height=7mm,circle split part fill={fbleu,forange}]
\def\listof#1{\expandafter\@listof#1+\@end}
\def\@listof#1+#2\@end{\def\@tempa{#1}\ifx\@tempa\@empty\else 
    \langle #1\rangle \fi
  \def\@tempa{#2}\ifx\@tempa\@empty\else,\@listof#2\@end\fi}
\def\stackof#1{\begin{array}{@{}>{\scriptstyle}c@{}}\expandafter\@stackof#1+\@end}
\def\@stackof#1+#2\@end{\def\@tempa{#1}\ifx\@tempa\@empty\else 
    \langle #1\rangle \fi
  \def\@tempa{#2}\ifx\@tempa\@empty\end{array}\else\\[-1mm]\@stackof#2\@end\fi}
\begin{document}

\title[Mixing Objectives in MDPs]{Mixing Probabilistic and non-Probabilistic Objectives in Markov Decision Processes}         


\author{Rapha\"{e}l Berthon}
\orcid{nnnn-nnnn-nnnn-nnnn}             
\affiliation{
  \institution{Universit\'{e} libre de Bruxelles \& University of Antwerp}            
}

\author{Shibashis Guha}
\affiliation{
  \institution{Universit\'{e} libre de Bruxelles}           
}

\author{Jean-Fran\c{c}ois Raskin}

\affiliation{
  \institution{Universit\'{e} libre de Bruxelles}           
}

\begin{abstract}
In this paper, we consider algorithms to decide the existence of strategies in MDPs for Boolean combinations of objectives.
These objectives are omega-regular properties that need to be enforced either {\em surely}, {\em almost surely}, {\em existentially}, or with {\em non-zero probability}. 
In this setting, relevant strategies are {\em randomized infinite memory strategies}: both infinite memory {\em and} randomization may be needed to play optimally. 
We provide algorithms to solve the general case of Boolean combinations and we also investigate relevant subcases. We further report on complexity bounds for these problems.
\end{abstract}

\begin{CCSXML}
<ccs2012>
<concept>
<concept_id>10011007.10011006.10011008</concept_id>
<concept_desc>Software and its engineering~General programming languages</concept_desc>
<concept_significance>500</concept_significance>
</concept>
<concept>
<concept_id>10003456.10003457.10003521.10003525</concept_id>
<concept_desc>Social and professional topics~History of programming languages</concept_desc>
<concept_significance>300</concept_significance>
</concept>
</ccs2012>
\end{CCSXML}

\ccsdesc[500]{Software and its engineering~General programming languages}
\ccsdesc[300]{Social and professional topics~History of programming languages}

\keywords{Markov Decision Processes, synthesis, omega-regular}  

\maketitle

	\section{Introduction}

Recently, there have been several works on how to mix the semantics of games and Markov decision processes~\cite{DBLP:conf/stacs/BruyereFRR14,AKV16,DBLP:journals/iandc/BruyereFRR17,BRR17,DBLP:conf/concur/ChatterjeeP19}. This setting provides means to model the interaction between a system and its environment that is uncontrollable but obeys stochastic dynamics. The setting is then used to reason on strategies of the system that ensure for example some properties with {\em certainty} and others with {\em high probability}.

Here, we extend this line of work by studying a general setting where objectives for the system are Boolean combinations of atoms. These atoms are omega-regular properties, expressed as parity conditions, that need to be ensured either {\em surely} ({\tt A}), {\em almost surely} ({\tt AS}), {\em existentially} ({\tt E}), or with {\em non-zero probability} ({\tt NZ}). 
Sure ({\tt A}) and existential ({\tt E}) atoms are {\em non-probabilistic} while almost sure ({\tt AS}) and non-zero ({\tt NZ}) atoms are {\em probabilistic}. The coexistence of atoms of both types that need to be satisfied by a unique strategy makes this problem out of reach of classical techniques used to solve MDPs with {\sf CTL} objectives or with {\sf PCTL} objectives for example. 

\paragraph{Infinite memory and randomization}
In some previous works on models that mix games and MDPs~\cite{DBLP:conf/stacs/BruyereFRR14,AKV16,DBLP:journals/iandc/BruyereFRR17,BRR17},
In~\cite{etessami2007multi}, combination of parity conditions are studied. In that paper, randomization is necessary, but not infinite memory.
In the setting that is considered in the current paper, relevant strategies for the systems are {\em randomized infinite memory strategies}: both infinite memory {\em and} randomization may be needed to play optimally.
This implies that the techniques used here are more complex than for the previous works.
Note that randomization is already necessary when considering conjunctions of two {\tt NZ} atoms. The example we give above in Figure~\ref{fig:GEC} is encompassed by the formalism of~\cite{etessami2007multi}, and shows why we need to add randomization when compared to the work in~\cite{BRR17}. In the MDP of Figure~\ref{fig:GEC}, there does not exist a deterministic choice from state $s_0$ between action $a$ and action $b$ that ensures ${\tt NZ}(p_1) \land {\tt NZ}(p_2)$, while a randomized strategy can enforce this objective by taking $a$ with probability $\alpha$ ($0 < \alpha < 1$) and taking $b$ with probability $1-\alpha$.

\begin{figure}[t]
\centering
\hspace{-15pt}
\centering
\scalebox{1}{
	\begin{tikzpicture}
		\node[player] (s1) at (0,0) {0,1} ;
		\node[player] (s0) at (2,0) {0,0} ;
		\node[player] (s2) at (4,0) {1,0} ;

		\node (s4) at (0,-.6) {\large $s_1$} ;
		\node (s5) at (2,-.6) {\large $s_0$} ;
		\node (s6) at (4,-.6) {\large $s_2$} ;
		
 \path[-latex]	(s0) edge              node[above] {\small $a,1$} (s1)				
				(s0) edge              node[above] {\small $b,1$} (s2)
				(s1) edge[loop left]   node {$a,1$} (s1)
				(s2) edge[loop right]  node {$b,1$} (s2)
		;
	\end{tikzpicture}
}
\caption{\label{fig:GEC}The MDP depicted here has 3 states and two parity functions $p_1$ and $p_2$. The numbers assigned by the parity functions to the states are depicted by the integers inside the states. 
A parity condition is enforced if the maximum value of states that appear infinitely often is even.
By taking $a$ from $s_0$, state $s_1$ is reached with probability one, and by taking $b$, $s_2$ is reached with probability one.  Clearly on this example, a randomized strategy is needed to win for $\posi(\parobj_1)\wedge\posi(\parobj_2)$ from state $s_0$.}
\end{figure}
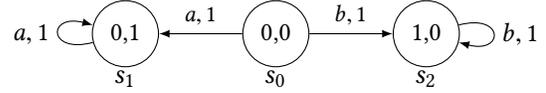
\paragraph{Main contributions} Our main contributions are summarized in Table~\ref{tab:main}. We provide a $\Sigma_2^{\sf P}$ algorithm to decide the existence of a strategy to enforce a Boolean combination of atomic objectives. We also show that this problem is both {\sf NP} and {\sf coNP} hard. Then we provide additional results for relevant subclasses of Boolean combinations. For the conjunctive case, we prove the existence of a polynomial algorithm that uses an {\sf NP} oracle while the problem is shown to be {\sf coNP} hard. For conjunctions that contain only one sure atom (1A) and a number of other atoms, the complexity goes down to $\sf{NP}\cap\sf{coNP}$ and it is at least as hard as solving parity games. The complexity of this algorithm is dominated by the complexity of solving parity games.  A polynomial time solution to parity games would lead to a polynomial time solution for our problem. Also the recent quasi-polynomial time solutions for parity games, see e.g.~\cite{DBLP:conf/stoc/CaludeJKL017}, can be used to obtain a quasi-polynomial time solution to our problem.  
Finally, for conjunctions that do not contain sure atoms, the problem can be solved in polynomial time.  
\vspace{-.5cm}
\begin{center}
\begin{table}[]
\begin{tabular}{c|c|c|}
\cline{2-3}
                                                            & Hardness               & Membership                              \\ \hline
\multicolumn{1}{|c|}{$\wedge(\qual,\posi,\exis)$}           & ${\sf P}$               & 
\begin{tabular}{@{}c@{}}${\sf P}$ \\ (Thm. \ref{thm:poly})\end{tabular}
\\ \hline
\multicolumn{1}{|c|}{$\wedge(1\univ,\qual,\posi,\exis)$}    & ${\sf parity}$          & 
\begin{tabular}{@{}c@{}}${\sf NP}\cap{\sf coNP}$ \\ (Thm. \ref{thm:NPcapCoNP})\end{tabular}
\\ \hline
\multicolumn{1}{|c|}{$\wedge(\univ,\qual,\posi,\exis)$}     & ${\sf coNP}$            & \begin{tabular}{@{}c@{}} ${\sf P}^{\sf NP}(=\Delta_2^{\sf P})$ \\ (Thm. \ref{thm:multi-parity})\end{tabular}  \\ \hline
\multicolumn{1}{|c|}{$\mathbb{B}(\univ,\qual,\posi,\exis)$} & 
\begin{tabular}{@{}c@{}} ${\sf NP}$ and ${\sf coNP}$ \\ (Thm. \ref{thm:hardness}) \end{tabular} & \begin{tabular}{@{}c@{}}${\sf NP}^{\sf NP}(=\Sigma_2^{\sf P})$ \\ (Thm. \ref{thm:Sigma2P}) \end{tabular} \\ \hline
\end{tabular}
\caption{\label{tab:main}Table of the main complexity results.}
\end{table}
\end{center}

\paragraph{Related works}
Logical formalisms to express properties of transition systems and Markov decision processes were plentifully studied in the literature. But most of the results in the literature only consider either logics based on {\em non-probabilistic} atoms, e.g. {\sf CTL}, or logics based on {\em probabilistic} atoms only, e.g. {\sf PCTL}. The logic {\sf PCTL} is used to express constraints on the probability of events that are temporal properties of paths.  In~\cite{brazdil2008controller}, the strategy synthesis problem for MDPs with {\sf PCTL} objectives is studied. The full logic, i.e. with arbitrary probabilistic thresholds, is undecidable but the qualitative fragment of the logic (thresholds 0 and 1, corresponding to {\tt NZ} and {\tt AS} in our setting) is decidable in ${\sf EXPTIME}$. 
This high complexity is due to the succinctness of PCTL.
As PCTL cannot express our non-probabilistic atoms, the two formalisms have incomparable expressive power.
Settings that mix both non-probabilistic properties, such as {\tt A} or {\tt E}, with probabilistic ones such as {\tt AS} or {\tt NZ}, are more recent. We make now a more detailed review of the recent relevant works in that direction.


In \cite{DBLP:conf/stacs/BruyereFRR14,DBLP:journals/iandc/BruyereFRR17}, MDPs with mean-payoff and shortest path objectives are considered. This work was, to the best of our knowledge, the first work to consider the synthesis of strategies that optimize an {\em expectation} (a probabilistic property) while satisfying a long-run {\em worst-case objective} (a non-probability objective). Similarly, the authors of~\cite{AKV16} consider the synthesis of strategies that ensure a parity condition {\em surely} and at the same time an $\epsilon$-optimal {\em expected} mean-payoff. Those works introduce refinements of the notion of {\em end-components} that we need to further refine here.

The authors of~\cite{michalewski2016measure} study an extension of MSO, called MSO+$\nabla$ which uses a probabilistic second order quantifier. The logic MSO+$\nabla$ is expressive enough to encode the problem we study here, but this logic has been proved to be  undecidable~\cite{bojanczyk2019mso,BFFGMMPRS20}. In~\cite{bojanczyk2016thin,bojanczyk2017emptiness}, a fragment of MSO+$\nabla$, called thin MSO has been introduced. The logic Thin MSO is expressive enough to encode the model-checking problem of the qualitative fragment ${\sf CTL}^*+{\sf PCTL}^*$ (union of ${\sf CTL}^*$ and ${\sf PCTL}^*$) over Markov chains. Their algorithm has non-elementary complexity. The algorithm was recently improved in~\cite{fournier2018alternating} where a model-checking algorithm with $\sf{3NEXPTIME}\cap \sf{co}-3\sf{NEXPTIME}$ complexity is proposed. The works in~\cite{bojanczyk2016thin,fournier2018alternating} do not consider the richer model of Markov decision processes as we do here. 

In~\cite{carayol2014randomization}, the authors study qualitative tree automata, that is automata with a probabilistic acceptance condition. The non-emptiness problem of nondeterministic tree automaton with such acceptance condition has been proved decidable, but the problem has been proved undecidable for nondeterministic tree automaton with such acceptance condition~\cite{BFFGMMPRS20}. There is a deep connection between tree automata and Markov Decision Processes, as the existence of a strategy on an MDP corresponds to deciding the non-emptiness of a qualitative tree automaton with unary alphabet.

In~\cite{michalewski2016regular,bojanczyk2017emptiness}, the authors study {\em subzero} automata: a class of tree automata with an acceptance condition that mixes the classical Rabin acceptance condition with probabilistic constraints. The problem of determining if a subzero automaton accepts some regular tree is decidable. This class of automaton can in turn be used to solve synthesis problem for {\em finite-memory} strategies (that are equivalent to regular trees) that enforce a first parity condition $p_1$ surely ({\tt A}) and a second parity condition $p_2$ almost-surely ({\tt AS}). Our work consider more general properties (both {\tt E} and {\tt NZ} in addition to {\tt A} and {\tt AS}, and their Boolean combinations) and more general strategies: randomized infinite memory strategies, and not only finite memory deterministic strategies (regular trees).

In this paper, we provide non-trivial extensions of results in~\cite{BRR17} where only the case of one sure parity objective ({\tt 1A}) and one almost-sure parity objective ({\tt 1AS}) is considered. An $\sf{NP}\cap\sf{coNP}$ algorithm is provided there for this special case. In this paper, in addition to a $\Sigma_2^{\sf P}$ algorithm for the general case $\mathbb{B}(\univ,\qual,\posi,\exis)$, we also provide an algorithm that solves conjunctions of one sure parity objective ({\tt 1A}) and any number of almost-sure ({\tt AS}), existential ({\tt E)}, and non-zero probability ({\tt NP}) parity objectives with the same worst-case complexity as that of~\cite{BRR17}. Algorithms in~\cite{BRR17} heavily rely on notions of very good end-components (VGEC) and ultra good end-components (UGEC). Here, we need generalization of VGEC and UGEC, and additional technical results to build algorithms for our more general setting.

Finally, the authors of~\cite{DBLP:conf/concur/ChatterjeeP19} consider the synthesis of {\em finite-memory strategies} for MDPs with a sure parity ({\tt S}) and an almost-sure parity ({\tt AS}) objectives. The restriction to finite memory strategies leads to simpler algorithms but the complexity is similar, i.e. $\sf{NP}\cap\sf{coNP}$. The authors of~\cite{DBLP:conf/concur/ChatterjeeP19} also consider the case of $2\frac{1}{2}$-player games. In that setting the problem is $\sf{coNP}$-complete.
\vspace{-.3cm}
\paragraph{Structure of the paper} In Section~2, we introduce necessary preliminaries about MDPs, and we formally define the class of properties that we consider, i.e. Boolean combinations of {\tt A}, {\tt AS}, {\tt E}, and {\tt NZ} atoms.
In Sections~3, 4 and 5, we study notions of end-components that are the main technical ingredients of our algorithms. Section~6 introduces additional techniques needed to handle {\tt E} and {\tt NZ} atoms. In Section~7, we study the complexity of algorithms for the general case, and several relevant fragments.

Due to lack of space, full proofs are provided in
	
	\section{Preliminaries}
\label{sec:prelims}
For $k \in \mathbb{N}$, we denote by $[k]_0$
and $[k]$ the set of natural numbers $\lbrace 0, \ldots, k \rbrace$ and $\lbrace 1, \ldots, k \rbrace$ respectively. Given a finite set $A$, a (rational) \textit{probability distribution} over $A$ is a function $\fun{\prob}{A}{[0, 1] \cap \rat}$ such that $\sum_{a\in A} \prob(a) = 1$. 
We denote the set of probability distributions on $A$ by $\dists(A)$.
The \textit{support} of the
probability distribution $\prob$ on $A$ is
$\supp(\prob) = \left\lbrace a \in A \;\vert\; \prob(a) >
  0\right\rbrace$.

\paragraph{Markov chain}
We denote by $\mathbb{N}$ the set $\{1,2,\ldots\}$, and by $\mathbb{N}_0$ the set $\mathbb{N}\cup \{0\}$. A \emph{Markov chain} (MC, for short) is a tuple $\mathcal{M} = \zug{S,E,\prob}$, where $S$ is a set of states, $E \subseteq S \times S$ is a set of edges (we assume in this paper that the set $E(s)$ of outgoing edges from $s$ is nonempty and finite for all $s \in S$), and $\prob: S \to \dists(E)$ assigns a probability distribution  -- on the set $E(s)$ of outgoing edges from $s$ -- to all states $s \in S$. In the following, $\prob(s,(s,s'))$ is denoted $\prob(s,s')$, for all $s \in S$. The Markov chain $\markovChain$ is \emph{finite} if $S$ is finite.

For $s \in S$, the set of infinite paths in $\mathcal{M}$ starting from \\ $s$ is $\paths^{\mathcal{M}}(s) = \lbrace \pi = s_0 s_1 \ldots \in S^\omega \mid s_0 = s, \forall\, n \in \nat_0,\, \\ \prob(s_n,s_{n+1}) > 0 \rbrace$. The set of all infinite paths in $\mathcal{M}$ is \\$\paths^{\mathcal{M}} = \bigcup_{s \in S} \paths^{\mathcal{M}}(s)$.
For $\pi = s_0 s_1 \ldots \in \paths^{\mathcal{M}}$, we denote by $\pi(i,l)$ the sequence of $l-i+1$ states (or $l+1$ edges) $s_i \ldots s_{i+l}$, and for simplicity, we denote $\pi(i,0)$ by $\pi(i)$. 
The infinite suffix of $\pi$ starting in $s_n$ is denoted by $\pi(n,\infty) \in \paths^{\mathcal{M}}$.
The set of finite paths starting from a state $s \in S$ is defined as $\fpaths^{\mathcal{M}}(s) = \lbrace \pi = s \ldots s' \in S^+ \mid \exists \bar{\pi} \in \paths^{\mathcal{M}},\; \pi\bar{\pi} \in \paths^{\mathcal{M}}(s) \rbrace$ and $\fpaths^{\mathcal{M}} = \bigcup_{s \in S} \fpaths^{\mathcal{M}}(s)$.
For $\pi=s \ldots s'$, we denote by $\last{\pi}$, the last state $s'$ in $\pi$.
As in~\cite{vardi1985automatic}, we extend the probability distribution to the space of infinite paths by considering cylinders defined by finite prefixes and using Carath\'{e}odory's extension theorem. 
We denote this probability distribution over the set of infinite paths beginning from some initial state $s$ by $\prob^s_{\mathcal{M}}$. When $s$ is clear from the context, we omit it and only denote this distribution by $\prob_{\mathcal{M}}$.
\paragraph{Markov decision process}
A finite \emph{Markov decision process} (MDP, for short) is a tuple $\Gamma = ( S,E,Act,\prob)$, where $S$ is a finite set of states, $Act$ is a finite set of actions, and $E \subseteq S \times Act \times S$ is a set of edges, and $\prob\colon S \times Act \to 
\dists(E)$ is a partial function that assigns a probability distribution -- on the set $E(s,a)$ of outgoing edges from $s$ -- to all states $s \in S$ if action $a \in Act$ is taken from $s$. For all $s\in S$ there exists at least one $a\in Act$ such that $E(s,a)$ is defined. Given $s \in S$ and $a \in Act$, we define $\post(s,a) = \lbrace s' \in S \mid \prob(s,a,s') > 0 \rbrace$. Then, for all state $s \in S$, we denote by $Act(s)$ the set of actions  $\lbrace a \in Act \mid \post(s,a) \neq \varnothing \rbrace$.
We assume that, for all $s \in S$, we have $Act(s) \neq \varnothing$.
Given an MDP $\MDP = ( S,E,Act,\prob)$, and a set of states $C\subseteq S$, we define the restriction of $\MDP$ to $C$, denoted $\MDP{\downharpoonright C}$, as the MDP $(C,E',Act,\prob')$ where $E'= \{(s,a,s') \pipe s,s'\in C, \ a\in Act, \post(s,a)\subseteq C, \text{ and } (s,a,s') \in E \}$, and 
$\prob'$ is a partial function defined as $\prob'(s,a)=\prob(s,a)$ if $(s,a,s') \in E'$ for $a \in Act$, and $s,s' \in C$, and is undefined otherwise.

A \emph{strategy} in $\Gamma$ is a function $\fun{\sigma}{S^{+}}{\dists(Act)}$
such that for all $s_0 \ldots s_n \in S^{+}$, we have $\supp(\sigma(s_0\ldots s_n))\subseteq Act(s_n)$.
A strategy $\sigma$ can be encoded by a transition system $\mathtt{T}=\zug{Q,S, act,\delta,\iota}$ where $Q$ is a (possibly infinite) set of states, called modes, $\fun{act}{Q \times S}{\dists(Act)}$ selects a distribution on actions such that, for all $q \in Q$ and $s \in S$, 
we have, $act(q,s) \in \dists(Act(s))$.
The function $\fun{\delta}{Q \times S}{Q}$ is a mode update function and $\fun{\iota}{S}{Q}$ selects an initial mode for each state $s \in S$.
If the current state is $s \in S$, and the current mode is $q \in Q$, then the strategy chooses the distribution $act(q,s)$, and the next state $s'$ is chosen according to the distribution $act(q,s)$.
Formally, $\zug{Q,S, act,\delta,\iota}$ defines the strategy $\sigma$ such that $\sigma(\rho \cdot s) = act(\delta^*(\iota(\rho(0)), \rho),s)$ for all $\rho \in S^{*}$, and $s \in S$, where $\delta^*$ extends $\delta$ to sequence of states starting from $\iota$ as expected, i.e., $\delta^*(\iota(\rho(0)), \rho \cdot s) = \delta(\delta^*(\iota(\rho(0)),\rho),s)$, and $\delta^*(\iota(\rho(0)), \varepsilon) = \iota(\rho(0))$.
We denote by $\mathtt{T}_\sigma$ a transition system with minimal number of modes that corresponds to a strategy $\sigma$.
A strategy is said to be \emph{memoryless} if there exists a transition system encoding the strategy with $|Q| = 1$, that is, the choice of action only depends on the current state.
A memoryless strategy can be seen as a function $\fun{\sigma}{S}{\dists(Act)}$.
Formally, a strategy $\sigma$ is memoryless if for all  finite sequences of states $\rho_1$ and $\rho_2$ in $S^{+}$ such that $\last{\rho_1} = \last{\rho_2}$, we have $\sigma(\rho_1) = \sigma(\rho_2)$.
A strategy is called a \emph{finite memory} strategy if there exists a transition system encoding the strategy in which $Q$ is finite.
A strategy is \emph{deterministic} if $\fun{\sigma}{S^{+}}{Act}$.
For deterministic strategies, we have $\fun{act}{Q \times S}{Act}$ such that for all $q \in Q$ and $s \in S$, we have $act(q,s) \in Act(s)$.
Note that the state space of $\MDPtoMC{\Gamma}{\sigma}$ is $Q \times S$.
For a sequence $\pi$ of states in $\MDPtoMC{\Gamma}{\sigma}$, we denote by $\proj{S}(\pi)$ the corresponding sequence of states in the MDP $\MDP$.
Once we fix a strategy $\sigma$ encoded by the transition system $\zug{Q,S, act,\delta,\iota}$
in an MDP $\Gamma = ( S,E,Act,\prob )$, we obtain an MC $\MDPtoMC{\Gamma}{\sigma}=(S',E',\prob')$, where $S'=Q\times S$ is the set of states, $E'=\{(q\times s)\times(q'\times s') \pipe q,q'\in Q,\ s,s'\in S, \delta(q,s)=q',\ \exists a\in Act,\ a\in\supp(act(q,s))\text{ and } (s,a,s')\in E\}$ is the set of edges, and for $q,q'\in Q,\ s,s'\in S$  we have the probability distribution $\prob'(q,s)(q',s')=\Sigma_{a\in \supp(act(q,s))}act(q,s)(a)\cdot \prob(s,a,s')$ if $q'=\delta(q,s)$ and is not defined otherwise. In the sequel, by abuse of notation, we write the projection onto the second component, that is $s$ instead of $(q,s)$, for a state of this MC, unless specifically stated.
\paragraph{One and two-player games}
For a given objective, an MDP $\Gamma = ( S,E,Act,\prob)$ 
can also be considered to have the semantics of a zero-sum two-player turn-based game 
where the game is played for infinitely many rounds and the exact probabilities are not important (this is the case when we will consider $\univ$ and $\exis$ atoms).
The first round starts from a designated initial state $\initState \in S$.
In each round, Player~$1$ chooses an action $a \in Act(s)$ from a state $s$ while Player~$2$ that is adversarial resolves the non-determinism by choosing a state $s'$ such that $\prob(s,a,s') > 0$.
We denote by $G_\Gamma=\zug{S,E,Act}$ the two-player game that is obtained from an MDP $\Gamma=\zug{S,E,Act,\prob}$.
When the players resolve the non-determinism co-operatively, we have a one-player game.
Equivalently, in a one-player game, Player~$1$ chooses both action $a$ as well as the state $s'$.

Given a target set $T$, we define the attractor of $T$, denoted $\mathtt{Attr}_{1}(T)$ as the set of states from which there exists a strategy for Player~$1$ to reach $T$ with certainty.
This corresponds to reachability in a classical ``and-or" graph.
For a two-player game, given $T$, an algorithm to obtain its attractor computes a sequence of sets of states ($\mathtt{Attr}^{n}_{1}(T)$)$_{n\ge 0}$ defined as follows:
\begin{inparaenum}[(i)]
\item $\mathtt{Attr}^{0}_{1}(T)= T$; and
\item for all $n\geq 0$: $\mathtt{Attr}^{n+1}_{1}(T) = \mathtt{Attr}^{n}_{1}(T) \cup \{ s \in S  \mid  \exists a\in Act,\  \post(s,a)\subseteq  \mathtt{Attr}^{n}_{1}(T) \}$.
\end{inparaenum}
Clearly $\mathtt{Attr}^{n+1}_{1}(T) \supseteq \mathtt{Attr}^{n}_{1}(T)$.
If $S$ is finite, then there exists an $m \in \nat_0$ such that $\mathtt{Attr}^{n}_{1}(T) = \mathtt{Attr}^{m}_{1}(T)$ for all $n \ge m$.
The algorithm for the case of one-player game only changes in the induction step where we have for all $n\geq 0$: $\mathtt{Attr}^{n+1}_{1}(T) = \mathtt{Attr}^{n}_{1}(T) \cup \{ s \in S  \mid  \exists a\in Act,\  \post(s,a) \cap \mathtt{Attr}^{n}_{1}(T) \neq \varnothing \}$.
The algorithm for the one-player case corresponds to classical graph reachability.

\stam{
For a two-player game $G$, the strategies available for Player~$1$ are defined in the same way as in the case of MDPs.
A strategy of Player~$2$ is a function $\fun{\mu}{S^{+}\cdot Act}{S}$, with the restriction that if $\mu(s_0 s_1 \dots s_n \cdot a)=s$ then 
$\prob(s_n,a,s) > 0$.
The set of strategies for Player 1 and Player 2 are denoted by $\mathsf{strat}_1(G)$ and $\mathsf{strat}_2(G)$ respectively.
In a two-player game, there is no randomness. 
Therefore, once the deterministic strategies are fixed, there is a unique path that occurs in the game, from an initial state $\initState$.
For a game $G$, given two strategies $\sigma_1 \in \mathsf{strat}_1(G)$ and $\sigma_2 \in \mathsf{strat}_2(G)$, we denote by $\pi_{(G,s,\sigma_1,\sigma_2)}$ the path that occurs in the two-player game $G$ under strategies $\sigma_1$ and $\sigma_2$ from state $s$.
Then, considering a function $\fn$ that associates a value to each infinite path of a two-player game, we denote by $V_s^{\fn}(G)$ the value $\sup\limits_{\sigma_1 \in \mathsf{strat}_1(G)} \inf\limits_{\sigma_2 \in \mathsf{strat}_2(G)} \fn(\pi_{(G,s,\sigma_1,\sigma_2)})$.
The definitions concerning the memories of strategies also apply to two-player games.
}
We denote the size of an MC $\mathcal{M}$, MDP $\MDP$ and two-player game $G$ by $|\mathcal{M}|$, $|\MDP|$ and $|G|$ respectively.
For each case, the size is the sum of the number of states, the number of edges, and the size of the representation of the transition matrix, that is, $|S|+|E|+\abs{\prob}$.

\paragraph*{Parity conditions and qualitative parity logic} Given an MDP $\MDP$, a parity condition is a function $\fun{\parobj}{S}{\mathbb{N}_0}$. 
Given a path $\pi\in S^{\omega}$, the set $\inf(\pi)=\{s\in S\ |\ \forall i\geq 0,\ \exists j\geq i,\ \textrm{ such that } \pi(j) =s\}$ is the set of states visited infinitely often on this path.
A path satisfies a parity condition $\parobj$ if $\max\{\parobj(s) \:|\: s\in \inf(\pi)\}$ is even.
Given a parity condition $\parobj$, its dual is the condition $\bar{\parobj}\colon s \mapsto 1+\parobj(s)$.
We denote by $\parit$ the set of parity conditions. 
A path satisfies $\bar{\parobj}$ iff it does not satisfy $\parobj$.
We now define qualitative parity logic (\QPL) which is defined by the following grammar.
\begin{align*}
atom &= \univ(\parobj) \pipe \exis(\parobj) \pipe \qual(\parobj) \pipe \posi(\parobj) \ (\parobj\in \parit)
\\
\varphi &= atom \pipe \varphi \wedge \varphi \pipe  \varphi \vee \varphi \pipe \neg \varphi 
\end{align*}
Given an MDP $\Gamma = ( S,E,Act,\prob)$, a state $s \in S$, and a parity condition $\parobj$, for the atomic formulas, we say that $s$ under strategy $\sigma$
\begin{itemize}
\item \emph{surely} satisfies $\parobj$, denoted $\satisfy{s}{\sigma}{\MDP}\univ(\parobj)$, iff $\\ \forall \pi \in \paths^{\MDPtoMC{\Gamma}{\sigma}}(s)$, we have that $\pi$ satisfies $\parobj$.
\item \emph{almost-surely} satisfies $\parobj$, denoted $\satisfy{s}{\sigma}{\MDP} \qual(\parobj)$, iff $\prob_{\MDPtoMC{\MDP}{\sigma}}(\{\pi \in \paths^{\MDPtoMC{\Gamma}{\sigma}}(s)\nd \pi$ satisfies $\parobj\})=1$.
\item satisfies $\parobj$ with \emph{non-zero probability}, denoted $\satisfy{s}{\sigma}{\MDP} \posi(\parobj)$, iff $\prob_{\MDPtoMC{\MDP}{\sigma}}(\{\pi \in \paths^{\MDPtoMC{\Gamma}{\sigma}}(s)\nd \pi$ satisfies $\parobj\})>0$.
\item \emph{existentially} satisfies $\parobj$, denoted $\satisfy{s}{\sigma}{\MDP} \exis(\parobj)$, iff $\exists \pi \in \paths^{\MDPtoMC{\Gamma}{\sigma}}(s)$, such that $\pi$ satisfies $\parobj$.
\end{itemize}

Given two \QPL formulas $\varphi$ and $\psi$, and a strategy $\sigma$ we define the semantics of Boolean connectives as follows:
\begin{itemize}
\item $\satisfy{s}{\sigma}{\MDP} \varphi\wedge \psi$ iff $\satisfy{s}{\sigma}{\MDP} \varphi$ and $\satisfy{s}{\sigma}{\MDP} \psi$
\item $\satisfy{s}{\sigma}{\MDP} \varphi\vee \psi$ iff $\satisfy{s}{\sigma}{\MDP} \varphi$ or $\satisfy{s}{\sigma}{\MDP} \psi$
\item $\satisfy{s}{\sigma}{\MDP} \neg \univ(\parobj)$ iff $\satisfy{s}{\sigma}{\MDP} \exis(\bar\parobj)$
\item $\satisfy{s}{\sigma}{\MDP} \neg \exis(\parobj)$ iff $\satisfy{s}{\sigma}{\MDP} \univ(\bar\parobj)$
\item $\satisfy{s}{\sigma}{\MDP} \neg \qual(\parobj)$ iff $\satisfy{s}{\sigma}{\MDP} \posi(\bar\parobj)$
\item $\satisfy{s}{\sigma}{\MDP} \neg \posi(\parobj)$ iff $\satisfy{s}{\sigma}{\MDP} \qual(\bar\parobj)$
\item $\satisfy{s}{\sigma}{\MDP} \neg (\varphi\wedge \psi)$ iff $\satisfy{s}{\sigma}{\MDP} \neg \varphi \vee \neg \psi$
\item $\satisfy{s}{\sigma}{\MDP} \neg (\varphi\vee \psi)$ iff $\satisfy{s}{\sigma}{\MDP} \neg \varphi \wedge \neg \psi$
\end{itemize}

Given a formula $\varphi$, we will use $\satisfies{s}{\MDP}\varphi$ to denote $\exists \sigma:\ \satisfy{s}{\sigma}{\MDP}\varphi$ .
Given a formula $\varphi$, let $\llbracket\varphi\rrbracket\eqdef\{s\in S\ |\ \satisfies{s}{\MDP}\varphi\}$.
We note that satisfying surely a parity condition is the same as winning the parity objective in the two-player game corresponding to the MDP $\MDP$. 
Satisfying existentially is the same as finding a satisfying path in the one-player game associated to this MDP.


Given an MDP $\MDP=(S,E, Act, \prob)$, a state $s \in S$, and a \QPL formula $\varphi$, the \emph{{\sf QPL}-synthesis problem} is to find a strategy $\sigma$ such that $\satisfy{s}{\sigma}{\MDP}\varphi$. The \emph{{\sf QPL}-realizability problem} is to decide whether $\satisfies{s}{\MDP}\varphi$. In what follows, we focus on the {\sf QPL}-realizability problem, but the algorithms we provide give all the elements necessary to build a winning strategy when such a strategy exists, and so they can be easily extended to solve {\sf QPL}-synthesis.

\begin{remark} \label{rem:negation}
We define the negation of the formulas using classical De Morgan's laws.
We note that the logic \QPL is closed under negation.
It is also important to note that in this semantics, $\satisfies{s}{\MDP} \neg \varphi$ is not equivalent to $s\nmodels_{\MDP} \varphi$. 
Indeed, $\satisfies{s}{\MDP} \neg \varphi$ implies that there exists a strategy $\sigma$ such that $\satisfy{s}{\sigma}{\MDP} \neg \varphi$ whereas $s \nmodels_{\MDP} \varphi$ implies that for all strategies $\sigma$, we have $\satisfy{s}{\sigma}{\MDP} \neg \varphi$.

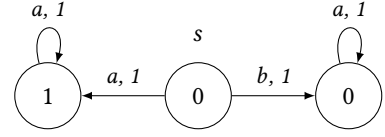
\begin{figure}[t]
\centering
\vspace{5pt}
\scalebox{1}{
	\begin{tikzpicture}
		\node () at (0,.8) {$s$} ;
		\node[player] (s) at (0,0) {$0$} ;
		\node[player] (s1) at (-2,0) {$1$} ;
		\node[player] (s2) at (2,0) {$0$} ;
		
		\path[-latex]  (s) edge node[above] {\small $a$, 1} (s1)
			(s) edge node[above] {\small $b$, 1} (s2)
			(s1) edge [loop above] node[above] {\small $a$, 1} (s1)
			(s2) edge [loop above] node[above] {\small $a$, 1} (s2)
		;
	\end{tikzpicture}
}
\caption{\label{fig:lemma}MDP in the proof of Remark \ref{rem:negation}}
\end{figure}

\subsection{Proof of  Remark \ref{rem:negation}}
\begin{proof}
Note we have that $s \nmodels_{\MDP} \varphi$ means that for all strategies $\sigma$, we have that $s, \sigma \nmodels_{\MDP} \varphi$, that is $\satisfy{s}{\sigma}{\MDP} \neg \varphi$. Hence $\satisfies{s}{\MDP} \neg \varphi$.

We now show that the converse does not hold using the example in Figure \ref{fig:lemma}.
On each edge, the name of the action and the corresponding probability is written and inside parentheses in each state, we write the value that the parity function assigned to the state.
Consider the formula $\varphi = A(\parobj)$.
Now $\satisfies{s}{\MDP} \varphi$ since there is a strategy (choosing action $b$ from s) such that  $\satisfy{s}{\sigma}{\MDP} \varphi$.
Also $\satisfies{s}{\MDP} \neg \varphi$, where $\neg \varphi$ is the formula $E(\bar\parobj)$ since there is a strategy (choosing action $a$ from s) such that  $\satisfy{s}{\sigma}{\MDP} \neg \varphi$.

Now for $s \nmodels_{\MDP} \varphi$ to be true, we need that for all strategies $\sigma$, we have that $\satisfy{s}{\sigma}{\MDP} \neg \varphi$, that is, for all strategies $\sigma$, we need that $\satisfy{s}{\sigma}{\MDP} E(\bar{\parobj})$ which is clearly not true since as we show above that there indeed exists a strategy $\sigma$ such that $\satisfy{s}{\sigma}{\MDP} \varphi$, and hence the result.
\end{proof}

Similarly, even though $\satisfies{s}{\MDP} \varphi\vee \psi$ is equivalent to $\satisfies{s}{\MDP} \varphi $ or $\satisfies{s}{\MDP}\psi$, we note that $\satisfies{s}{\MDP} \varphi\wedge \psi$ is not the same as $ \satisfies{s}{\MDP} \varphi$ and $\satisfies{s}{\MDP}\psi$. 
Also using De Morgan's laws, the negation can be applied only to the parity objectives to get their duals, for example, $\neg(\univ(\parobj_1) \wedge \univ(\parobj_2))$ is the same as $\exis(\bar{\parobj}_1) \vee \exis(\bar{\parobj}_2)$.
We can indeed define a negation free normal form that can be obtained by taking the DNF and pushing negations down to the atoms.
In the rest of the paper, we thus restrict our attention to the subclass of the logic that is free of negation and disjunction.
\end{remark}


\paragraph{Additional objectives}
We define the following additional objectives, introduced for technical reasons, even though they are not part of $\QPLnospace$\footnote{Given an MDP $\MDP$, we could have expressed these conditions using $\QPLnospace$, but this would involve constructing a larger and more complex MDP from the given MDP $\MDP$.}. 
A parity condition is called a \emph{B\"uchi condition} if it is defined as $\fun{\parobj}{S}{\{1,2\}}$. 
A path $\pi \in S^{\omega}$ satisfies a \emph{conjunction of parity conditions} $\bigwedge_{x\in X}\parobj_x$ if for all $x\in X$ we have $\max\{\parobj_x(s)\:|\: s\in \inf(\pi)\}$ is even. 
It is not hard to see that conjunctions of parity conditions can be expressed as \emph{Streett conditions}
.
A path $\pi$ satisfies a \emph{reachability condition} towards a set $R\subseteq S$, denoted $\event R$, if there exists $i\in\mathbb{N}_0$ such that $\pi(i)\in R$.

Given an MDP, we can define, in the same way as previously the sure, almost-sure, non-zero, and existential objectives for these conditions, as well as conjunctions and disjunctions of these objectives.

\paragraph{End-components}
An \emph{end-component} (EC, for short) $M = (C,A)$ such that $C \subseteq S$, and $A:C \to 2^{Act}$
is a \emph{sub-MDP} of $\Gamma$ (for all $s \in C, \text{ we have } A(s) \subseteq Act(s)$, and for all $a \in A(s)$, we have $\post(s,a) \subseteq C$) that is strongly connected. 
We denote by $EC(\MDP)$ the set of end-components of MDP $\MDP$.
By abuse of notation, in the sequel, we often refer to a set $C \subseteq S$ to be an end-component when there exists a function $A: C \to 2^{Act}$ such that $(C,A)$ is an end-component.
A \emph{maximal EC} (MEC, for short) is an EC that is not included in any other EC. 
For every strategy in an MDP, the set of states seen infinitely often during a path form an end-component with probability~$1$.
Formally:
\begin{proposition}[\cite{BK08}]
\label{prop:long_run}
Given an MDP $\MDP$, for all strategies $\sigma$, 
for all states $s\in S$, we have $\prob_{\MDPtoMC{\MDP}{\sigma}}(\{\pi \in \paths^{\MDPtoMC{\Gamma}{\sigma}}(s) \mid   \inf(\pi)\in EC(\MDP)\})= 1$.
\end{proposition}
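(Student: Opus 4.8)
The plan is to prove Proposition~\ref{prop:long_run}, which states that under any strategy $\sigma$, from any state $s$, the set $\inf(\pi)$ of states visited infinitely often is (the state set of) an end-component of $\MDP$ with probability $1$. Since this is a classical result (attributed to~\cite{BK08}), I will sketch the standard argument, working in the Markov chain $\MDPtoMC{\MDP}{\sigma}$ and then projecting back to $\MDP$. The key point is that although $\MDPtoMC{\MDP}{\sigma}$ may have infinitely many states (when $\sigma$ uses infinite memory), the projection to the finite MDP still yields an end-component; one must be slightly careful about this.

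First I would recall that in any Markov chain $\markovChain$ (finite or not), a standard $0$--$1$ argument shows that almost every path eventually stays forever inside a single bottom strongly connected component in the following ``effective'' sense: with probability $1$, the path $\pi$ has the property that for every state $t$ with $t\in\inf(\pi)$ and every successor $t'$ of $t$ in $\markovChain$, we also have $t'\in\inf(\pi)$. Indeed, fix a state $t$ and a successor $t'$ with $\prob(t,t')=p>0$. By the (second) Borel--Cantelli lemma, conditioned on $t$ being visited infinitely often, the edge $(t,t')$ is almost surely taken infinitely often, hence $t'$ is visited infinitely often. Taking a union bound over the countably many pairs $(t,t')$ gives: almost surely, $\inf(\pi)$ is closed under taking successors in $\markovChain$. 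Moreover $\inf(\pi)$ is always nonempty is false in general for infinite chains, but here each path is generated starting from $s$ and, after projecting, lives in the finite MDP, so the projected set is nonempty; I address this in the next step.

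Next I would push this through the projection $\proj{S}$. Let $\pi$ be a path in $\MDPtoMC{\MDP}{\sigma}$ with state space $Q\times S$, and let $\pi_S = \proj{S}(\pi)$ be the corresponding path in $\MDP$. Since $S$ is finite, $\inf(\pi_S)$ is a nonempty subset $C\subseteq S$. Define $A\colon C \to 2^{Act}$ by letting $A(s')$ be the set of actions $a\in Act(s')$ such that the edge ``$s'$ with action $a$'' is taken infinitely often along $\pi$ (equivalently: there is some mode $q$ with $(q,s')\in\inf(\pi)$ and $a\in\supp(act(q,s'))$ and the transition is used infinitely often). On the event of probability $1$ from the previous step (transported to $\MDPtoMC{\MDP}{\sigma}$): if $(q,s')\in\inf(\pi)$ and the strategy plays $a\in\supp(act(q,s'))$ infinitely often from $(q,s')$, then every $\markovChain$-successor of $(q,s')$ is in $\inf(\pi)$, so for every $s''\in\post(s',a)$ we have $s''\in C$; thus $\post(s',a)\subseteq C$ for every $a\in A(s')$, giving the sub-MDP property. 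For strong connectedness of $(C,A)$: from $(q_1,s_1')\in\inf(\pi)$ the path almost surely returns to $(q_1,s_1')$ infinitely often and in between visits every other state of $\inf(\pi)$ infinitely often (again a standard recurrence fact for the tail behaviour), and each step of $\pi$ between consecutive elements of $\inf(\pi)$ uses an edge counted in $A$; projecting this finite segment down to $\MDP$ exhibits a path inside $(C,A)$ from $s_1'$ to any target state of $C$. Hence $(C,A)\in EC(\MDP)$, i.e., $\inf(\pi_S)\in EC(\MDP)$, with probability $1$.

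The main obstacle I anticipate is handling the possibly infinite state space $Q\times S$ of $\MDPtoMC{\MDP}{\sigma}$ cleanly: the convenient finite-Markov-chain fact ``almost every path ends in a BSCC'' is not literally available, so the argument has to be phrased via Borel--Cantelli on edges plus a union bound (valid since $E(q,s)$ is finite and $Q\times S$ is countable), and one has to argue nonemptiness and strong connectedness after projecting to the finite MDP rather than in the chain itself. Once this is set up, the rest is routine. Since the statement is quoted from~\cite{BK08}, it would also be legitimate to simply cite it; but the sketch above is the argument I would give if a self-contained proof were wanted.
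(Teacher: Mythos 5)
The paper gives no proof of this proposition --- it is imported verbatim from~\cite{BK08} --- so there is nothing in-paper to compare against; your decision to sketch the standard argument (Borel--Cantelli on transitions, then projection to the finite MDP) is the right one, and your first paragraph is sound for countable chains.

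The second paragraph, however, has a genuine gap, and it sits exactly in the case this paper cares most about: infinite-memory strategies. If $\sigma$ uses an unbounded counter (as the strategies of Definition~\ref{def:strat_ugec} do, playing $\sigma_2$ for $n_i$ steps with $n_i$ increasing), then along a typical path of $\MDPtoMC{\MDP}{\sigma}$ no state $(q,s')$ of the product chain is ever revisited, so $\inf(\pi)=\varnothing$ \emph{in the chain}. Your parenthetical ``equivalently'' in the definition of $A(s')$ is then false (an action can be taken infinitely often from $s'$ without any single mode $(q,s')$ being recurrent), and every subsequent step that begins ``if $(q,s')\in\inf(\pi)$\dots'' is vacuous: you never establish $\post(s',a)\subseteq C$ for any action, nor the strong connectedness of $(C,A)$. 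You flag the infinite state space as the anticipated obstacle, but the written argument does not actually circumvent it. The repair is to run Borel--Cantelli at the level of the finitely many MDP-level objects rather than chain states: for each triple $(s',a,s'')$ with $s''\in\post(s',a)$, the events ``at the $k$-th occurrence of the pair $(s',a)$ along the path, the next MDP-state is $s''$'' are conditionally independent with probability $\prob(s',a,s'')>0$, so if $(s',a)$ occurs infinitely often then almost surely $s''\in\inf(\pi_S)$; pigeonhole on the finite set $Act(s')$ gives $A(s')\neq\varnothing$ for every $s'\in\inf(\pi_S)$; and a tail of the path that uses only infinitely-occurring pairs witnesses strong connectedness of $(C,A)$. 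With that replacement the proof is correct; as written it only covers strategies whose product chain has recurrent states.
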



\section{\typeone end-components}
\label{sec:typeone}
In this section, we define \typeone ECs that are a generalization of super-good end components as defined in \cite{AKV16}.

Lemma~\ref{lem:alg-typeone} is the main result of the section, where we state that we can compute the set of maximal \typeone ECs. This will be used later, to compute the set of maximal ECs of other kinds, namely \typetwo and \typethree that are used in Sections~\ref{sec:typetwo} and~\ref{sec:typethree} to solve satisfiability of formulas of the form $\limwedgeone{\induniv\in{\setuniv}}\univ(\parobj_{\induniv})\wedge \limwedgeone{\indqual\in{\setqual}}\qual(\parobj_{\indqual})$ and $\combi{\setuniv}{\setqual}\wedge \posi(\parobj_{\indposi})$ respectively.
Lemmas~\ref{thm:reach-Almagor},~\ref{lem:implies_is_univ} and~\ref{lem:split-typeone} are technical lemmas that are required in the proof of Lemma~\ref{lem:alg-typeone}.
The proof of Lemma~\ref{thm:reach-Almagor} uses the notion of Street-B\"uchi games.

Given two sets of parity conditions $\{\parobj_{\induniv}\pipe \induniv\in\setuniv\}$ and $\{\parobj_{\indqual}\pipe \indqual\in\setqual\}$, an end-component $C$ of $\MDP$ is \typeone($\setuniv,\setqual$)
if the following property holds:

\begin{itemize}
	\item $\mathbf{(I_1)}$ $\forall\, s\in C,\, \satisfies{s}{\MDP{\downharpoonright C}} \limwedgeone{\induniv\in\setuniv}\univ(\parobj_{\induniv}) \wedge \limwedgeone{\indqual\in\setqual} \qual(\event C^{\max}_{\even}(\parobj_{\indqual}))$, where	
	$$
	C^{\max}_{\even}(\parobj_{\indqual}) = \big\lbrace s\in C \mid (\parobj_{\indqual}(s) \text{ is even}) \wedge \big(\forall\, s'\in C,\,$$ $$\parobj_{\indqual}(s') \text{ is odd } \implies \parobj_{\indqual}(s') < \parobj_{\indqual}(s)\big)\big\rbrace$$
	contains the states with even priorities that are larger than any odd priority in $C$ (this set can be empty for arbitrary ECs but needs to be non-empty for $\\ $ {\sf Type\:{\rm I}}($\setuniv,\setqual$) ECs);
\end{itemize}

We write \typeone($\setuniv,\setqual$) EC as \typeone EC when the parity sets are clear from the context.
We introduce the following notations: $\sgec(\MDP,\setuniv,\setqual)$ is the set of all \typeone($\setuniv,\setqual$) ECs, and $\mathcal{T}_{\mathbf{I},\MDP,\setuniv,\setqual} = {\displaystyle \cup_{U \in \sgec(\MDP,\setuniv,\setqual)} U}$ is the set of states belonging to some \typeone EC. 
Given an EC $C$, we say a state $s\in C$ is of \typeone for $C$ if $C$ is $\typeone$.
In this paper, we only consider 
\typeone($\setuniv,\setqual$) ECs where $\setqual$ is either $\setuniv$ or $\{\induniv\}$.

Intuitively, within a \typeone EC, 
there is a strategy to visit all $C^{\max}_{\even}(\parobj_{\indqual})$ for all $\indqual\in\setqual$ with probability $1$ while guaranteeing $\univ(\parobj_{\induniv})$ for all $\induniv\in\setuniv$. 
We note that this property must hold while staying inside the end-component $C$.
This notion strengthens the notion of \textit{super-good end-component} (SGEC in \cite{AKV16}), that are defined for some parity condition $\parobj_a$, and are \typeone($\{a\},\{a\}$) ECs.
In the case of SGEC, it has been shown in~\cite{AKV16} that the existence of a strategy to enforce condition $\mathbf{I_1}$ in $\MDP$ can be reduced to checking the existence of a winning strategy in a game, constructed in polynomial time from $\MDP$, with a conjunction of one parity objective and one B\"uchi objective.
The existence of a winning strategy in such a game is in ${\sf NP} \cap {\sf coNP}$.
The structure of this game is different from the one of the original MDP, as its size is polynomially increased to transform the qualitative reachability condition into a sure B\"uchi. In the sequel, we generalize this result to multiple parity conditions. We illustrate the reduction by the following example.

\begin{example}
Consider the example in Figure~\ref{fig:SGEC} where an MDP (on the left side of the figure) that is a \typeone($\{a\},\{a\}$) EC for a parity condition $\parobj_a$ is transformed into a game (on the right side of the figure) that satisfies $\univ(\parobj_a) \wedge \univ(\Box\event R)$. In order to convert the $\qual(\event C^{\max}_{\even}(\parobj_{a}))$ condition of $\mathbf{(I_1)}$ into the $\univ(\Box\event R)$ condition, we add two states to the game: The top-most state and the bottom-most state.
In the MDP on the left of Figure~\ref{fig:GEC}, a strategy that alternates between playing action $a$ and playing action $b$ at state $s$ indeed satisfies the condition $\mathbf{(I_1)}$.
		



				
				
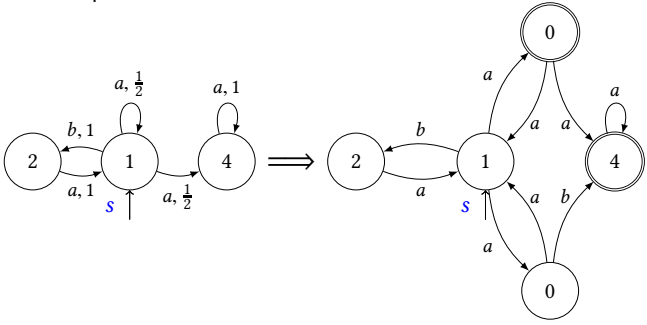
\begin{figure}[t]
\centering
\vspace{-15pt}
\hspace{-15pt}
\centering
\scalebox{0.86}{
	\begin{tikzpicture}
		
		\node[player,initial,initial below,initial text={}] (s2) at (2.5,0) {1} ;
		\node[player] (s3) at (1,0) {2} ;
		\node[player] (s4) at (4,0) {4} ;
		
		\node (l) at (5,0) {{\LARGE $\Longrightarrow$}} ;
		
		\node[player,initial,initial below,initial text={}] (q2) at (8,0) {1} ;
		\node[player] (q3) at (6,0) {2} ;
		\node[player] (q4a) at (9,-2) {0} ;
		\node[player,double] (q4b) at (9,2) {0} ;
		\node[player,double] (q5) at (10,0) {4} ;


		\node () at (2.2,-0.7) {\large {\color{blue}$s$}} ;
		\node () at (7.7,-0.7) {\large {\color{blue}$s$}} ;

 \path[-latex]  
				(s2) edge[bend right=20] node[above] {\small $b,1$} (s3)
				(s3) edge[bend right=20] node[below] {\small $a,1$} (s2)
	    		(s2) edge [loop above] node[above] {\small $a, \frac{1}{2}$} (s2)
				(s2) edge[bend right=20] node[below] {\small $a, \frac{1}{2}$} (s4)
	    		(s4) edge [loop above] node[above] {\small $a, 1$} (s4)

				(q2) edge[bend right=20] node[above] {\small $b$} (q3)
				(q3) edge[bend right=20] node[below] {\small $a$} (q2)
	    		(q2) edge[bend right=20] node[left,yshift=-.2cm] {\small $a$} (q4a)
				(q2) edge[bend left=20] node[left,yshift=.2cm] {\small $a$} (q4b)
	    		(q4a) edge[bend right=20] node[above,yshift=.1cm] {\small $a$} (q2)
				(q4b) edge[bend left=20] node[below,yshift=-.1cm] {\small $a$} (q2)
	    		(q4a) edge[bend left=20] node[above,yshift=.1cm] {\small $b$} (q5)
				(q4b) edge[bend right=20] node[below,yshift=-.1cm] {\small $a$} (q5)
	    		(q5) edge [loop above] node[above] {\small $a$} (q5)
				
				
		;
	\end{tikzpicture}
}
\caption{\label{fig:SGEC}An example of a \typeone EC at left, and a game associated to it at right.}
\end{figure}
Now consider the game on the right side of Figure~\ref{fig:SGEC}.
The top-most state and the right-most state shown in double circles form the set $R$.
A strategy 
to satisfy $\univ(\parobj_a) \wedge \univ(\Box\event R)$ 
is as follows: When in state $s$, alternate between playing action $a$ and playing action $b$. When in the bottom-most state, play action $b$.
The $\univ(\parobj_a)$ atom is clearly satisfied. The sure B\"uchi $\univ(\Box\event R)$ holds for the following reason. When action $a$ is chosen in state $s$, if player $2$ chooses to go to the bottom-most state, then from this state player $1$ plays action $b$, and reaches the right-most state that is absorbing and in $R$. If player $2$ chooses to go to the top-most state always, as this state is in $R$, the B\"uchi condition is again satisfied.

We now state the first step of the reduction.
The result below relies on a reduction to a two-player game $G^{\MDP}_{R,\{\parobj_{\induniv}\pipe \induniv\in\setuniv\}}$ with a conjunction of one B\"uchi and multiple parity conditions, that we call a Streett-B\"uchi game.
The approach to the proof is similar to Lemma 3 of~\cite{AKV16}, that studies the case where $\setuniv$ is a singleton. 
\end{example}
\begin{lemma}
\label{thm:reach-Almagor}
Given an MDP $\MDP=( S,E,Act,\prob)$, a state $s_0 \in S$, a set of parity conditions $\{\parobj_{\induniv}\pipe \induniv\in\setuniv\}$, and a target set $R \subseteq S$, it can be decided if $\satisfies{s_0}{\MDP} \limwedgeone{\induniv\in\setuniv}\univ(\neg (\event R)\rightarrow \parobj_a) \wedge \qual(\event R) $. If the answer is $\yes$, then there exists a finite-memory witness strategy. This decision problem is ${\sf coNP}$ complete.
\end{lemma}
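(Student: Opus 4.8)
The plan is to reduce the synthesis question to the existence of a winning strategy in a two-player Streett--B\"uchi game $G^{\MDP}_{R,\{\parobj_{\induniv}\pipe\induniv\in\setuniv\}}$, as announced before the statement, and then analyze the complexity of solving that game. First I would build the game from $\MDP$: treat $\MDP$ as the turn-based game $G_\MDP$ (Player~$1$ picks an action, Player~$2$ resolves the stochastic choice), and make $R$ absorbing by redirecting every edge leaving a state of $R$ into a fresh self-looping sink that is added to the B\"uchi target. On this arena, let the B\"uchi objective require infinitely many visits to $R$ (equivalently, reaching $R$, since $R$ is now a trap), and keep each $\parobj_{\induniv}$ as a parity objective, but conditioned so that it is vacuously satisfied once $R$ is reached — exactly the encoding of $\univ(\neg(\event R)\rightarrow\parobj_\induniv)$. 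The key observation, as in Lemma~3 of~\cite{AKV16}, is a transfer between randomized strategies on the MDP and pure strategies on the game: a path in $\MDPtoMC{\MDP}{\sigma}$ reaches $R$ with probability~$1$ iff Player~$2$ cannot avoid $R$ against the corresponding (pure, finite-memory) Player~$1$ strategy, because only the supports of $\prob$ matter for a qualitative reachability/avoidance question, and avoiding $R$ with positive probability would give Player~$2$ an avoidance play. For the parity part, every path of $\MDPtoMC{\MDP}{\sigma}$ that does not reach $R$ is exactly a play of the game consistent with $\sigma$, so $\satisfy{s}{\sigma}{\MDP}\univ(\neg(\event R)\rightarrow\parobj_\induniv)$ holds iff all such plays satisfy $\parobj_\induniv$. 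Combining both directions gives: $\satisfies{s_0}{\MDP}\bigwedge_\induniv\univ(\neg(\event R)\rightarrow\parobj_\induniv)\wedge\qual(\event R)$ iff Player~$1$ wins $G^{\MDP}_{R,\{\parobj_\induniv\}}$ from $s_0$ for the conjunction of the B\"uchi and the (conditioned) parity objectives.

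Next I would address memory and complexity. Player~$1$ winning objectives that are conjunctions of a B\"uchi condition with parity conditions form a Streett-type (indeed, a generalized Streett/Rabin-chain conjunction) winning condition, for which Player~$1$ winning strategies require only finite memory — this yields the claimed finite-memory witness, and pulling the strategy back through the reduction (which only relabels states and adds a sink) gives a finite-memory witness strategy in the original $\MDP$. For the complexity: deciding the winner of a game with a conjunction of B\"uchi and several parity objectives is, for the \emph{opponent}, a disjunction of co-B\"uchi with several ``dual parity'' objectives, i.e. a Rabin-type condition for Player~$2$; Player~$2$ (Rabin) winning can be certified by a memoryless strategy, which can be guessed and verified in polynomial time, placing the complement in ${\sf NP}$, hence the problem in ${\sf coNP}$. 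For ${\sf coNP}$-hardness I would reduce from the complement of a suitable ${\sf NP}$-complete problem — concretely, from Player~$2$ (Rabin) game solving, or directly encode validity-style instances — showing that even with $\setuniv$ of size one extra parity condition plus the reachability constraint suffices; since the setting already subsumes multi-parity games for Player~$2$, ${\sf coNP}$-hardness follows from the known ${\sf coNP}$-hardness/${\sf NP}$-hardness of Rabin/Streett game solving.

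The main obstacle I expect is getting the qualitative-to-sure transfer for the reachability component fully rigorous in the presence of the parity conditions: one must argue that restricting attention to the supports of $\prob$ (so that $\qual(\event R)$ becomes ``Player~$2$ cannot avoid $R$'') does not interfere with the parity requirement on the non-$R$ part, and conversely that a pure winning Player~$1$ strategy in the game can be interpreted as a (deterministic) strategy in $\MDP$ that still achieves $\qual(\event R)$ — this is where the absorbing-$R$ construction and a careful case split on whether a play reaches $R$ are needed. A secondary subtlety is the precise shape of the conjunction B\"uchi~$\wedge$~(multi-)parity and citing the right determinacy/finite-memory and complexity results for it (essentially Streett objectives for one player); I would handle this by noting a conjunction of parity conditions is a Streett condition (as already observed in the preliminaries for ``conjunction of parity conditions'') and a B\"uchi condition is a trivial Streett pair, so the whole condition is Streett, for which Player~$1$ memory-finiteness and the ${\sf NP}$ bound for Player~$2$ are standard.
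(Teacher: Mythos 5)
There is a genuine gap at the heart of your reduction: you encode $\qual(\event R)$ as ``Player~$2$, resolving every stochastic choice adversarially, cannot avoid $R$,'' i.e.\ as \emph{sure} reachability of $R$ in the support game. But almost-sure reachability is strictly weaker than sure reachability. Take a state $s\notin R$ with a single action that moves to $R$ with probability $\tfrac12$ and back to $s$ with probability $\tfrac12$: the (only) strategy satisfies $\qual(\event R)$, yet Player~$2$ has an avoidance play (always pick the self-loop), so your game is lost. The step ``avoiding $R$ with positive probability would give Player~$2$ an avoidance play'' has a false converse --- an avoidance play can have measure zero --- so your ``iff'' fails in the direction needed for completeness, and the reduction would reject yes-instances. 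This is exactly the difficulty you flag at the end as ``the main obstacle,'' but the resolution you sketch (supports suffice for the qualitative reachability question) does not work.

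The paper's construction is designed precisely to get around this. For each $s\in S\setminus R$ and $a\in Act(s)$ it introduces two intermediate states: $(s,a,0)$, where Player~$2$ picks the successor adversarially \emph{and which is itself a B\"uchi state}, and $(s,a,1)$, where Player~$1$ picks the successor. Player~$2$ chooses which branch to enter. If Player~$2$ plays adversarially ($(s,a,0)$) infinitely often, the B\"uchi condition is satisfied for free; if Player~$2$ eventually always defers ($(s,a,1)$), Player~$1$ steers along a shortest path into $R$. This correctly captures that, under a finite-memory strategy, $\qual(\event R)$ is equivalent to ``$R$ remains reachable with positive probability after every finite prefix,'' with the uniform lower bound on the $k$-step reaching probability closing the probabilistic argument. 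Your remaining ingredients --- conjunction of B\"uchi and parities is a Streett condition, finite-memory winning strategies for Player~$1$, ${\sf coNP}$ membership via memoryless Rabin strategies for the opponent, hardness from Streett games --- match the paper and are fine, but they rest on a game that, as you define it, does not characterize the MDP objective.
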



\begin{proof}
To establish this lemma, given an MDP $\MDP$, a set of parity conditions $\{\parobj_{\induniv}\pipe \induniv\in\setuniv\}$, and a target set $R \subseteq S$, we construct a game $G^{\MDP}_{R,\{\parobj_{\induniv}\pipe \induniv\in\setuniv\}}$ with a conjunction of a B\"uchi and multiple parity conditions. 
We call this game a Streett-B\"uchi game, and its formal definition is as follows:
  \begin{itemize}
  	\item The state space of $G^{\MDP}_{R,\{\parobj_{\induniv}\pipe \induniv\in\setuniv\}}$ is a copy of the state space of $\MDP$ where the states in $S \setminus R$ have been copied $2\cdot \abs{S} \cdot \abs{R} \cdot \abs{Act}$ times as shown in Figure~\ref{fig:ggt}. $S'$ is defined as $S'=S \cup ((S \setminus R) \times Act \times \{0,1 \})$. 
  	\item We consider an arbitrary set of actions $Act'$ such that $Act\subseteq Act'$ and for all $s\in S\setminus R$ and $a\in Act$, we have $\abs{Act'}\geq \abs{\{s'\pipe (s,a,s')\in E\}}$. This second condition is to make sure that in the states labelled $(s,a,1)$, we always have enough actions available to choose the successor state.
	\item The set $E'$ of edges is defined according to Figure~\ref{fig:ggt} with the additional property that states in $R$ are made absorbing. That is, $E'$ is the union of the following sets:
	\begin{itemize}
		\item $\{ (s,a,(s,a,0)) \mid s \in S \setminus R,\, a\in Act(s) ) \}$,
		\item $\{ (s,a,(s,a,1)) \mid s \in S \setminus R,\, a\in Act(s) ) \}$,
		\item $\{ ((s,a,0),a,s') \mid (s,a,s') \in E \cap ((S\setminus R) \times a\times S) \}$,
		\item $\{ ((s,a,1),a_{s,a,s'},s') \mid (s,a,s') \in E \cap ((S\setminus R) \times Act \times S),\, \text{and for all }\\ {(s,a,s'),(s,a,s'')\in S} \ \text{with } s'\neq s'',\,\text{we have } a_{s,a,s'}\neq a_{s,a,s''} \}$,
		\item $\{ (s,a,s) \mid s \in R\text{ for one }a\in Act \}$.
	\end{itemize}
	
	\item The parity conditions $\parobj'_a$ are defined as $\parobj'_a(s)=\parobj_a(s)$ for all $s \in S \setminus R$, $\parobj'_a((s,a,i))=0$ for all $s \in S \setminus R$, $a\in Act$, $i\in\{0,1\}$, and $\parobj'_a(s)=0$ for all $s \in R$. We have to change the priority of states of $R$, because as they are made absorbing they may be losing for the parity condition otherwise, as on Figure~\ref{fig:rv}.

	\item The set of B\"uchi states is $B=R \cup \{ (s,a,0) \mid s\in S\setminus R,\, a\in Act \}$.
\end{itemize}


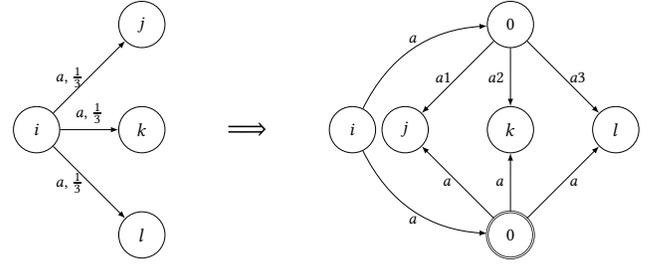
\begin{figure}[t]
\centering
\vspace{5pt}
\scalebox{0.7}{
	\begin{tikzpicture}
		\node[player] (s0) at (0,0) {$i$} ;
		\node[player] (q1) at (2,2) {$j$} ;
		\node[player] (q2) at (2,0) {$k$} ;
		\node[player] (q3) at (2,-2) {$l$} ;
		\node[player] (s1) at (6,0) {$i$} ;
		\node[player] (s2) at (9,2) {$0$} ;
		\node[player,double] (s3) at (9,-2) {$0$} ;
		\node[player] (q4) at (7,0) {$j$} ;
		\node[player] (q5) at (9,0) {$k$} ;
		\node[player] (q6) at (11,0) {$l$} ;
		\node (l) at (4,0) {{\LARGE $\Longrightarrow$}} ;
		
		\path[-latex]  
			(s0) edge node[left] {\small $a$, $\frac{1}{3}$} (q1)
			(s0) edge node[above] {\small $a$, $\frac{1}{3}$} (q2)
			(s0) edge node[left] {\small $a$, $\frac{1}{3}$} (q3)
			
			(s1) edge[bend left=30] node[above] {\small $a$} (s2)
			(s1) edge[bend right=30] node[below] {\small $a$} (s3)
			
			(s3) edge node[left] {\small $a$} (q4)
			(s3) edge node[left] {\small $a$} (q5)
			(s3) edge node[right] {\small $a$} (q6)
			
			(s2) edge node[left] {\small $a1$} (q4)
			(s2) edge node[left] {\small $a2$} (q5)
			(s2) edge node[right] {\small $a3$} (q6)
		;
	\end{tikzpicture}
}
\caption{Modifying edges associated to action $a$ on a state $s$ with priority $i$ by adding new edges and new states. The bottom-most state in the figure at right corresponds to $(s,a,0)$ and is B\"uchi winning. The top-most state in the figure corresponds to $(s,a,1)$.}
\label{fig:ggt}
\end{figure}


\begin{figure}[tbh]
\centering
\scalebox{0.8}{\begin{tikzpicture}[every node/.style={font=\small,inner sep=1pt}]
\draw (-1.5,0) node[rond] (s0) {$0$};
\draw (0,0) node[rond] (s1) {$1,R$};
\draw (1.5,0) node[rond] (s2) {$0$};
\draw (4.5,0) node[rond] (s3) {$0$};
\draw (6,0) node[rond,double,thick] (s4) {$0,R$};
\draw (7.5,0) node[rond] (s5) {$0$};
\draw (3,0) node (l) {{\LARGE $\Longrightarrow$}};
\draw[-latex] (s0) to (s1);
\draw[-latex] (s1) to (s2);
\draw[-latex] (s3) to (s4);
\draw (s4) edge[-latex,out=60,in=120,looseness=4,distance=1cm] (s4);
\end{tikzpicture}}
\caption{We change the parity value of the target state, since they are made absorbing, otherwise there would be no winning strategy in the parity-B\"uchi game while there is a parity winning strategy ensuring to reach $R$ in the initial MDP.}
\label{fig:rv}
\end{figure}
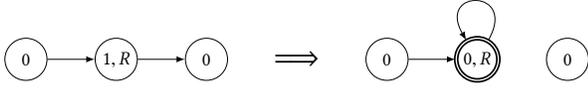

It is established in~\cite{AKV16} that for one parity condition $p$ we have $s_0 \models \univ(\neg (\Box\event R)\rightarrow p) \wedge \qual(\Box\event R)$ if and only if there exists a winning strategy in the B\"uchi parity game $G^{\MDP}_{R,\{\parobj_{\induniv}\pipe \induniv\in\setuniv\}}$ from state $s_0$. The proof of the result can be generalized to our setting: $\satisfies{s_0}{\MDP} \limwedgeone{\induniv\in\setuniv}\univ(\neg (\event R)\rightarrow \parobj_a) \wedge \qual(\event R) $ iff   $\satisfies{s_0}{G^{\MDP}_{R,\{\parobj_{\induniv}\pipe \induniv\in\setuniv\}}}\limwedgeone{\induniv\in\setuniv}\univ(\parobj_a)\wedge \univ(\Box\event B)$ where $B$ has been defined above. Finding a strategy for the second objective above is solving a Streett game, which is classically co-NP complete.
We do not give the full proof of this result, as there would be no change from~\cite{AKV16}, but give a quick outline. Note that corresponding to every path in MDP $\MDP$, there exists a path in the game $G^{\MDP}_{R,\{\parobj_{\induniv}\pipe \induniv\in\setuniv\}}$, and vice versa. 

Given a finite-memory strategy $\sigma$ for the Streett-B\"uchi game, we obtain a strategy $\sigma'$ for the MDP by playing exactly as in the game. 
We have that every path in the game under strategy $\sigma$ either reaches $R$ or satisfies for all $\induniv\in\setuniv$ the condition $\parobj_{\induniv}$.
Since for every path in the MDP under strategy $\sigma'$; there exists a corresponding path in the game under strategy $\sigma$, every path in the MDP under strategy $\sigma'$  also either reaches $R$ or satisfies for all $\induniv\in\setuniv$ the condition $\parobj_{\induniv}$. 
Since $\sigma'$ is a finite-memory strategy, in every state of the MDP there is a non-zero probability of reaching $R$ in some fixed $k$ steps.
This probability is also always greater than some lower bound, hence the probability of reaching $R$ is $1$. 

Reciprocally, given a winning strategy $\sigma$ in an MDP $\MDP$, there exists a winning strategy $\sigma'$ in  $G^{\MDP}_{R,\{\parobj_{\induniv}\pipe \induniv\in\setuniv\}}$. 
The strategy $\sigma'$ consists in behaving like $\sigma$ in $\MDP$, and in states $(s,a,1)$, where $\sigma$ is not defined,  strategy $\sigma'$ takes the action which corresponds to a probabilistic transition in $\MDP$ that would lead to the shortest possible path to $R$.
In the MDP under strategy $\sigma$ every path that does not reach $R$ satisfies for all  $\induniv\in\setuniv$ the condition $\parobj_{\induniv}$.
Since corresponding to every path in the game under strategy $\sigma'$ there exists a path in the MDP under strategy $\sigma$, we have that in the game under strategy $\sigma'$ every path that does not reach $R$ satisfies for all  $\induniv\in\setuniv$ the condition $\parobj_{\induniv}$. 
Now, for condition $B$, there are two possibilities. 
The first one is that some $(s,a,0)$ is visited infinitely often, and the condition $B$ is satisfied. 
The second one is that eventually the states $(s,a,0)$ are not visited any more. 
In this case in every $(s,a,1)$, the action leading to the shortest path to $R$ is taken. 
Then the set $R$ is eventually reached and $B$ is also satisfied.
\end{proof}

The following lemma relates the $\limwedgeone{\induniv\in\setuniv}\univ(\neg (\event R)\rightarrow \parobj_a) \wedge \qual(\event R) $ objective of Lemma~\ref{thm:reach-Almagor} and the $\limwedgeone{\induniv\in\setuniv}\univ(\parobj_a) \wedge \qual(\event R) $ objective of \typeone($\setuniv,\{ a\}$) ECs under the condition that for all $s\in S$ it holds that $\satisfies{s}{\MDP} \limwedgeone{\induniv\in\setuniv}\univ(\parobj_a)$. 
\begin{lemma}
\label{lem:implies_is_univ}
Given an MDP $\MDP=( S,E,Act,\prob)$, a state $s_0 \in S$, a set of parity conditions $\{\parobj_{\induniv}\pipe \induniv\in\setuniv\}$, and a target set $R \subseteq S$, if for all $s\in S$ it holds that $\satisfies{s}{\MDP} \limwedgeone{\induniv\in\setuniv}\univ(\parobj_a)$ then we have that $\satisfies{s_0}{\MDP} \limwedgeone{\induniv\in\setuniv}\univ(\neg (\event R)\rightarrow \parobj_a) \wedge \qual(\event R) $ if and only if $\satisfies{s_0}{\MDP} \limwedgeone{\induniv\in\setuniv}\univ(\parobj_a) \wedge \qual(\event R) $.
\end{lemma}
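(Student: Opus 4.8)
The plan is to prove the two implications separately. The implication $(\Leftarrow)$ requires neither the hypothesis nor any new construction: given a strategy $\sigma$ with $\satisfy{s_0}{\sigma}{\MDP}\limwedgeone{\induniv\in\setuniv}\univ(\parobj_\induniv)\wedge\qual(\event R)$, I would simply observe that the same $\sigma$ witnesses the weaker objective, since any path of $\MDPtoMC{\MDP}{\sigma}$ from $s_0$ that satisfies every $\parobj_\induniv$ also satisfies every implication $\neg(\event R)\rightarrow\parobj_\induniv$ (if the path reaches $R$ the implication is vacuously true, otherwise $\parobj_\induniv$ holds outright), and the $\qual(\event R)$ conjunct is untouched.

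For the substantive direction $(\Rightarrow)$, let $\sigma$ be a strategy with $\satisfy{s_0}{\sigma}{\MDP}\limwedgeone{\induniv\in\setuniv}\univ(\neg(\event R)\rightarrow\parobj_\induniv)\wedge\qual(\event R)$. Using the hypothesis, for each state $r\in R$ I would fix a strategy $\tau_r$ with $\satisfy{r}{\tau_r}{\MDP}\limwedgeone{\induniv\in\setuniv}\univ(\parobj_\induniv)$; such a $\tau_r$ exists because surely satisfying a conjunction of parity conditions from $r$ is exactly winning the associated Streett game from $r$, which Player~1 does win by the hypothesis (and Streett games admit finite-memory winning strategies, so $\tau_r$ may be taken finite memory). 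I would then define a \emph{switch} strategy $\sigma'$: on any finite history that has not yet entered $R$, play as $\sigma$; as soon as the history first enters $R$, in some state $r$, switch irrevocably to $\tau_r$ restarted from $r$. If $\sigma$ is finite memory, then so is $\sigma'$, which matches the finite-memory witness produced by Lemma~\ref{thm:reach-Almagor}.

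The verification of $\sigma'$ then splits into two routine checks. For the $\qual(\event R)$ conjunct: $\sigma'$ and $\sigma$ make identical choices on every history that has not yet visited $R$, so $\MDPtoMC{\MDP}{\sigma'}$ and $\MDPtoMC{\MDP}{\sigma}$ assign the same probability to the event ``$R$ is first visited at step $k$'' for every $k$, and summing over $k$ gives $\prob_{\MDPtoMC{\MDP}{\sigma'}}(\event R)=\prob_{\MDPtoMC{\MDP}{\sigma}}(\event R)=1$. For the $\limwedgeone{\induniv\in\setuniv}\univ(\parobj_\induniv)$ conjunct: take any path $\pi$ of $\MDPtoMC{\MDP}{\sigma'}$ from $s_0$ and any $\induniv\in\setuniv$. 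If $\pi$ never enters $R$, then $\pi$ is also a path of $\MDPtoMC{\MDP}{\sigma}$ and avoids $R$, so it satisfies $\parobj_\induniv$ because $\sigma$ enforces $\univ(\neg(\event R)\rightarrow\parobj_\induniv)$. If $\pi$ first enters $R$ at step $m$, then the suffix $\pi(m,\infty)$ is a path of $\MDPtoMC{\MDP}{\tau_{\pi(m)}}$ from $\pi(m)$, hence satisfies $\parobj_\induniv$; and since a parity condition depends only on $\inf(\pi)=\inf(\pi(m,\infty))$, the path $\pi$ itself satisfies $\parobj_\induniv$. In all cases $\pi$ satisfies $\parobj_\induniv$, so $\satisfy{s_0}{\sigma'}{\MDP}\univ(\parobj_\induniv)$, and this holds for all $\induniv\in\setuniv$.

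I do not anticipate a real obstacle. The one point that deserves care is why it is legitimate to switch strategies precisely at the first visit to $R$: this works because $\sigma$ guarantees $\parobj_\induniv$ on \emph{all} $R$-avoiding paths, not merely almost all of them, so every path left ``unprotected'' by $\sigma$ is certain to enter $R$ and is thereafter governed by the winning strategy $\tau_r$. The other delicate-looking step, preservation of the reachability probability, is immediate because the switch never affects behaviour before $R$ has been reached. Finally, I would record that the hypothesis is used only to produce the strategies $\tau_r$ for $r\in R$; it is stated for all $s\in S$ because the same statement is reused later.
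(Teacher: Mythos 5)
Your proof is correct and follows essentially the same route as the paper's: the backward direction by weakening, and the forward direction by switching from the given strategy to a strategy witnessing $\limwedgeone{\induniv\in\setuniv}\univ(\parobj_\induniv)$ upon first entering $R$, then using prefix-independence of parity. Your write-up is in fact somewhat more careful than the paper's (e.g.\ the explicit argument that the switch preserves the reachability probability), but no new ideas are involved.
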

\begin{proof}
The right to left implication is simple, as $\limwedgeone{\induniv\in\setuniv}\univ(\parobj_a) \wedge \qual(\event R)$ implies $\limwedgeone{\induniv\in\setuniv}\univ(\neg (\event R)\rightarrow \parobj_a) \wedge \qual(\event R)$. For the left to right implication, consider a strategy $\sigma_0$ such that $\satisfy{s_0}{\sigma_0}{\MDP} \limwedgeone{\induniv\in\setuniv}\univ(\neg (\event R)\rightarrow \parobj_a) \wedge \qual(\event R) $. Recall that from every $s\in S$ we have $\sigma_s$ such that  $\satisfy{s}{\sigma_s}{\MDP} \limwedgeone{\induniv\in\setuniv}\univ(\parobj_a)$. Consider the following strategy $\sigma$: it plays like $\sigma_0$, but when it reaches some $r\in R$, it switches to play like $\sigma_r$ forever. It clearly satisfies $\qual(\event R)$. It also satisfies $\limwedgeone{\induniv\in\setuniv}\univ(\parobj_a)$: indeed, on a path $\pi$, either $R$ is never reached, and thanks to the use of $\sigma_0$ we have that $\univ(\neg (\event R)\rightarrow \parobj_a)$ implying $\parobj_a$ holds on this path for all $\induniv\in\setuniv$. Otherwise, if $\pi$ eventually reaches $R$, $\sigma$ switches to some strategy $\sigma_r$ that satisfies $\limwedgeone{\induniv\in\setuniv}\univ(\parobj_a)$ by assumption.
\end{proof}

 As $\mathbf{I_1}$ implies $\satisfies{s}{\MDP} \limwedgeone{\induniv\in\setuniv}\univ(\parobj_a)$, pruning states that do not satisfy $\limwedgeone{\induniv\in\setuniv}\univ(\parobj_a)$ before using Lemma~\ref{thm:reach-Almagor} and Lemma~\ref{lem:implies_is_univ} is always possible.
Lemma~\ref{thm:reach-Almagor} and 
Lemma~\ref{lem:implies_is_univ} can only be used to compute \typeone($\setuniv,\{ a\}$) ECs. 
We have the following lemma to relate \typeone($\setuniv,\{a\}$) ECs and \typeone($\setuniv,\setuniv$) ECs.

\begin{lemma}
\label{lem:split-typeone}
In an EC $C$, for all $s\in C$ we have that $\satisfies{s}{\MDP{\downharpoonright C}} \limwedgeone{\induniv\in\setuniv}\univ(\parobj_{\induniv}) \wedge \limwedgeone{\induniv\in\setuniv} \qual(\event C^{\max}_{\even}(\parobj_{\induniv}))$, iff for all 
${\induniv}_i\in \setuniv$, 
and for all 
$s\in C$, we have that $\satisfies{s}{\MDP{\downharpoonright C}} \limwedgeone{\induniv\in\setuniv}\univ(\parobj_{\induniv}) \wedge \qual(\event C^{\max}_{\even}(\parobj_{a_i}))$.
\end{lemma}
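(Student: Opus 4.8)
The plan is to treat the two implications separately; the left-to-right direction is trivial and the right-to-left direction carries all the content. For $\Rightarrow$, I would simply note that a strategy $\sigma$ of $\MDP{\downharpoonright C}$ that from a state $s$ witnesses $\limwedgeone{\induniv\in\setuniv}\univ(\parobj_{\induniv}) \wedge \limwedgeone{\induniv\in\setuniv} \qual(\event C^{\max}_{\even}(\parobj_{\induniv}))$ also witnesses, for every fixed $a_i\in\setuniv$, the weaker formula $\limwedgeone{\induniv\in\setuniv}\univ(\parobj_{\induniv}) \wedge \qual(\event C^{\max}_{\even}(\parobj_{a_i}))$, obtained by forgetting the remaining almost-sure conjuncts; this holds uniformly for all $s\in C$.

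For $\Leftarrow$, I would fix a state $s\in C$ and build a single witnessing strategy for the full conjunction by a round-robin construction. Write $\setuniv=\{a_1,\dots,a_k\}$. By hypothesis, for each $i\in[k]$ and each $s'\in C$ there is a strategy $\sigma_i^{s'}$ with $\satisfy{s'}{\sigma_i^{s'}}{\MDP{\downharpoonright C}}\limwedgeone{\induniv\in\setuniv}\univ(\parobj_{\induniv}) \wedge \qual(\event C^{\max}_{\even}(\parobj_{a_i}))$; since $C$ is finite only finitely many such strategies are needed, and by Lemma~\ref{thm:reach-Almagor} together with Lemma~\ref{lem:implies_is_univ} each may be taken finite-memory. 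The combined strategy $\sigma$ keeps a mode $j\in[k]$, initialised to $1$: while in mode $j$ and at the current state $s'$ it plays $\sigma_j^{s'}$ (resetting the memory of $\sigma_j^{s'}$ at the instant mode $j$ is entered), and the first time it visits $C^{\max}_{\even}(\parobj_{a_j})$ it increments to mode $j{+}1$ if $j<k$, and otherwise keeps playing $\sigma_k$ forever. I would then verify the two families of conjuncts for $\sigma$ started at $s$. For the reachability conjuncts: each phase $j$ is left with probability $1$, because finiteness of the $\sigma_j^{s'}$ and of $\MDP{\downharpoonright C}$ gives a uniform positive lower bound on the probability of hitting $C^{\max}_{\even}(\parobj_{a_j})$ within a bounded number of steps (the same argument used in the proof of Lemma~\ref{thm:reach-Almagor}); chaining the $k$ phases, conditioning on the random state where each phase ends, then yields that almost every path of $\MDPtoMC{\MDP{\downharpoonright C}}{\sigma}$ visits $C^{\max}_{\even}(\parobj_{a_i})$ for every $i\in[k]$, i.e. $\satisfy{s}{\sigma}{\MDP{\downharpoonright C}}\limwedgeone{\induniv\in\setuniv}\qual(\event C^{\max}_{\even}(\parobj_{\induniv}))$. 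For the universal conjuncts: along any path $\pi$ the mode is non-decreasing and bounded by $k$, hence eventually constant at some value $j$, so the infinite suffix of $\pi$ is a suffix of some path produced by $\sigma_j^{s'}$ (with $s'$ the state at which mode $j$ was last entered), and $\sigma_j^{s'}$ forces every one of its paths to satisfy $\parobj_{\induniv}$ for all $\induniv\in\setuniv$; since a parity condition depends only on $\inf(\pi)$, the path $\pi$ itself satisfies all $\parobj_{\induniv}$, so $\satisfy{s}{\sigma}{\MDP{\downharpoonright C}}\limwedgeone{\induniv\in\setuniv}\univ(\parobj_{\induniv})$. Combining the two, $\sigma$ witnesses the full conjunction from $s$, and as $s\in C$ was arbitrary this proves $\Leftarrow$.

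I expect the only genuinely delicate step to be the chaining of the $k$ almost-sure reachability phases: this is where it matters that each $\sigma_i^{s'}$ meets its objective from \emph{every} state of $C$ rather than only from the initial state $s$, and that these strategies are finite-memory, so that the probability of eventually leaving each mode is truly $1$ and not merely positive — a naive combination using non-uniform or unboundedly history-dependent sub-strategies would break this. The complementary observation that switching strategies cannot endanger the sure parity objectives is easy once one notes that every path settles in a single mode after finitely many steps and that parity is a suffix-invariant property, so this is not where the difficulty lies.
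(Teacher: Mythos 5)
Your proposal is correct and follows essentially the same route as the paper's proof: the left-to-right direction is dismissed as a weakening, and the right-to-left direction chains the per-$a_i$ witness strategies sequentially, switching to the next one upon reaching $C^{\max}_{\even}(\parobj_{a_i})$, with the sure parity conjuncts preserved because every path eventually settles into a single sub-strategy and parity is suffix-invariant. Your write-up is in fact more careful than the paper's on the two delicate points (conditioning on the random end-state of each phase, and the finite-memory/uniform-bound justification that each phase terminates almost surely), but the underlying argument is the same.
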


\begin{proof}
We prove here the right to left implication as the other direction is obvious.
Assume that for $i\in[\abs{\setuniv}]$, we have a strategy $\sigma_{a_i,s}$ for $\limwedgeone{\induniv\in\setuniv}\univ(\parobj_{\induniv}) \wedge \qual(\event C^{\max}_{\even}(\parobj_{a_i}))$.
We define the following strategy $\sigma_{s}$:

\begin{enumerate}
    \item Let $s_0=s$
    \item For $i$ going from $1$ to $\abs{\setuniv}$, play $\sigma_{a_{i},s_{i-1}}$ until reaching some state $s_i$ of $C^{\max}_{\even}(\parobj_{a_i})$
\end{enumerate}

As in all possible paths we end up having some $i$ such that $\sigma_{a_{i},s_{i-1}}$ is followed forever, and all these strategy satisfy $ \limwedgeone{\induniv\in\setuniv}\univ(\parobj_{\induniv})$, we have that $\sigma_s$ also satisfies $ \limwedgeone{\induniv\in\setuniv}\univ(\parobj_{\induniv})$. As every $\sigma_{a_{i},s_{i-1}}$ has probability $1$ of reaching $C^{\max}_{\even}(\parobj_{a_i})$, the strategy $\sigma_s$ satisfies $\limwedgeone{\induniv\in\setuniv} \qual(\event C^{\max}_{\even}(\parobj_{\induniv}))$. As a conclusion $\satisfy{s}{\sigma_{s}}{\MDP{\downharpoonright C}} \limwedgeone{\induniv\in\setuniv}\univ(\parobj_{\induniv}) \wedge \limwedgeone{\induniv\in\setuniv} \qual(\event C^{\max}_{\even}(\parobj_{\induniv}))$ hence the result holds.
\end{proof}
The following lemma states that we can compute the maximal  \typeone($\setuniv,\setuniv$) ECs.
The remaining part of this section is devoted to the proof of the following lemma.
The proof of this lemma involves a detailed algorithmic procedure for computing the set of maximal \typeone($\setuniv,\setuniv$) ECs. 
In this procedure we iteratively compute the maximal \typeone($\setuniv,\{\ \induniv_i\}$) ECs for all $\induniv_i\in\setuniv$. The combination of Lemma~\ref{thm:reach-Almagor} and Lemma~\ref{lem:implies_is_univ} is used for the computation of the set maximal \typeone($\setuniv,\{ \induniv_i\}$) ECs. Every time we do this computation, we prune all the states that do not belong to at least one of these ECs and solve Streett games again. We note that computing the maximal \typeone($\setuniv,\{ \induniv_i\}$) ECs follows an approach similar to the the procedure in~\cite{AKV16} that computes the set of maximal SGECs. The difference is that we add an additional step in our algorithm, and use Lemma~\ref{lem:split-typeone} to be able to combine  the different $\{\parobj_{\induniv_i}\}$. We note that a naive generalization of the algorithm in~\cite{AKV16} to compute the set of maximal \typeone($\setuniv,\setuniv$) ECs results in an $\sf{EXPTIME}$ complexity
, while we end up with a $\sf{P}^{\sf{NP}}$ complexity as we show later in Section~\ref{sec:complex}.
\begin{lemma}
\label{lem:alg-typeone}
Given an MDP $\MDP=( S,E,Act,\prob)$, it is possible to compute the set 
of maximal \typeone($\setuniv,\setuniv$) ECs 
.
This can be done by solving iteratively a number of Streett games that are polynomial in $\abs{\setuniv}$ and $\abs{S}$.
\end{lemma}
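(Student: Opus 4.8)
The plan is to give an iterative fixpoint-style algorithm that alternates between two operations: (i) restricting attention to states that belong to some candidate EC (i.e., removing states from which the relevant sure objectives cannot be guaranteed), and (ii) computing, for each $\induniv_i \in \setuniv$ in turn, the maximal \typeone($\setuniv, \{\induniv_i\}$) ECs via the reduction of Lemma~\ref{thm:reach-Almagor} together with Lemma~\ref{lem:implies_is_univ}. The correctness glue is Lemma~\ref{lem:split-typeone}, which tells us that a state is in a \typeone($\setuniv,\setuniv$) EC iff it is simultaneously \typeone($\setuniv,\{\induniv_i\}$) for every $i$, so it suffices to intersect the results of the per-index computations and iterate until stabilization.

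Concretely, I would proceed as follows. Start with $\MDP_0 = \MDP$ and compute its MEC decomposition. For the current MDP $\MDP_j$ and each MEC $C$, one works inside $\MDP_j{\downharpoonright C}$. First, as noted in the remark following Lemma~\ref{lem:implies_is_univ}, prune from $C$ all states that do not satisfy $\limwedgeone{\induniv\in\setuniv}\univ(\parobj_{\induniv})$ in $\MDP_j{\downharpoonright C}$ — this is a Streett-game computation (conjunction of parity conditions), and since $\mathbf{I_1}$ entails this universal condition such states can never belong to a \typeone EC. Then, for each $\induniv_i \in \setuniv$, form the set $C^{\max}_{\even}(\parobj_{\induniv_i})$ relative to $C$, build the Streett-B\"uchi game $G^{\MDP_j{\downharpoonright C}}_{C^{\max}_{\even}(\parobj_{\induniv_i}), \{\parobj_{\induniv}\pipe \induniv\in\setuniv\}}$ of Lemma~\ref{thm:reach-Almagor}, and collect the states of $C$ from which Player~1 wins; by Lemma~\ref{lem:implies_is_univ} (applicable after the pruning step) these are exactly the states satisfying $\limwedgeone{\induniv\in\setuniv}\univ(\parobj_{\induniv}) \wedge \qual(\event C^{\max}_{\even}(\parobj_{\induniv_i}))$. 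Intersecting over all $i$ gives a candidate set $C'$; if $C' \subsetneq C$ (or is not an EC on the nose), remove the missing states, recompute the MEC decomposition of what remains, and repeat. Since each round removes at least one state, the process terminates after at most $\abs{S}$ outer iterations, each doing $O(\abs{\setuniv})$ Streett-game solves per MEC, giving the claimed polynomial-in-$\abs{\setuniv}$ and $\abs{S}$ bound on the number of Streett games.

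The correctness argument has two directions. Soundness: any state surviving the fixpoint lies in a sub-MDP that is an EC, and on which, for every $i$, the objective $\limwedgeone{\induniv\in\setuniv}\univ(\parobj_{\induniv})\wedge\qual(\event C^{\max}_{\even}(\parobj_{\induniv_i}))$ is realizable; by Lemma~\ref{lem:split-typeone} these combine into $\mathbf{I_1}$, so the surviving EC is \typeone($\setuniv,\setuniv$); and it is maximal because at every step we only remove states that provably cannot be in any \typeone EC. Completeness/maximality: one shows by induction on the iteration count that no state of any \typeone($\setuniv,\setuniv$) EC is ever removed — a state in such an EC $D$ satisfies each per-index objective witnessed by a strategy staying in $D$, hence survives the pruning and each game-solving step at every round, so it is contained in some MEC of every $\MDP_j$ and ultimately in an output EC. A subtle point to get right is that the sets $C^{\max}_{\even}(\parobj_{\induniv_i})$ are defined relative to the current sub-MDP, so shrinking the state space can only enlarge these target sets (fewer odd priorities to dominate), which means a state lost in one round for failing to reach $C^{\max}_{\even}$ stays lost — monotonicity of the shrinking operator — and this is what licenses the simple termination bound.

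The main obstacle is establishing the maximality/fixpoint correctness precisely: one must argue that a state in a genuine \typeone($\setuniv,\setuniv$) EC $D$ is never pruned, and this requires carefully tracking how $C^{\max}_{\even}(\parobj_{\induniv_i})$ changes as the ambient set shrinks, and checking that the witness strategy for $D$'s objectives restricted to $D$ remains valid in every intermediate restricted MDP $\MDP_j{\downharpoonright C}$ with $D \subseteq C$. A secondary technical nuisance is that after intersecting the per-index winning sets, the resulting set need not immediately be an EC, so one must re-run the MEC decomposition and iterate; showing this nested iteration still terminates in polynomially many Streett-game calls (rather than exponentially) is exactly the point emphasized in the paragraph before the lemma, and the argument is that each Streett-game solve either confirms the current candidate or strictly shrinks it, and there are at most $\abs{S}$ strict shrinkages.
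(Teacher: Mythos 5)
Your overall architecture (per-index computation of \typeone($\setuniv,\{\induniv_i\}$) ECs via the Streett--B\"uchi game of Lemma~\ref{thm:reach-Almagor} plus Lemma~\ref{lem:implies_is_univ}, combined across indices with Lemma~\ref{lem:split-typeone}, inside an outer fixpoint that reruns the MEC decomposition after each pruning) matches the paper's Algorithms~\ref{proc:issg}--\ref{proc:sgec}, and your count of at most $\abs{S}$ strict shrinkages times $O(\abs{\setuniv})$ games per round is in the right spirit. However, there is a genuine gap in the inner step: you compute $C^{\max}_{\even}(\parobj_{\induniv_i})$ relative to the \emph{current MEC} $C$ and then prune the losing states of the resulting game. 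A maximal \typeone EC need not be a MEC: it may be a proper sub-EC of a MEC $C$, obtained precisely by excluding states whose odd $\parobj_{\induniv_i}$-priority is too high. In that situation $C^{\max}_{\even}(\parobj_{\induniv_i})$ computed in $C$ can be empty (or wrong), the reachability game is then lost from every state of $C$, and your procedure discards the entire MEC --- including the genuine \typeone sub-EC inside it. Your own monotonicity observation cuts against you here: since shrinking the ambient set can only \emph{enlarge} $C^{\max}_{\even}$, a state that loses the game in one round may well win it after further shrinking, so ``lost once, lost forever'' is false and the completeness direction of your fixpoint argument breaks.

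The paper's fix, which your proposal is missing, is the extra enumeration in Algorithm~\ref{proc:maxsgeck}: for each candidate maximum odd priority value $k$ of $\parobj_{\induniv_i}$ (at most $\min(k,\abs{S})$ many), first remove $\mathtt{Attr}_{1}(odd_{>k})$, \emph{then} take the MEC decomposition and solve the games, and finally take the maximal elements of the union over all $k$ (Algorithm~\ref{proc:maxsgec}). This guesses the maximum odd rank of the target EC up front so that the set $C^{\max}_{\even}(\parobj_{\induniv_i})$ is computed with respect to the right sub-EC; it costs an extra factor of $\abs{S}$ in the number of Streett games (the paper gets $O(\abs{\setuniv}\cdot\abs{S}^3)$) but is what makes the completeness/maximality direction go through. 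You would also need something like Lemma~\ref{lem:sgecunion2} (the union of two \typeone ECs that together form an EC is again \typeone) to conclude that the surviving components are exactly the \emph{maximal} \typeone($\setuniv,\setuniv$) ECs rather than merely some \typeone ECs.
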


Given a set $\mathcal{C}$ of end-components, we denote by $S_{\mathcal{C}}$ the set of states belonging to some end-component in $\mathcal{C}$. Formally: $S_{\mathcal{C}}=\{s\in S,\ \exists C\in\mathcal{C}\pipe s\in C\}$. 
At the end, in 
Algorithm~\ref{proc:sgec}, we compute the set of maximal \typeone($\setuniv,\setuniv$) ECs $C$\footnote{Note that in~\cite{AKV16} the case where $\setuniv$ is a singleton $\{\parobj\}$ has been solved}.
To find these ECs $C$, we look at the possible even parity value that may appear in states of $C^{max}_{even}(\parobj)$, and to do so efficiently we look at all even values that $p$ may take. For each of these values, we solve a parity-B\"uchi game. 
These games can be solved by using a linear number of calls to an $\NPinter$ oracle, and hence the problem is in $\sf{P}^{\NPinter} = \NPinter$~\cite{brassard1979note}. 
A naive generalization of~\cite{AKV16} would result in an $\sf{EXPTIME}$ complexity. Indeed, 
if we consider objectives of the form $\satisfies{s}{\MDP{\downharpoonright C}} \limwedgeone{\induniv\in\setuniv}\univ(\parobj_{\induniv}) \wedge \limwedgeone{\induniv\in\setuniv} \qual(\event C^{\max}_{\even}(\parobj_{\induniv}))$, we would consider every tuple of length $\abs{\setuniv}$ of possible even parity values that may appear in states of $C^{max}_{even}(\parobj_{\induniv})$. The number of such tuples is exponential in $\abs{\setuniv}$, which leads to the $\sf{EXPTIME}$ complexity. Our approach is to use Lemma~\ref{lem:split-typeone} to check for every $a_i\in\setuniv$ if $\satisfies{s'}{\MDP{\downharpoonright C}} \limwedgeone{\induniv\in\setuniv}\univ(\parobj_{\induniv})\wedge \qual(\event C^{max}_{even}(\parobj_{a_i}))$. This means we can test separately for every $a_i\in\setuniv$ the possible even parity values of states in $C^{max}_{even}(\parobj_{a_i})$, which only considers a polynomial number of possibilities. We can check this condition by converting it to a game with multiple parity, and a single B\"uchi condition. Since a B\"uchi condition can be regarded as a parity condition, this game can be regarded as a multiple-parity game, and thus can be translated to a Streett game. 
We solve a number of Streett games linear in $\abs{S}$ and $\abs{\setuniv}$ to check if $C$ is a \typeone($\setuniv,\{ a_i\}$) EC.
Towards this, we modify the algorithms defined in~\cite{AKV16}, where the case when $\setuniv$ is singleton has been solved.
We begin with Algorithm~\ref{proc:issg} to check if an EC $C$ of MDP $\MDP$ is a \typeone($\setuniv,\{ a_i\}$).

\begin{algorithm}
\caption{}
\label{proc:issg}
\textbf{Input} : An EC $C$ of MDP $\MDP$, parity conditions $\{\parobj_{\induniv}\pipe \induniv\in\setuniv\}$, $a_i\in \setuniv$ \\
\textbf{Output} : Yes if and only if for all $s\in C$ we have $\satisfies{s}{\MDP}\limwedgeone{\induniv\in\setuniv}\univ(\parobj_{\induniv}) \wedge \qual(\event C^{\max}_{\even}(\parobj_{a_i}))$ \\
\begin{algorithmic}[1]
    \State Compute $G_C$ the Streett game associated to $\limwedgeone{\induniv\in\setuniv}\univ(\parobj_{\induniv}) \wedge \qual(\event C^{\max}_{\even}(\parobj_{a_i}))$.
    \If {all states win in $G_C$}
    \State return ``yes".
    \Else 
    \State return ``no" and the set of winning states.
    \EndIf
\end{algorithmic}
\end{algorithm}

Now, given an MDP $\MDP$, an objective $\limwedgeone{\induniv\in\setuniv}\univ(\parobj_{\induniv}) \wedge \\ \qual(\event C^{\max}_{\even}(\parobj_{a_i}))$ and an odd number $k$, Algorithm~\ref{proc:maxsgeck} computes the set $\mathcal{MEC}_{\rm I}^k$ of maximal \typeone($\setuniv,\{ a_i\}$) ECs of $\MDP$ whose maximum odd value for parity $\parobj_{a_i}$ is $k$. 
Similarly to~\cite{AKV16}, we focus on maximum odd ranks because we remove the attractor of states that have an odd rank greater than $k$. 
If we tried to obtain the maximum even rank, we would look for a decomposition in EC, such that every EC in this decomposition may have a different maximum odd rank. 
We proceed by removing the attractor of states having an odd rank greater than this maximum. 
This is not possible without already having the decomposition, and so we would end up guessing the decomposition and the maximum odd value in every EC; this would be less efficient.

\begin{algorithm}
\caption{}
\label{proc:maxsgeck}
\textbf{Input} : An MDP $\MDP$, parity conditions $\{\parobj_{\induniv}\pipe \induniv\in\setuniv\}$, $a_i\in \setuniv$, $k\in odd$. \\
\textbf{Output} : The set  $\mathcal{MEC}_{\rm I}^k$ of maximal \typeone($\setuniv,\{ a_i\}$) ECs with maximum odd parity value $k$ \\
\begin{algorithmic}[1]
    \State\label{step:decomp} Compute $\mathcal{C}$ the maximal EC decomposition of $\MDP$.
    \For{\texttt{all $C\in \mathcal{C}$}}
    \If {$odd_{>k}(C)\neq \varnothing$} 
    
    \Comment{$odd_{>k}(C):= \{s\in C\ |\ \parobj_{a_i}(s)>k \wedge \parobj_{a_i}(s)\in odd\}$}
    \State $\MDP := Attr_{1}(odd_{>k}(C))$, and begin again from Step~\ref{step:decomp}. 
    \EndIf
    \If {$C$ is not \typeone} \Comment{Call to Algorithm \ref{proc:issg}}
    \State $\MDP:= Attr_{1}(S\backslash \mathcal{MEC}_{\rm I}^k)$, and begin again from Step~\ref{step:decomp} \Comment{$\mathcal{MEC}_{\rm I}^k$ is the set of winning states in Algorithm \ref{proc:issg}}.
    \EndIf
    \EndFor
\State return $\mathcal{C}$.
\end{algorithmic}
\end{algorithm}

We get the set of maximal \typeone($\setuniv,\{ a_i\}$) ECs by choosing the maximal elements from the set $\limcup{k\in odd}{}\mathcal{MEC}_{\rm I}^k$.
This is done in Algorithm \ref{proc:maxsgec}.
Note that it is possible that for some odd $k$, for $C \in \mathcal{MEC}_{\rm I}^k$, there may exist $C' \in \mathcal{MEC}_{\rm I}^{\ell}$, where $\ell$ is odd, and $\ell < k$ such that $C'\subsetneq C$.
Hence we take the maximal elements of the union of $\mathcal{MEC}_{\rm I}^k$ over all $k$ to obtain the set of maximal \typeone($\setuniv,\{ a_i\}$) ECs.

\begin{algorithm}
\caption{}
\label{proc:maxsgec}
\textbf{Input} : An MDP $\MDP$, parity conditions $\{\parobj_{\induniv}\pipe \induniv\in\setuniv\}$, $a_i\in \setuniv$. \\
\textbf{Output} : The set of maximal \typeone($\setuniv,\{ a_i\}$) ECs \\
\begin{algorithmic}[1]
    \State Return the set of maximal elements (for the inclusion) of the set $\limcup{k\in odd}{}\mathcal{MEC}_{\rm I}^k$.
\end{algorithmic}
\end{algorithm}

For an MDP $\MDP=(S,E,Act,\prob)$, we now write Algorithm~\ref{proc:sgec} that computes the maximal \typeone($\setuniv,\setuniv$) ECs.

\begin{algorithm}
\caption{}
\label{proc:sgec}
\textbf{Input} : An MDP $\MDP$, parity conditions $\{\parobj_{\induniv}\pipe \induniv\in\setuniv\}$. \\
\textbf{Output} : The set $\mathcal{C}$ of maximal \typeone($\setuniv,\setuniv$) ECs \\
\begin{algorithmic}[1]
    \State Initialize $\mathcal{M}=\MDP$
    \State\label{step:decompo} For $i$ going from $1$ to $\abs{\setuniv}$, compute the decomposition into maximal \typeone($\setuniv,\{ a_i\}$) EC of the MDP associated to $S_{\mathcal{C}}$. We call this decomposition $\mathcal{C}_i$ which is the output of Algorithm \ref{proc:maxsgec}. If $S_{\mathcal{C}_i}\subsetneq S_{\mathcal{C}_1}$, take $\mathcal{M}=(S_{\mathcal{C}_i},E,Act,\prob)$ and begin again from $i=1$
    \Comment{Note that if we reach $\mathcal{C}_i$, we have that $\mathcal{C}_1=\mathcal{C}_2=\ldots =\mathcal{C}_{i-1}$}
    \State \label{step:return}Return $\mathcal{C}=\mathcal{C}_1$.
\end{algorithmic}
\end{algorithm}


We use the following lemmas to show the correctness of Algorithm \ref{proc:sgec}.
\begin{lemma}
\label{lem:sgecunion2}
If $C_1$ and $C_2$ are two \typeone ECs, and $C_1\cup C_2$ is an EC, then $C_1\cup C_2$ is a \typeone EC.
\end{lemma}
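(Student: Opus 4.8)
The plan is to exhibit, for each state $s \in C_1 \cup C_2$, a strategy in $\MDP{\downharpoonright (C_1 \cup C_2)}$ witnessing condition $\mathbf{(I_1)}$ for the union. First I would record the basic structural facts: since $C_1$ and $C_2$ are ECs and $C_1 \cup C_2$ is assumed to be an EC, the union is strongly connected inside $\MDP{\downharpoonright(C_1\cup C_2)}$; moreover $C_1 \cap C_2 \neq \varnothing$ (otherwise the union could not be strongly connected), so there is a state lying in both. I would also note the key monotonicity fact about the ``good'' target sets: for any parity condition $\parobj_{\indqual}$, every odd priority appearing in $C_1\cup C_2$ already appears in $C_1$ or in $C_2$, and the even priorities in $(C_1\cup C_2)^{\max}_{\even}(\parobj_{\indqual})$ are those even priorities in the union that dominate all odd ones; hence $(C_1)^{\max}_{\even}(\parobj_{\indqual}) \cap (C_1\cup C_2)^{\max}_{\even}(\parobj_{\indqual})$ and the analogous set for $C_2$ together cover all of $(C_1\cup C_2)^{\max}_{\even}(\parobj_{\indqual})$, and at least one of them is nonempty (since $C_1$, $C_2$ are \typeone, each has its own nonempty $\max_{\even}$ set, and the dominating even priority of the union is the larger of the two). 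I would isolate, for each $\indqual\in\setqual$, at least one of $C_1,C_2$ — call it $C_{j(\indqual)}$ — such that a state of $(C_{j(\indqual)})^{\max}_{\even}(\parobj_{\indqual})$ is also in $(C_1\cup C_2)^{\max}_{\even}(\parobj_{\indqual})$.

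Next I would build the witness strategy by a round-robin construction very much in the spirit of the proof of Lemma~\ref{lem:split-typeone}. Fix a state $t$ in $C_1 \cap C_2$. For each $\indqual\in\setqual$, the EC $C_{j(\indqual)}$ is \typeone, so from every state of $C_{j(\indqual)}$ there is a strategy inside $\MDP{\downharpoonright C_{j(\indqual)}}$ that satisfies $\limwedgeone{\induniv\in\setuniv}\univ(\parobj_{\induniv}) \wedge \limwedgeone{\indqual'\in\setqual}\qual(\event C_{j(\indqual)}^{\max}_{\even}(\parobj_{\indqual'}))$; in particular it almost surely visits $C_{j(\indqual)}^{\max}_{\even}(\parobj_{\indqual})$, and we may assume (by the definition of $\qual(\event\cdot)$ inside the sub-MDP, or by appending to it a strategy reaching $t$, since the EC is strongly connected) that it ultimately routes back to $t$. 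The strategy $\sigma_s$ for the union then first moves from $s$ into the appropriate $C_j$ containing $s$ and then to $t$ (possible by strong connectedness of the union), and from $t$ it cycles $\indqual = 1,2,\dots,|\setqual|,1,2,\dots$, each phase playing the corresponding $C_{j(\indqual)}$-strategy until a state of $(C_1\cup C_2)^{\max}_{\even}(\parobj_{\indqual})$ is hit and $t$ is reached again. Every such phase uses only edges inside $C_{j(\indqual)} \subseteq C_1\cup C_2$, so the play never leaves the union.

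Finally I would verify the two conjuncts of $\mathbf{(I_1)}$ for $\sigma_s$. For the sure parity conjuncts $\univ(\parobj_{\induniv})$: on any play, the set of states visited infinitely often is contained in the union of the phase-sets; since each phase strategy guarantees $\parobj_{\induniv}$ surely within its own EC, and the maximal priority seen infinitely often is the max over phases of an even dominating priority, $\parobj_{\induniv}$ is satisfied surely — this needs the observation that a finite concatenation of ``sure-parity'' behaviours still satisfies sure parity because $\inf(\pi)$ only depends on the tail, and the tail lies in some EC on which the played strategy is sure-winning. (I would phrase this as in Lemma~\ref{lem:split-typeone}: almost all plays eventually stay in one phase forever only if some phase fails to terminate, which cannot happen since each phase a.s.\ terminates; so infinitely many phases of each $\indqual$ occur.) For the qualitative reachability conjuncts $\qual(\event (C_1\cup C_2)^{\max}_{\even}(\parobj_{\indqual}))$: each phase of type $\indqual$ reaches $(C_1\cup C_2)^{\max}_{\even}(\parobj_{\indqual})$ with conditional probability bounded below by a fixed constant (finite-memory strategy in a finite sub-MDP), and infinitely many such phases occur almost surely, so the target is visited almost surely — in fact infinitely often. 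I expect the main obstacle to be the bookkeeping of the dominating-priority argument: one has to argue carefully that the even priority witnessing $(C_1\cup C_2)^{\max}_{\even}(\parobj_{\indqual})$ genuinely lies in one of $C_1,C_2$ and dominates \emph{all} odd priorities of the \emph{union} (not just of the chosen $C_j$), using that the other EC's odd priorities are themselves dominated because that EC is also \typeone; making this precise, including the edge case where $\setqual = \setuniv$ versus $\setqual = \{\induniv\}$, is where the care is needed, but it follows the same pattern as Lemmas~\ref{lem:implies_is_univ} and~\ref{lem:split-typeone}.
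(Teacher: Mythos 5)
Two steps of your plan fail, and together they hide the actual difficulty of the lemma. First, the claim that $C_1\cap C_2\neq\varnothing$ is simply false: two \emph{disjoint} ECs can have a union that is again an EC. Take $C_1=\{s_1\}$ and $C_2=\{s_2\}$, each with a self-loop action, plus actions $c,d$ with $\post(s_1,c)=\{s_2\}$ and $\post(s_2,d)=\{s_1\}$; strong connectedness of the union only needs edges between the two components, not a common state. So the anchor state $t$ on which your whole round-robin is built need not exist. Second, and more seriously, your strategy repeatedly ``moves to $t$'' and ``returns to $t$'' by invoking strong connectedness of the union. In an MDP a state of an EC can only be reached \emph{almost surely}, not surely: every routing step leaves a measure-zero set of plays that never arrive, and on those plays you are following a bare reachability strategy that gives no guarantee for the \emph{sure} conjuncts $\univ(\parobj_{\induniv})$ (such a play may see a maximal odd priority infinitely often). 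Since $\univ$ must hold on \emph{all} paths, including measure-zero ones, this breaks the first half of condition $\mathbf{(I_1)}$. Your phase-termination argument correctly handles plays that get stuck inside one of the local \typeone witness strategies, but not plays that get stuck while crossing from one $C_i$ to the other.

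This crossing problem is exactly what the paper's proof is organised around. It reduces to a single parity condition (the multi-condition combination is delegated to Lemma~\ref{lem:split-typeone}) and, when the play starts in $C_2$ while $(C_1\cup C_2)^{\max}_{\even}(\parobj)$ lies in $C_1$, it alternates bounded-length random walks (attempting to enter $C_1$) with stretches of the sure-winning strategy $\sigma_2$ played until ${C_2}^{\max}_{\even}(\parobj)$ is revisited. On the measure-zero plays that never reach $C_1$, the play either eventually follows $\sigma_2$ forever or stays in $C_2$ and visits ${C_2}^{\max}_{\even}(\parobj)$ infinitely often, so every $\univ(\parobj_{\induniv})$ still holds, while $C_1$ (and with it the union's target set) is reached with probability one. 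To repair your proof you would have to drop the common anchor state and interleave every routing step with the local sure-winning strategies in this fashion; your bookkeeping of the sets $(C_1\cup C_2)^{\max}_{\even}(\parobj_{\indqual})$ is essentially fine and is not where the work lies.
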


\begin{proof}
Observe that we have strategy $\sigma_i$ in $C_i$ such that for all $s\in C_i$, we have that $\satisfy{s}{\sigma_i}{C_{3-i}}\limwedgeone{\induniv\in\setuniv}\univ(\parobj_{\induniv}) \wedge \qual(\event (C_1\cup C_2)^{\max}_{\even}(\parobj)$. If ${C_1}^{\max}_{\even}(\parobj)={C_2}^{\max}_{\even}(\parobj)$ then strategy $\sigma$ playing as $\sigma_i$ in $C_i$ is a witness that $C_1\cup C_2$ is an \typeone EC.
Now w.l.o.g., we assume $(C_1\cup C_2)^{\max}_{\even}(\parobj)={C_1}^{\max}_{\even}(\parobj)>{C_2}^{\max}_{\even}(\parobj)$. We define strategy $\sigma$ as follows: if the initial state belongs to $C_1$ it plays as $\sigma_1$. Otherwise, if the initial state is in $C_2$, it is defined as follows: 
\begin{enumerate}
    \item Play for $\abs{C_2}$ steps uniformly at random while staying in $C_1\cup C_2$.
    \item If a state $q\in C_1$ is reached, then play as $\sigma_1$ forever.
    \item Else play as $\sigma_2$ until reaching ${C_2}_{\even}^{\max}(\parobj)$ and start again from 1.
\end{enumerate}
Either this strategy reaches $C_1$ at some point, and then it satisfies both $\limwedgeone{\induniv\in\setuniv}\univ(\parobj_{\induniv})$ and $\qual(\event (C_1\cup C_2)^{\max}_{\even}(\parobj)$. This happens with probability one.
In the remaining cases, we stay in $C_2$ forever. This means that it will either play $\sigma_2$ forever at some point, satisfying $\univ(\parobj)$, or alternate between bounded length random walks and parts where it follows $\sigma_2$ until reaching ${C_2}_{\even}^{\max}(\parobj)$ infinitely often, also satisfying $\univ(\parobj)$.
\end{proof}



We now have the tools to prove Lemma~\ref{lem:alg-typeone}.

\begin{proof}We denote by $\mathcal{C}^*$ the correct decomposition we want to compute, and by $\mathcal{C}$ the decomposition obtained by applying Algorithm~\ref{proc:sgec}. We show equality between $\mathcal{C}^*$ and $\mathcal{C}$ by proving $\mathcal{C}\subseteq \mathcal{C}^*$ and $\mathcal{C}^*\subseteq \mathcal{C}$.

Consider some $C \in \mathcal{C}^*$.
Since all states in $C$ satisfy $\limwedgeone{\induniv\in\setuniv}\univ(\parobj_{\induniv}) \\ \wedge \qual(\event C^{\max}_{\even}(\parobj_{a_i}))$ for all $a_i \in \setuniv$, none of these states gets removed by Algorithm \ref{proc:sgec}.
Also Algorithm \ref{proc:sgec} computes the maximal ECs such that all states inside each such maximal EC satisfies the property.
Thus $C \in \mathcal{C}$.
Hence $\mathcal{C}^* \subseteq C$.

Now we prove that for all $C\in \mathcal{C}$, there exists $C^*\in\mathcal{C}^*$ such that $C=C^*$. Let us denote by $s_1\ldots s_n$ the states of $C$. For all $i\in[n]$, since $s_i$ belongs the \typeone EC $C$, it also belongs to some maximal \typeone EC $C_i^*$. 
It is easy to see that for all $i,j\in[n]$, we have that  $C_i^*\cup C_j^*\cup C$ is an EC, 
and in general that $C\cup\bigcup_{i\in[n]} C_i^*$ is an EC. By Lemma~\ref{lem:sgecunion2} we have that $C\cup\bigcup_{i\in[0,n]} C_i^*$ is a \typeone, and by maximality of the $C_i^*$ and of $C$ we have that $C=C_1^*=\ldots=C_n^*=C\cup\bigcup_{i\in[n]} C_i^*$ and thus there exists $C^*\in \mathcal{C}^*$ such that $C=C^*$.

For the number of Streett games solved, we look at the steps of Algorithm~\ref{proc:sgec}.
Algorithm \ref{proc:issg} solves a single Streett game.
Thus we compute the number of calls to Algorithm~\ref{proc:issg}.
From Algorithm ~\ref{proc:sgec}, we call Algorithm~\ref{proc:maxsgec} for $O(\abs{\setuniv} \cdot \abs{S})$ times.
For a fixed MDP $\mathcal{M}$, we call Algorithm~\ref{proc:maxsgec} for $\abs{\setuniv}$ times, and we change $\mathcal{M}$ for $O(\abs{S})$ times, every time a state is removed.
From Algorithm~\ref{proc:maxsgec}, for a given $a_i$, we call Algorithm~\ref{proc:maxsgeck} for $\min(k, S)$ times where $k$ is the maximum odd value less than or equal to $\max_{s \in S}\parobj_{a_i}(s)$.
Note that the number of odd values of parity $\parobj_{a_i}$ is no more than the number of states in the MDP.
From Algorithm~\ref{proc:maxsgeck}, every time a state is removed, we call Algorithm \ref{proc:issg}.
Thus the number of Streett games solved is $O(\abs{\setuniv} \cdot \abs{S}^3)$.
\end{proof}

\section{\typetwo end-components}
\label{sec:typetwo}
In this section, we define \typetwo ECs that are a generalization of UGEC defined in \cite{BRR17}. In the setting of the current paper, they are a generalization of \typeone ECs, that have an additional condition.
The main result of the section, Lemma~\ref{lem:SAS}, shows an equivalence between solving the realizability problem for formulas of the form $\limwedgeone{\induniv\in{\setuniv}}\univ(\parobj_{\induniv})\wedge \limwedgeone{\indqual\in{\setqual}}\qual(\parobj_{\indqual})$ and solving the realizability problem for formulas involving sure parity conditions and almost-sure reachability of the {\sf Type\:{\rm II}} end-components.
To do so, the two directions of the proof are done separately. 
For the right to left direction, we are given a strategy $\sigma_T$. First Lemma~\ref{lem:suf} shows that inside a \typetwo EC, there always exists a strategy $\sigma$ for $\limwedgeone{\induniv\in{\setuniv}}\univ(\parobj_{\induniv})\wedge \limwedgeone{\indqual\in{\setqual}}\qual(\parobj_{\indqual})$.
We rely on Lemmas~\ref{lemma_stratTwo1}, and Lemma~\ref{lemma_stratTwo2} to prove that such a $\sigma$ can be constructed, and then we use Lemma~\ref{lemStratTwo3} to show that $\sigma$ satisfies $\limwedgeone{\induniv\in{\setuniv}}\univ(\parobj_{\induniv})\wedge \limwedgeone{\indqual\in{\setqual}}\qual(\parobj_{\indqual})$.
In Lemma~\ref{lem:LeftRight}, we use the strategy $\sigma_T$ to reach a \typetwo EC and thereafter play $\sigma$, completing the proof of this direction.

Lemma~\ref{lemma_rightleft} states the result in the other direction. 
Towards this, we first show that for a strategy satisfying the formula $\limwedgeone{\induniv\in{\setuniv}}\univ(\parobj_{\induniv})\wedge \limwedgeone{\indqual\in{\setqual}}\qual(\parobj_{\indqual})$, all the states that are visited under this strategy satisfy this particular formula.
We then introduce the notion of density in Definition~\ref{def:dens} to relate in Lemma~\ref{lem:bp} the states satisfying the formula $\limwedgeone{\induniv\in{\setuniv}}\univ(\parobj_{\induniv})\wedge \limwedgeone{\indqual\in{\setqual}}\qual(\parobj_{\indqual})$ to the Street-B\"uchi game of Section~\ref{sec:typeone}. 
Since  Street-B\"uchi games are related to \typeone ECs, and \typetwo ECs are extensions of \typeone ECs, it then remains to prove that there exists at least one \typetwo EC in the MDP.
This is done in Lemma~\ref{lem:ugec}.

Given two sets of parity conditions $\{\parobj_{\induniv}\pipe \induniv\in\setuniv\}$ and $\{\parobj_{\indqual}\pipe \indqual\in\setqual\}$, an end-component $C$ of $\MDP$ is \typetwo($\setuniv,\setqual$) 
if the following two properties hold:

\begin{itemize}
	\item $\mathbf{(II_1)}$ $\forall\, s\in C,\, \satisfies{s}{\MDP{\downharpoonright C}} \limwedgeone{\induniv\in\setuniv}\univ(\parobj_{\induniv}) \wedge \limwedgeone{\induniv\in\setuniv} \qual(\event C^{\max}_{\even}(\parobj_{\induniv}))$
	\item $\mathbf{(II_2)}$ $\forall\, s\in C,\, \satisfies{s}{\MDP{\downharpoonright C}} \limwedgeone{\induniv\in\setuniv}\qual(\parobj_{\induniv})\wedge \limwedgeone{\indqual\in\setqual}\qual(\parobj_{\indqual})$
\end{itemize}

We note that condition $\mathbf{(II_1)}$ is exactly the one defining a \typeone($\setuniv,\setuniv$) EC.
We write \typetwo($\setuniv,\setqual$) EC as \typetwo EC when the parity sets are clear from the context.
We introduce the following notations: $\ugec(\MDP,\setuniv,\setqual)$ is the set of all \typetwo($\setuniv,\setqual$) ECs, and $\mathcal{T}_{\mathbf{II},\MDP,\setuniv,\setqual} = {\displaystyle \cup_{U \in \ugec(\MDP,\setuniv,\setqual)} U}$ is the set of states belonging to some \typetwo EC. 

Intuitively, a \typetwo($\setuniv,\setqual$) EC, is a \typeone($\setuniv,\setuniv$) EC
where there also exists an additional strategy staying within the EC and almost-surely satisfying all parity conditions $\parobj_{\induniv}$ and $\parobj_{\indqual}$.
This notion generalizes the notion of a \textit{ultra-good end-component} (UGEC in \cite{BRR17}) which is a \typetwo EC where both $\setuniv$ and $\setqual$ are singletons.
In the sequel, we use the following notation: $\combi{\setuniv}{\setqual}=\limwedgeone{\induniv\in{\setuniv}}\univ(\parobj_{\induniv})\wedge \limwedgeone{\indqual\in{\setqual}}\qual(\parobj_{\indqual})$.

Finding solutions for $\mathbf{(II_2)}$ is done in~\cite{etessami2007multi}, and it is shown in~\cite{BRR17} how to use simple techniques from~\cite{BK08}. Winning strategies for $\mathbf{(II_2)}$ may require either randomization or deterministic finite memory. We showed how to compute ECs satisfying $\mathbf{(II_1)}$ (\typeone ECs) in Lemma~\ref{lem:alg-typeone}.
In the sequel we relate \typetwo ECs to the formula $\combi{\setuniv}{\setqual}$. In particular, from every state belonging to a \typetwo ECs there exists a strategy satisfying the formula $\combi{\setuniv}{\setqual}$. We illustrate this by the following example.

\begin{example}
\label{ex:typetwo}
Consider the example in Figure~\ref{fig:UGEC}. We show a strategy satisfying $\univ(\parobj_1) \wedge \qual(\parobj_2)$ in an MDP that is also a \typetwo EC. 
Indeed every state satisfies condition  $\mathbf{(II_1)}$ when action $a$ is chosen from state $s$ and every state satisfies condition $\mathbf{(II_2)}$ when action $b$ is chosen from state $s$.
		



				
				

\begin{figure}[t]
\centering
\vspace{-15pt}
\hspace{-15pt}
\centering
\scalebox{0.9}{
	\begin{tikzpicture}
		
		\node[player,initial,initial text={}] (s2) at (4,0) {2,2} ;
		\node[player] (s3) at (4,2) {1,1} ;
		\node[player] (s4) at (6,2) {0,0} ;
		\node[player] (s5) at (8,2) {0,5} ;
		\node[player] (s6) at (8,0) {2,1} ;


		\node () at (6,1.3) {\large {\color{blue}$s$}} ;

 \path[-latex]  
				(s2) edge[bend right=20] node[right] {\small $a,1$} (s4)
				(s4) edge[bend right=20] node[above] {\small $b,1$} (s3)
				(s3) edge node[left] {\small $a,0.5$} (s2)
				(s3) edge[bend right=20] node[below] {\small $a,0.5$} (s4)
				(s4) edge[bend left=20] node[above] {\small $a,1$} (s5)
				(s5) edge[bend left=20] node[below] {\small $a,0.5$} (s4)
				(s5) edge node[right] {\small $a,0.5$} (s6)
				(s6) edge[bend left=20] node[left] {\small $a,1$} (s4)
				
				
		;
	\end{tikzpicture}
}
\caption{\label{fig:UGEC}An example of a \typetwo EC}
\end{figure}
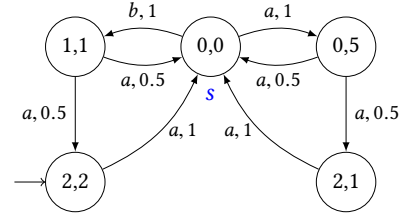
Now the strategy from $s$ to satisfy $\univ(\parobj_1) \wedge \qual(\parobj_2)$ is the following.
In \emph{round $1$}, action $b$ is chosen some $i_0$ times.
If the state with parity ($2,2$) is not visited when action $b$ is chosen $i_0$ times, then action $a$ is chosen until the state with parity ($2,1$) is visited.
Once state ($2,1$) is visited, we proceed to round $2$ in which action $b$ is chosen $i_1=i_0+1$ times.
From every round, we proceed to the next round if either the state with parity ($2,2$) is visited in the current round, or otherwise if the state with parity ($2,1$) is visited  and so on, resulting in action $b$ being chosen $i_j=i_0+j$ times at round $j$.
Now we compute the probability of not choosing action $a$ from $s$ during $n$ rounds.
The probability of not choosing $a$ from $s$ after the first round is $1-2^{-i_0}$.
The probability of not choosing $a$ from $s$ after the first and the second round is $(1-{2^{-i_0}}) \cdot (1-2^{-(i_0+1)})$,
and thus the probability of never choosing $a$ when $n$ rounds already happened is $p(n)=\prod_{j=n}^{\infty}(1-2^{-(i_0+j)})$.
In~\cite{BRR17}, it has been shown that $\lim_{n \rightarrow \infty}p(n)=1$, implying that with probability $1$ action $a$ will eventually stop being played.
\end{example}

The strategies for conditions $\mathbf{(II_1)}$ and $\mathbf{(II_2)}$ in the example above are deterministic memoryless strategies.
However, in general, the strategy for $\mathbf{(II_1)}$ may require finite memory and the strategy for $\mathbf{(II_2)}$ may require memory or randomization.
We illustrate this below. 
In Figure \ref{fig:1U}, note that starting from the state with parity ($1,1$), we need memory to satisfy condition $\mathbf{(II_1)}$.
A strategy from the state with parity ($1,1$) that alternates between actions $b$ and $c$ satisfies condition $\mathbf{(II_1)}$.

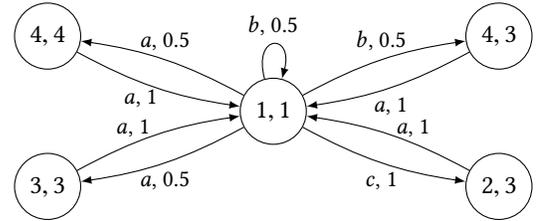
\begin{figure}[t]
\centering
\vspace{5pt}
\scalebox{1}{
	\begin{tikzpicture}
		\node[player] (s) at (0,0) {$1,1$} ;
		\node[player] (s1) at (-3,1) {$4,4$} ;
		\node[player] (s2) at (-3,-1) {$3,3$} ;
		\node[player] (s3) at (3,1) {$4,3$} ;
		\node[player] (s4) at (3,-1) {$2,3$} ;
		
		\path[-latex]  
			(s) edge[bend right=10] node[above] {\small $a$, 0.5} (s1)
			(s) edge[bend left=10] node[below] {\small $a$, 0.5} (s2)
			(s) edge[bend left=10] node[above] {\small $b$, 0.5} (s3)
			(s) edge[bend right=10] node[below] {\small $c$, 1} (s4)
			(s1) edge[bend right=10] node[below, xshift=-.2cm, yshift=.1cm] {\small $a$, 1} (s)
			(s2) edge[bend left=10] node[above, xshift=-.3cm, yshift=-.15cm] {\small $a$, 1} (s)
			(s3) edge[bend left=10] node[below] {\small $a$, 1} (s)
			(s4) edge[bend right=10] node[above, xshift=.3cm, yshift=-.15cm] {\small $a$, 1} (s)
			(s) edge [loop above] node[above] {\small $b$, 0.5} (s)
		;
	\end{tikzpicture}
}
\caption{\label{fig:1U}Finite memory may be needed for condition $\mathbf{(II_1)}$.}
\end{figure}

In Figure \ref{fig:2U}, starting from the state with parity ($0,0$), a randomized memoryless strategy that chooses action $a$ with probability $p>0$ and action $b$ with probability $1-p$ will visit both of the other two states infinitely often almost-surely, and hence a randomized memoryless strategy suffices here.
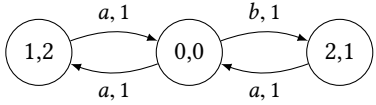
\begin{figure}[t]
\centering
\vspace{-15pt}
\hspace{-15pt}
\centering
\scalebox{1}{
	\begin{tikzpicture}
		\node[player] (sinit) at (0,0) {1,2} ;
		\node[player] (s1) at (2,0) {0,0} ;
		\node[player] (s2) at (4,0) {2,1} ;

 \path[-latex]  (sinit) edge[bend left=20] node[above] {\small $a,1$} (s1)
				(s1) edge[bend left=20] node[below] {\small $a,1$} (sinit)				
				(s1) edge[bend left=20] node[above] {\small $b,1$} (s2)
				(s2) edge[bend left=20] node[below] {\small $a,1$} (s1)				
		;
	\end{tikzpicture}
}
\caption{\label{fig:2U}Randomisation or finite-memory may be needed for condition $\mathbf{(II_2)}$.}
\end{figure}

We now state the main result of this section.

\begin{lemma}
\label{lem:SAS}
Given an MDP $\MDP$, and two sets of parity conditions $\{\parobj_{\induniv}\pipe \induniv\in\setuniv\}$ and $\{\parobj_{\indqual}\pipe \indqual\in\setqual\}$, for all states $s_0$, we have $\satisfies{s_0}{\MDP}\combi{\setuniv}{\setqual}$ iff $\satisfies{s_0}{\MDP}\limwedgeone{\induniv\in\setuniv}\univ(\parobj_{\induniv})\wedge \qual({\event \mathcal{T}_{\mathbf{II},\MDP,\setuniv,\setqual}})$.
A strategy that enforces such conditions may require infinite memory.
\end{lemma}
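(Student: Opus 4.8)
The plan is to prove the two directions of the equivalence separately, decomposing each into the ingredients announced in the section overview. The conjunct $\limwedgeone{\induniv\in\setuniv}\univ(\parobj_{\induniv})$ is a sure (path-universal) requirement that is common to both sides, so in each direction the real content concerns how the $\qual$-part and the reachability of $\mathcal{T}_{\mathbf{II},\MDP,\setuniv,\setqual}$ relate to $\limwedgeone{\indqual\in\setqual}\qual(\parobj_{\indqual})$.

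For the right-to-left direction, assume a strategy $\sigma_T$ from $s_0$ witnessing $\limwedgeone{\induniv\in\setuniv}\univ(\parobj_{\induniv}) \wedge \qual(\event \mathcal{T}_{\mathbf{II},\MDP,\setuniv,\setqual})$. The crux is a sufficiency statement: from every state of a \typetwo($\setuniv,\setqual$) EC $C$ there is a strategy $\sigma_C$, staying inside $C$, with $\satisfy{s}{\sigma_C}{\MDP{\downharpoonright C}}\combi{\setuniv}{\setqual}$. I would build $\sigma_C$ by a rounds-based interleaving (formalized through a couple of intermediate lemmas): in rounds of strictly increasing length, alternate a finite-memory witness for $\mathbf{(II_1)}$ — used to guarantee that each round visits every $C^{\max}_{\even}(\parobj_{\induniv})$, and which by condition $\mathbf{(I_1)}$ still surely satisfies $\limwedgeone{\induniv\in\setuniv}\univ(\parobj_{\induniv})$ should the play ever get trapped in it — with a witness for $\mathbf{(II_2)}$ run for the round's prescribed bounded number of steps, the round-length schedule being chosen exactly as in Example~\ref{ex:typetwo} so that the infinite product of the per-round ``no-fallback'' probabilities equals $1$. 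On every path the sets $C^{\max}_{\even}(\parobj_{\induniv})$ are visited infinitely often (or the play is trapped in an $\mathbf{(II_1)}$-witness), and since those states carry an even priority that strictly exceeds every odd priority in $C$, the maximum priority seen infinitely often is even for each $\parobj_{\induniv}$, giving $\limwedgeone{\induniv\in\setuniv}\univ(\parobj_{\induniv})$ surely; and with probability one the play eventually follows the $\mathbf{(II_2)}$-witness cofinally, so $\limwedgeone{\induniv\in\setuniv}\qual(\parobj_{\induniv}) \wedge \limwedgeone{\indqual\in\setqual}\qual(\parobj_{\indqual})$ holds almost surely. The global strategy then plays $\sigma_T$ until it enters $\mathcal{T}_{\mathbf{II}}$, say at a state of a \typetwo EC $C$, and switches to $\sigma_C$; the sure conjunct is preserved across the switch ($\sigma_T$ on every prefix, $\sigma_C$ on every suffix), and $\qual(\event\mathcal{T}_{\mathbf{II}})$ ensures the switch occurs almost surely.

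For the left-to-right direction, fix $\sigma$ with $\satisfy{s_0}{\sigma}{\MDP}\combi{\setuniv}{\setqual}$; it remains to show that almost every $\sigma$-path eventually enters $\mathcal{T}_{\mathbf{II},\MDP,\setuniv,\setqual}$. First, every state $s$ reachable under $\sigma$ again satisfies $\combi{\setuniv}{\setqual}$: conditioning the run on reaching $s$, the $\univ$ part passes to every continuation, and since reaching $s$ has positive probability and each $\parobj$ is a tail event, the conditional probabilities of the $\qual$-parts stay $1$. By Proposition~\ref{prop:long_run}, almost every $\sigma$-path $\pi$ has $\inf(\pi)$ equal to an EC $D$, which is then visited infinitely often, so a suffix strategy witnesses $\combi{\setuniv}{\setqual}$ from every state of $D$ using only edges of $\MDP{\downharpoonright D}$; in particular $\limwedgeone{\induniv\in\setuniv}\univ(\parobj_{\induniv})$ holds surely inside $D$, which by the pruning remark following Lemma~\ref{lem:implies_is_univ} makes the Streett-B\"uchi machinery of Section~\ref{sec:typeone} applicable. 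Introducing a notion of \emph{density} — the $\liminf$ frequency with which the target sets $C^{\max}_{\even}(\parobj_{\induniv})$ are visited — I would show that $D$ must contain a \typetwo EC: $\mathbf{(II_2)}$ is witnessed by the almost-sure satisfaction of all $\qual(\parobj_{\induniv})$ and $\qual(\parobj_{\indqual})$ along the suffix (qualitative multi-parity results of~\cite{etessami2007multi,BK08}), while $\mathbf{(II_1)}$ — being a \typeone($\setuniv,\setuniv$) EC — follows because positive density towards each $C^{\max}_{\even}(\parobj_{\induniv})$ produces a winning region in the Streett-B\"uchi game of Lemma~\ref{thm:reach-Almagor}, hence a nonempty maximal \typeone EC computed by Lemma~\ref{lem:alg-typeone}, which after the union-closure of Lemma~\ref{lem:sgecunion2} can be taken inside $D$. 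Hence every such $D$ meets $\mathcal{T}_{\mathbf{II}}$, and since almost every $\sigma$-path settles in such a $D$, $\qual(\event\mathcal{T}_{\mathbf{II},\MDP,\setuniv,\setqual})$ holds.

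I expect the main obstacle to be the extraction in the left-to-right direction: it is not enough that $\mathbf{(II_1)}$ and $\mathbf{(II_2)}$ are each separately satisfiable somewhere in $D$ — they must be simultaneously satisfiable on one common sub-EC, and pinning that sub-EC down is precisely what the density argument combined with the Streett-B\"uchi reduction achieves. The closing remark that infinite memory may be required is witnessed by an Example~\ref{ex:typetwo}-type MDP: the unbounded round-length schedule $i_j = i_0 + j$ in $\sigma_C$ cannot be realized with finitely many modes while keeping the fallback-probability product equal to $1$.
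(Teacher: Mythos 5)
Your right-to-left direction is essentially the paper's argument: reach a \typetwo EC via $\sigma_T$, then inside it interleave a finite-memory $\mathbf{(II_1)}$-witness with bounded runs of an $\mathbf{(II_2)}$-witness over rounds of increasing length, and conclude by prefix-independence. (One phrasing nit: what makes this work is not that ``the infinite product of the per-round no-fallback probabilities equals $1$'' --- it cannot --- but that the failure probabilities $2^{-i}$ are summable, so almost surely only finitely many rounds trigger the fallback; equivalently the tail products tend to $1$, as computed in Example~\ref{ex:typetwo}.)

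The left-to-right direction has a genuine gap, located exactly where you predicted the obstacle would be. First, the notion of density you propose (a $\liminf$ visit frequency) is not the one the argument needs: the Streett-B\"uchi reduction requires the combinatorial property that \emph{after every finite prefix of the strategy tree there is some continuation that visits the target set}, because the game strategy is built by labelling each prefix with such a continuation and following it in the $(s,a,1)$-states; a positive visit frequency on almost every path neither implies nor is implied by this. Second, and more seriously, your extraction of a \typetwo EC from $D=\inf(\pi)$ does not go through: the suffix of $\sigma$ taken at a state of $D$ is a strategy on all of $\MDP$, and nothing forces it to stay inside $D$ --- the sure satisfaction of the $\univ(\parobj_{\induniv})$ may depend on retaining the option to leave $D$ on adversarial branches. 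Hence you cannot conclude that $D$ (or any identifiable sub-EC of it) satisfies $\mathbf{(II_1)}$ \emph{in} $\MDP{\downharpoonright D}$, which is what membership in $\mathcal{T}_{\mathbf{II},\MDP,\setuniv,\setqual}$ requires. The paper resolves this with a minimality argument over a different family of sets: it considers all sets $\mathtt{States}(\paths^{\MDPtoMC{\MDP}{\sigma}}(s))$ for winning pairs $(s,\sigma)$ and takes a $\subseteq$-minimal element $R$. Minimality together with Lemma~\ref{pr-closed-a-q} forces $R$ to be an EC whose witnessing strategy never leaves $R$, and forces \emph{every} nonempty subset of $R$ to be dense (in the combinatorial sense) in its strategy tree --- otherwise a subtree avoiding some state would yield a strictly smaller element of the family. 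That universal density is what feeds Lemma~\ref{lem:bp} to establish $\mathbf{(II_1)}$ and to show $R^{\max}_{\even}(\parobj_{\induniv})\neq\varnothing$. You would need to replace your $\inf(\pi)$-based extraction with this (or an equivalent) argument. Finally, your closing remark only shows that one particular strategy uses infinite memory, not that every winning strategy must; the necessity claim is deferred in the paper to~\cite{BRR17}.
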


We outline a sketch of proof of Lemma~\ref{lem:SAS}. We show each direction separately. We fix some $s_0\in S$ and use it throughout this section. The necessity of using infinite memory is proved in Theorem 18 of~\cite{BRR17}, in the subcase where $\setuniv$ and $\setqual$ are singletons.

We start with the right-to-left implication.
Since $\satisfies{s_0}{\MDP}\limwedgeone{\induniv\in\setuniv}\univ(\parobj_{\induniv})\wedge \qual({\event \mathcal{T}_{\mathbf{II},\MDP,\setuniv,\setqual}})$, there exists a witness strategy $\sigma_T$. 
By definition, paths in $\paths^{\MDPtoMC{\MDP}{\sigma_T}}(s_0)$ eventually reaches some \typetwo EC $C\in\ugec(\MDP,\setuniv,\setqual)$ with probability one. 
Since $C$ is a \typetwo EC, there exist two strategies, $\sigma_1$ and $\sigma_2$ respectively ensuring condition $\mathbf{(II_1)}$ and $\mathbf{(II_2)}$.
In what follows we define a strategy $\sigma_C$ from $\sigma_1$ and $\sigma_2$ such that for all $s\in C$ we have $\satisfy{s}{\sigma_C}{\MDP}\combi{\setuniv}{\setqual}$.
Finally, we compute a strategy $\sigma$ such that from $s_0$, we play $\sigma_T$ until reaching a \typetwo EC $C$, where we play $\sigma_C$.
This strategy $\sigma$ satisfies $\combi{\setuniv}{\setqual}$ from $s$.

We construct the infinite memory strategy $\sigma_C$ from $\sigma_1$ and $\sigma_2$. 
To do so, we first show some technical lemmas ensuring $\sigma_C$ can be effectively computed.


\begin{lemma}
\label{lemma_stratTwo1}
Given an EC $C$, it can be decided in polynomial time if condition $\mathbf{(II_2)}$ holds in $C$, and a randomized memoryless strategy suffices.
\end{lemma}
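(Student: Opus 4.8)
The plan is to reduce condition $\mathbf{(II_2)}$, which asks for a single strategy in $\MDP{\downharpoonright C}$ that almost-surely satisfies the conjunction of all parity conditions $\{\parobj_{\induniv} \pipe \induniv\in\setuniv\} \cup \{\parobj_{\indqual} \pipe \indqual\in\setqual\}$, to the known results on MDPs with multiple qualitative parity (equivalently, generalized Rabin/Streett) objectives. First I would recall that, by Proposition~\ref{prop:long_run}, under any strategy the set of infinitely-visited states almost-surely forms an end-component of $\MDP{\downharpoonright C}$; hence almost-surely satisfying $\bigwedge_{x} \parobj_x$ is equivalent to almost-surely reaching and staying forever in a sub-end-component $C'\subseteq C$ that is ``good'' for every parity condition simultaneously, i.e.\ an end-component in which, for each $x$, the maximal priority $\parobj_x$-value appearing in $C'$ is even. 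Call such a $C'$ a \emph{winning end-component}. The condition $\mathbf{(II_2)}$ then holds for \emph{all} $s\in C$ iff every state of $C$ lies in the almost-sure attractor (in the MDP sense, i.e.\ with probability one) of the union of all winning end-components, and since $C$ itself is strongly connected this is equivalent to: there exists at least one winning end-component inside $C$ (because from a strongly connected sub-MDP one can reach any reachable sub-end-component with positive, and after enough retries probability-one, frequency — this is exactly the style of argument used in Example~\ref{ex:typetwo} and attributed to~\cite{BRR17,BK08}).

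Next I would invoke the algorithmic result of~\cite{etessami2007multi} (and its reformulation via~\cite{BK08} mentioned in the paragraph preceding the lemma): the set of maximal winning end-components for a conjunction of parity objectives can be computed in polynomial time. Concretely, one performs a recursive end-component decomposition: compute the MEC decomposition of $\MDP{\downharpoonright C}$; within each MEC, for each parity condition $\parobj_x$ look at the largest priority $m_x$ occurring; if some $m_x$ is odd, remove the states of that priority (and recompute), since no winning end-component can contain them while having $m_x$ as its top $\parobj_x$-value — more precisely one branches on guessing, for each $x$, the top even value, but since the conjunction of parity conditions is a Streett condition, the standard polynomial Streett-in-MDP algorithm applies directly. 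This yields, in time polynomial in $\abs{C}$ and in the number and index of the parity conditions, the (possibly empty) union $W$ of winning end-components; then $\mathbf{(II_2)}$ holds iff $W\neq\varnothing$, which is checkable in polynomial time.

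For the memoryless-randomization claim, I would argue as follows. Fix a winning end-component $C'$ witnessing $\mathbf{(II_2)}$. Inside $C'$ define the memoryless randomized strategy that, at each state, plays uniformly at random over all actions of $C'$ available there (those keeping the play in $C'$). Under this strategy the induced Markov chain on $C'$ is irreducible, so with probability one every state of $C'$ — in particular a state of maximal $\parobj_x$-priority, which is even, for each $x$ — is visited infinitely often, and no state outside $C'$ is ever visited; hence every parity condition is satisfied almost surely. To cover all of $C$ (not just $C'$), combine this with a memoryless randomized strategy that plays uniformly over all enabled actions of $C$: from strong connectedness of $C$ this reaches $C'$ with positive probability, and a uniform mixture of ``explore $C$'' and ``commit to $C'$'' behaviour — still memoryless and randomized — reaches $C'$ with probability one and then behaves as above. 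The one subtlety is that a naive uniform strategy over all of $C$ need not stay in any single winning end-component; the clean fix is to first restrict attention to the sub-MDP induced by $W$ (the union of winning ECs together with the states that can reach it while the relevant guesses are respected) and play uniformly there, which is still memoryless.

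I expect the main obstacle to be the bookkeeping in the reduction to Streett/generalized-Rabin winning end-components: one must be careful that a conjunction of parity conditions translates to a Streett condition of size polynomial in the input (number of pairs bounded by the sum of the indices), so that the polynomial-time Streett-MDP algorithm of~\cite{etessami2007multi} genuinely gives polynomial time here, and that ``$\mathbf{(II_2)}$ holds for all $s\in C$'' is correctly equated with ``a winning end-component exists in $C$'' using the strong connectedness of $C$ — essentially repeating the probability-one reachability argument already sketched in Example~\ref{ex:typetwo}.
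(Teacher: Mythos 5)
Your proposal is correct and follows essentially the same route as the paper: characterize $\mathbf{(II_2)}$ by the existence of a sub-end-component $D\subseteq C$ in which, for every condition, the maximal priority is even (the paper's ``$D^{\max}_{\even}(\parobj_i)\neq\varnothing$ for all $i$''), decide its existence by iteratively removing attractors of maximal odd-priority states, and witness it with a uniform randomized memoryless strategy. If anything, your handling of the subtlety that a single memoryless strategy must both reach $D$ from all of $C$ and then stay inside $D$ is more explicit than the paper's.
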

\begin{proof}

Given an EC $C$, we claim that (i) the existence of a sub-EC $D$ such that for all $i\in\setuniv\cup\setqual$ we have $D^{\max}_{\even}(\parobj_i) \neq \varnothing$ is not only necessary but also sufficient for $C$ to satisfy condition $\mathbf{(II_2)}$, and (ii) the existence of such a set can be decided in polynomial time.

For (i), it is obvious that the existence of such an EC is necessary. To prove it is also sufficient, we assume we have a sub-EC $D$ such that for all $i\in\setuniv\cup\setqual$ we have $D^{\max}_{\even}(\parobj_i) \neq \varnothing$. For all sub-EC of $C$, in particular for $D$, we can build a uniform randomized memoryless strategy $\sigma$ such that 
\[
\prob_{\MDPtoMC{\MDP_{\downharpoonright C}}{\sigma}}\left[ \{\pi \in \paths^{\MDPtoMC{\MDP_{\downharpoonright C}}{\sigma}}(s) \mid \infny(\pi) = D\}\right] = 1.
\]
Since for all $i\in\setuniv\cup\setqual$ we have $D^{\max}_{\even}(\parobj_i) \neq \varnothing$, we thus have that all $\parobj_{\induniv}$ and $\parobj_{\indqual}$ are almost-surely satisfied by $\sigma$, hence that $\sigma$ is a witness for $\satisfies{s}{\MDP_{\downharpoonright C}} \limwedgeone{\induniv\in\setuniv}\qual(\parobj_{\induniv})\wedge \limwedgeone{\indqual\in\setqual}\qual(\parobj_{\indqual})$, and that condition $\mathbf{(II_2)}$ holds in $C$.

It remains to check the existence of such a set $D \subseteq C$ in polynomial time. First, we check if all $C^{\max}_{\even}(\parobj_{\induniv}) \neq \varnothing$ and $C^{\max}_{\even}(\parobj_{\indqual}) \neq \varnothing$. If this holds, then $D = C$ and the answer is $\yes$ (it takes linear time obviously). If it does not hold, then we compute the sets
$$
	C^{\max}_{\odd}(\parobj_i) = \big\lbrace s\in C \mid (\parobj_i(s) \text{ is odd}) \wedge $$
	$$(\forall\, s'\in C,\, \parobj_i(s') \text{ is even } \implies \parobj_i(s') < \parobj_i(s))\big\rbrace
	$$
and we iterate this procedure in the sub-EC $C' \subset C$ defined as
\[
C' = \mathtt{Attr}_{1}\big(\bigcup_{i\in\setuniv\cup\setqual}C^{\max}_{\odd}(\parobj_i)\big).
\]
It is easy to see that a suitable $D$ exists if and only if this procedure stops before $C' = \varnothing$. In addition, this procedure takes at most $\vert C \vert$ iterations (as we remove at least one state at each step) and each iteration takes linear time. This implies our result and concludes our proof.
\end{proof}

\begin{lemma}
\label{lemma_stratTwo2}
Let $C$ be an EC of $\MDP$. If condition $\mathbf{(II_2)}$ holds then there exists a randomized memoryless witness strategy $\sigma_{2}$ and a sub-EC $D \subseteq C$ such that for all $i\in\setuniv\cup\setqual$ we have $D^{\max}_{\even}(\parobj_{i}) \neq \varnothing$, and for all $s \in C$, we have that
\(
\prob_{\MDPtoMC{\MDP_{\downharpoonright C}}{\sigma_{II_2}}}\left[ \{\pi \in \paths^{\MDPtoMC{\MDP_{\downharpoonright C}}{\sigma_{II_2}}}(s) \mid \infny(\pi) = D\}\right] = 1.
\)
\end{lemma}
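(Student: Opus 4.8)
The plan is to build on Lemma~\ref{lemma_stratTwo1} and essentially extract from its proof the concrete witness strategy. By Lemma~\ref{lemma_stratTwo1}, condition $\mathbf{(II_2)}$ holds in $C$ if and only if there exists a sub-EC $D \subseteq C$ such that $D^{\max}_{\even}(\parobj_i) \neq \varnothing$ for all $i \in \setuniv \cup \setqual$; so the first step is simply to fix such a $D$, which exists by hypothesis. The core of the argument is then the standard fact (already invoked inside the proof of Lemma~\ref{lemma_stratTwo1}) that for \emph{any} sub-EC $D$ of $\MDP_{\downharpoonright C}$ one can build a randomized memoryless strategy $\sigma_2$ whose set of states visited infinitely often is almost surely exactly $D$, from \emph{every} starting state $s \in C$. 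I would spell this strategy out: inside $D$, play uniformly at random over the actions $A(s)$ that witness $D$ being an EC (these keep the play in $D$ and, by strong connectedness, visit every state of $D$ with probability one); from a state $s \in C \setminus D$, play an action from a fixed memoryless attractor strategy $\mathtt{Attr}_1(D)$ that drives the play towards $D$. Since $C$ is finite, the attractor is reached with probability one in finitely many steps, after which the play stays in $D$ forever and, by the Borel--Cantelli / recurrence argument for finite Markov chains, visits every state of $D$ infinitely often almost surely.

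Concretely the key steps, in order, are: (1) invoke Lemma~\ref{lemma_stratTwo1} to obtain $D \subseteq C$ with $D^{\max}_{\even}(\parobj_i) \neq \varnothing$ for all $i \in \setuniv \cup \setqual$; (2) choose a function $A_D : D \to 2^{\actions}$ witnessing that $D$ is an EC, and define $\sigma_2$ to be uniform over $A_D(s)$ on $D$ and to follow a memoryless attractor strategy to $D$ on $C \setminus D$; (3) observe that $\sigma_2$ is randomized memoryless by construction; (4) argue that under $\sigma_2$, from any $s \in C$, the induced Markov chain $\MDPtoMC{\MDP_{\downharpoonright C}}{\sigma_2}$ reaches the attractor of $D$ with probability one and then remains in $D$; (5) conclude that the single bottom strongly connected component reachable inside $D$ is all of $D$ (because $A_D$ makes $D$ strongly connected and $\sigma_2$ assigns positive probability to all those actions), hence $\infny(\pi) = D$ almost surely, giving the displayed equation.

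The only point requiring care — and the one I expect to be the main obstacle to state cleanly rather than to prove — is step (5): showing that restricting to the actions in $A_D(s)$ with positive probability does not accidentally create a proper bottom SCC strictly inside $D$. This is handled by the definition of an end-component: $(D, A_D)$ strongly connected means that using exactly the edges arising from actions in $A_D$ the induced graph on $D$ is strongly connected, so the support graph of $\MDPtoMC{\MDP_{\downharpoonright C}}{\sigma_2}$ restricted to $D$ is strongly connected and closed, i.e. $D$ itself is the unique BSCC reachable once the play is absorbed in $D$. Everything else is routine: finiteness of $C$ gives the almost-sure absorption, and standard Markov chain theory gives that every state of a finite BSCC is visited infinitely often with probability one, so that $\infny(\pi) = D$ almost surely, as required. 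Since $D^{\max}_{\even}(\parobj_i) \neq \varnothing$ for each $i \in \setuniv \cup \setqual$, this $\sigma_2$ is in particular the witness strategy for $\mathbf{(II_2)}$ promised in the statement.
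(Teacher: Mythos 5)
Your proposal is correct and follows essentially the same route as the paper: both obtain a sub-EC $D\subseteq C$ with $D^{\max}_{\even}(\parobj_i)\neq\varnothing$ for all $i\in\setuniv\cup\setqual$ (the paper extracts it from the long-run behaviour of the $\mathbf{(II_2)}$-witness via Proposition~\ref{prop:long_run}, you from the characterization inside the proof of Lemma~\ref{lemma_stratTwo1}; these are interchangeable) and then build $\sigma_2$ as a randomized memoryless strategy that reaches $D$ almost surely and plays uniformly over the EC-witnessing actions inside it, making $D$ the unique reachable BSCC. The only caution is terminological: outside $D$ you need a memoryless strategy achieving \emph{almost-sure} reachability of $D$ (which exists since $C$ is an EC), not the two-player attractor $\mathtt{Attr}_1(D)$ in the sure-reachability sense defined in the preliminaries, which need not cover all of $C$.
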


\begin{proof}[Proof of Lemma~\ref{lemma_stratTwo2}]
Let $C$ be an EC that satisfies condition $\mathbf{(II_2)}$, and a corresponding witness strategy $\sigma$ (whose existence is granted by Lemma~\ref{lemma_stratTwo1}). By Proposition~\ref{prop:long_run}, we know that for all states $s \in C$,
\[
\prob_{\MDPtoMC{\MDP_{\downharpoonright C}}{\sigma}}\left[ \{\pi \in \paths^{\MDPtoMC{\MDP_{\downharpoonright C}}{\sigma}}(s) \mid \infny(\pi) \in EC(\MDP_{\downharpoonright C})\}\right] = 1,
\]
where $EC(\MDP_{\downharpoonright C})$ is the set of ECs of $\MDP_{\downharpoonright C}$. We claim that for all $D \in \MDP_{\downharpoonright C}$ for which 
\[
\prob_{\MDPtoMC{\MDP_{\downharpoonright C}}{\sigma}}\left[ \{\pi \in \paths^{\MDPtoMC{\MDP_{\downharpoonright C}}{\sigma}} \mid \infny(\pi) = D\}\right] > 0,
\]
and for all $i\in\setuniv\cup\setqual$, we necessarily have that $D^{\max}_{\even}(\parobj_i) \neq \varnothing$. Indeed, assume this is false for $\parobj_i$. Then, with probability strictly greater than zero, $\sigma$ induces a path $\pi$ such that $\max_{s' \in \infny(\pi)} \parobj_i(s')$ is odd (as the maximal priority in $\infny(\pi) = D$ is odd). This contradicts the fact that $\sigma$ is a witness strategy for $\satisfies{s}{\MDP_{\downharpoonright C}} \limwedgeone{\induniv\in\setuniv}\qual(\parobj_{\induniv})\wedge \limwedgeone{\indqual\in\setqual}\qual(\parobj_{\indqual})$.

We know that such a $D$ exists. As every state of an EC can almost-surely be visited using a uniform randomized memoryless strategy~\cite{BK08}, we can conceive a witness strategy $\sigma_{2}$ that reaches and stays only in D with probability $1$.
\end{proof}


\begin{proposition}[Optimal reachability~\cite{BK08}]
\label{prop:opti_reach}
Given an MDP $\MDP=( S,E,Act,\prob)$, and a target set $T \subseteq S$, we can compute for each state $s \in S$ the maximal probability $v^{\ast}_{s}$ to reach $T$, in polynomial time. There is an optimal deterministic memoryless strategy $\sigma^{\ast}$ that enforces $v^{\ast}_{s}$ from all $s \in S$.
Now, fix $s \in S$ and $c \in \mathbb{Q}$ such that $c < v^{\ast}_{s}$. Then there exists $k \in \mathbb{N}$ such that by playing $\sigma^{\ast}$ from $s$ for $k$ steps, we reach $T$ with probability larger than $c$.
\end{proposition}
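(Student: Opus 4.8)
The plan is to assemble this statement from the standard theory of optimal reachability in MDPs, citing~\cite{BK08} for the parts that are genuinely technical and spelling out only the finite‑horizon approximation, which is the one part that is really used downstream.

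\emph{Characterising $v^{\ast}$ and computing it.} First I would recall that the vector $(v^{\ast}_{s})_{s \in S}$ is the least fixed point in $[0,1]^{S}$ of the Bellman operator $\mathcal{F}$ given by $\mathcal{F}(x)_{s} = 1$ if $s \in T$ and $\mathcal{F}(x)_{s} = \max_{a \in Act(s)} \sum_{s' \in S} \prob(s,a,s')\, x_{s'}$ otherwise. Equivalently, $v^{\ast}$ is the (componentwise minimal, hence $\sum$-minimal) solution of the linear program that minimises $\sum_{s \in S} x_{s}$ subject to $x_{s} = 1$ for $s \in T$, $0 \le x_{s} \le 1$ for all $s$, and $x_{s} \ge \sum_{s' \in S} \prob(s,a,s')\, x_{s'}$ for every $s \notin T$ and $a \in Act(s)$. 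This linear program has polynomial size, so it is solvable in polynomial time, yielding $v^{\ast}$. The minimisation is precisely what pins down the \emph{least} fixed point and forces $v^{\ast}_{s} = 0$ on the set of states from which $T$ is unreachable under all strategies (that set is also directly computable by graph reachability).

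\emph{The optimal deterministic memoryless strategy.} Let $S_{1} = \{ s \mid v^{\ast}_{s} = 1 \}$ and $S_{0} = \{ s \mid v^{\ast}_{s} = 0 \}$, both computable by attractor‑style graph computations. I would take $\sigma^{\ast}$ to be: on $S_{1}$, a deterministic memoryless strategy reaching $T$ almost surely (one exists by the graph characterisation of $S_{1}$); on $S \setminus (S_{0} \cup S_{1})$, at each state $s$ an action $a$ attaining $v^{\ast}_{s} = \sum_{s'} \prob(s,a,s')\, v^{\ast}_{s'}$, chosen moreover so that a path to $T$ stays available. Writing $p_{s} = \prob^{s}_{\MDPtoMC{\MDP}{\sigma^{\ast}}}(\event T)$ for the reachability probability in the induced Markov chain, $p$ solves the fixed‑point equations of $\mathcal{F}$ restricted to the actions picked by $\sigma^{\ast}$, whence $p \le v^{\ast}$; and the ``keep a path to $T$'' refinement guarantees that on $S \setminus S_{0}$ the target is actually reached with probability $v^{\ast}_{s}$, so $p = v^{\ast}$. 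This part is exactly as in~\cite{BK08} and I would just invoke it.

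\emph{The finite‑horizon bound, and the main obstacle.} For the last assertion, fix $s$ and $c < v^{\ast}_{s}$. In the Markov chain $\MDPtoMC{\MDP}{\sigma^{\ast}}$ consider the events $\event^{\le k} T$ (``one of the first $k+1$ visited states lies in $T$'') for $k \ge 0$: they are nested increasing with $\bigcup_{k \ge 0} \event^{\le k} T = \event T$. By continuity from below of the probability measure $\prob^{s}_{\MDPtoMC{\MDP}{\sigma^{\ast}}}$ we get $\lim_{k \to \infty} \prob^{s}_{\MDPtoMC{\MDP}{\sigma^{\ast}}}(\event^{\le k} T) = \prob^{s}_{\MDPtoMC{\MDP}{\sigma^{\ast}}}(\event T) = v^{\ast}_{s} > c$, so some $k$ satisfies $\prob^{s}_{\MDPtoMC{\MDP}{\sigma^{\ast}}}(\event^{\le k} T) > c$, which is precisely the claim that playing $\sigma^{\ast}$ from $s$ for $k$ steps reaches $T$ with probability more than $c$. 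The only delicate point in the whole argument is the previous paragraph: proving that a memoryless strategy made of Bellman‑maximising actions is genuinely optimal and does not get trapped in a $T$‑avoiding cycle even at states with $v^{\ast} > 0$. That is handled in~\cite{BK08} via the ``path to $T$ available'' choice (equivalently, an induction along a layering of $S \setminus S_{0}$ by distance to $S_{1}$); everything else here is routine.
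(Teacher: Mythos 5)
Your proposal is correct: the paper gives no proof of Proposition~\ref{prop:opti_reach} at all (it is imported wholesale from~\cite{BK08}), and your reconstruction is the standard argument --- LP/least-fixed-point characterisation of $v^{\ast}$, the Bellman-maximising memoryless strategy refined to keep a path to $T$ available, and the finite-horizon bound via continuity from below of the measure on the nested events $\event^{\le k}T$. You also correctly isolate the only genuinely delicate step (that locally optimal actions alone can get trapped in a value-preserving cycle) and defer it to~\cite{BK08}, which is exactly what the paper does.
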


\begin{lemma}\label{lemStratTwo3}
Let $C$ be an EC of\ \ $\MDP$. Every strategy $\sigma_{2}$ satisfies the following property: for all $s \in C$, for all $\varepsilon > 0$, there exists $n \in \mathbb{N}$ such that:

$$
\prob_{\MDPtoMC{\MDP_{\downharpoonright C}}{\sigma_{2}}}[\lbrace \rho \in \fpaths^{\MDPtoMC{\MDP_{\downharpoonright C}}{\sigma_{2}}}\mid \forall \induniv\in\setuniv ,\  \exists k_{\induniv}\leq n ,\ $$
$$\rho(k_{\induniv})\in C^{\max}_{\even}(\parobj_{a})\rbrace] \geq 1 - \varepsilon. 
$$
\end{lemma}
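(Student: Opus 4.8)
The plan is to prove the statement for the $\mathbf{(II_2)}$-witness strategy $\sigma_2$ supplied by Lemma~\ref{lemma_stratTwo2}: it is randomized memoryless and comes with a sub-EC $D\subseteq C$ such that $D^{\max}_{\even}(\parobj_i)\neq\varnothing$ for every $i\in\setuniv\cup\setqual$ and $\prob_{\MDPtoMC{\MDP_{\downharpoonright C}}{\sigma_2}}[\{\pi\mid\infny(\pi)=D\}]=1$ from every state of $C$. Since $\sigma_2$ is memoryless it induces a finite Markov chain carried by (a subset of) $C$ in which $D$ is the only bottom strongly connected component reachable from anywhere, so every state of $D$ is visited infinitely often along almost every path, from any starting state. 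I would reduce the lemma to the single assertion of \emph{almost-sure reachability}: for every $\induniv\in\setuniv$ and every $s\in C$, a $\sigma_2$-path from $s$ visits $C^{\max}_{\even}(\parobj_\induniv)$ with probability one. In the context where the lemma is used, $C$ is a \typetwo EC hence a \typeone EC, so each $C^{\max}_{\even}(\parobj_\induniv)$ is nonempty; moreover $\sigma_2$ certifies $\qual(\parobj_\induniv)$ and $\infny(\pi)=D$ almost surely, so $\max_{s'\in D}\parobj_\induniv(s')$ is even, which is the fact one would leverage (together with the \typeone structure of $C$) to show that the recurrence class $D$ meets $C^{\max}_{\even}(\parobj_\induniv)$. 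Equivalently, one may simply prepend to $\sigma_2$ a finite deterministic tour of the sets $C^{\max}_{\even}(\parobj_\induniv)$ — each reachable inside the EC $C$ — which changes neither $\infny(\pi)$ nor the $\qual$-atoms, and use that strategy in the sequel when assembling $\sigma_C$.

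Granting almost-sure reachability of each individual set, the remainder is routine. First, as $\setuniv$ is finite, the event that the path eventually visits every $C^{\max}_{\even}(\parobj_\induniv)$, $\induniv\in\setuniv$, is a finite intersection of probability-one events and so has probability one from every $s\in C$. Then, fixing $s\in C$ and $\varepsilon>0$, let $A_n$ be the event that the path visits every $C^{\max}_{\even}(\parobj_\induniv)$, $\induniv\in\setuniv$, within its first $n$ steps — precisely the event in the statement. The $A_n$ increase with $n$ and $\bigcup_n A_n$ is the probability-one event just described, so by continuity from below $\prob_{\MDPtoMC{\MDP_{\downharpoonright C}}{\sigma_2}}[A_n]\to1$, whence some $n$ satisfies $\prob_{\MDPtoMC{\MDP_{\downharpoonright C}}{\sigma_2}}[A_n]\ge1-\varepsilon$, which is the claim. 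If one wants a bound $n$ valid uniformly over all $s\in C$, I would instead invoke Proposition~\ref{prop:opti_reach} on each of the finitely many targets $C^{\max}_{\even}(\parobj_\induniv)$, obtaining for each — since it is reached with probability one — a number of steps after which it is reached with probability at least $1-\varepsilon/\abs{\setuniv}$ from every state, and conclude by a union bound with $n$ the largest of these numbers.

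The main obstacle is the almost-sure reachability step: $\sigma_2$ was built only to secure the $\qual$-atoms of $\mathbf{(II_2)}$, so the substance of the argument is showing that its recurrence class $D$ must meet each nonempty set $C^{\max}_{\even}(\parobj_\induniv)$ — exploiting that $C$ is \typeone and that $\max_{s'\in D}\parobj_\induniv(s')$ is even — or else handling this through the finite warm-up composition noted above. Everything downstream is soft: a finite conjunction over $\setuniv$ followed by continuity from below (or Proposition~\ref{prop:opti_reach}) to turn ``reached with probability one'' into ``reached within $n$ steps with probability at least $1-\varepsilon$''.
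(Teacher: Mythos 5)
Your proof is correct and follows essentially the same route as the paper's: both reduce the claim to the almost-sure reachability of each target set under the memoryless strategy $\sigma_{2}$ from Lemma~\ref{lemma_stratTwo2} (via its recurrence class $D$), then invoke Proposition~\ref{prop:opti_reach} and combine over the finitely many $\induniv\in\setuniv$ — the paper multiplies probabilities $(1-\varepsilon)^{1/\abs{\setuniv}}$ over disjoint episodes where you use a union bound with $\varepsilon/\abs{\setuniv}$, a cosmetic difference. You are in fact more explicit than the paper about the gap between $D^{\max}_{\even}(\parobj_{\induniv})$ and $C^{\max}_{\even}(\parobj_{\induniv})$, which the paper's proof silently elides.
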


\begin{proof}
This result is a consequence of Proposition~\ref{prop:opti_reach}. 
See that for all $\induniv\in\setuniv$  we have that $\sigma_{\mathbf{2}}$ reaches $D^{\max}_{\even}(\parobj_{\induniv})$ almost-surely from all state $s \in C$. 
Thus we can apply Proposition~\ref{prop:opti_reach} to all $\induniv\in \setuniv$ and find some $k_A\in \mathbb{N}$ such that by playing $\sigma_{2}$ from any $s\in S$ for $k_a$ steps, we reach $D^{\max}_{\even}(\parobj_{\induniv})$ with probability larger than $(1-\varepsilon)^{\frac{1}{\abs{\setuniv}}}$ (there exists such a $k_a$ for every $s\in S$, and as $S$ is finite, taking the maximum of them works for every $s\in S$). We take $n = \Sigma_{\induniv\in\setuniv} k_{\induniv}$. 
We consider a disjoint episodes labelled by $i$ for every $\induniv\in\setuniv$, of a duration of $k_{\induniv}$ steps. 
For all $\induniv\in\setuniv$, during episode $i$ we have a probability of visiting $D^{\max}_{\even}(\parobj_{\induniv})$ greater than $(1-\varepsilon)^{\frac{1}{\abs{\setuniv}}}$. 
Since these episodes are independent, the probability that for all $\induniv\in \setuniv$, we visited $D^{\max}_{\even}(\parobj_{\induniv})$ during episode $i$ is greater than $((1-\varepsilon)^{\frac{1}{\abs{\setuniv}}})^{\abs{\setuniv}}=(1-\varepsilon)$, and hence the desired property.
\end{proof}
\begin{definition}
\label{def:strat_ugec}
Let $C \in \ugec(\MDP)$. Let $(n_i)_{i \in \mathbb{N}}$ be a sequence of naturals $n_i$ such that for all  $i \in \mathbb{N}$ we have:

$$
\prob_{\MDPtoMC{\MDP_{\downharpoonright C}}{\sigma_{2}}}[\lbrace \rho\cdot \rho' \in \paths^{\MDPtoMC{\MDP_{\downharpoonright C}}{\sigma_2}}\mid \rho\in \fpaths^{\MDPtoMC{\MDP_{\downharpoonright C}}{\sigma_2}},\ $$ $$\forall \induniv\in\setuniv ,\  \exists k_{\induniv}< n_i ,\ 
\rho(k_{\induniv})\in C^{\max}_{\even}(\parobj_{a})\rbrace] \geq 1 - 2^{-i}$$
whose existence is granted by Lemma~\ref{lemStratTwo3}. 
The strategy $\sigma_C$ is defined as follows.
\begin{enumerate}
\item Play $\sigma_{2}$ for $n_i$ steps. Then $i = i + 1$ and go to \textbf{2}.
\item If for all $\induniv\in\setuniv:$ $C^{\max}_{\even}(\parobj_a)$ was visited in phase \textbf{1}, then go to \textbf{1}.

Else, play $\sigma_{1}$ until all $C^{\max}_{\even}(\parobj_a)$ are reached, and then go to \textbf{1}.
\end{enumerate}
\end{definition}

Strategy $\sigma_C$ can be effectively constructed, as the construction of $\sigma_1$ comes from Section~\ref{sec:typeone} and the construction of $\sigma_2$ comes from Lemma~\ref{lemma_stratTwo2}. 
The following lemma states that for all $s\in C$, the strategy $\sigma_C$ indeed satisfies $\combi{\setuniv}{\setqual}$:

\begin{lemma}
\label{lem:suf}
Let $C \in \ugec(\MDP,\setuniv,\setqual)$. For all $s \in C$, it holds that $\satisfy{s}{\sigma_C}{\MDP}\combi{\setuniv}{\setqual}$.
\end{lemma}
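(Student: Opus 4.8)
The plan is to verify the two kinds of conjuncts of $\combi{\setuniv}{\setqual}=\limwedgeone{\induniv\in\setuniv}\univ(\parobj_{\induniv})\wedge\limwedgeone{\indqual\in\setqual}\qual(\parobj_{\indqual})$ separately, after two preliminary observations. First, $\sigma_C$ is built only from $\sigma_1$, $\sigma_2$ and the uniform memoryless reachability strategies of Proposition~\ref{prop:opti_reach}, all of which use edges of $\MDP{\downharpoonright C}$ only; hence every path of $\MDPtoMC{\MDP}{\sigma_C}$ starting in $s\in C$ stays in $C$, and in particular $\infny(\pi)\subseteq C$. Second, the key combinatorial fact: if a path $\pi$ with $\infny(\pi)\subseteq C$ visits $C^{\max}_{\even}(\parobj_i)$ infinitely often, then $\parobj_i$ is satisfied along $\pi$. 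This is immediate from the definition of $C^{\max}_{\even}(\parobj_i)$: some $s\in C^{\max}_{\even}(\parobj_i)\cap\infny(\pi)$ has an even priority that strictly dominates every odd priority occurring in $C$, so $\max\{\parobj_i(t)\mid t\in\infny(\pi)\}\ge\parobj_i(s)$ cannot be odd. The same reasoning applied to $D$ shows that if $D^{\max}_{\even}(\parobj_i)\neq\varnothing$ then $\max\{\parobj_i(t)\mid t\in D\}$ is even.

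For the sure conjuncts $\univ(\parobj_{\induniv})$ ($\induniv\in\setuniv$), I would fix an arbitrary path $\pi$ of $\sigma_C$ and distinguish three exhaustive cases relative to step~\textbf{2} of Definition~\ref{def:strat_ugec}. (i) If the ``play $\sigma_1$'' branch is eventually executed forever, then $\mathbf{(II_1)}$ guarantees that $\sigma_1$ surely satisfies every $\parobj_{\induniv}$ from every state of $C$, so $\pi$ satisfies $\parobj_{\induniv}$. Otherwise $\pi$ contains infinitely many executions of step~\textbf{1} (these have finite length), and either (ii) the ``play $\sigma_1$'' branch is entered infinitely often, and then each entry terminates and by construction visits every $C^{\max}_{\even}(\parobj_{\induniv})$ before returning to step~\textbf{1}; or (iii) it is entered only finitely often, and then every execution of step~\textbf{1} after the last entry must pass its check --- otherwise the branch would be entered again --- i.e.\ it visits every $C^{\max}_{\even}(\parobj_{\induniv})$. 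In cases (ii) and (iii) each set $C^{\max}_{\even}(\parobj_{\induniv})$ is visited infinitely often, so the combinatorial fact gives that $\pi$ satisfies $\parobj_{\induniv}$. Hence $\parobj_{\induniv}$ holds on every path, i.e.\ $\satisfy{s}{\sigma_C}{\MDP}\univ(\parobj_{\induniv})$.

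For the almost-sure conjuncts $\qual(\parobj_{\indqual})$ ($\indqual\in\setqual$), the plan is to prove that $\infny(\pi)=D$ almost surely, where $D$ is the sub-EC of Lemma~\ref{lemma_stratTwo2}. By the defining inequality of $(n_i)_{i\in\mathbb{N}}$ in Definition~\ref{def:strat_ugec}, which is uniform in the starting state thanks to Lemma~\ref{lemStratTwo3}, the $i$-th execution of step~\textbf{1} fails to visit some $C^{\max}_{\even}(\parobj_{\induniv})$ with probability at most $2^{-i}$, irrespective of the history. Moreover the ``play $\sigma_1$'' branch terminates almost surely, since by $\mathbf{(II_1)}$ the strategy $\sigma_1$ reaches each $C^{\max}_{\even}(\parobj_{\induniv})$ with probability one from every state of $C$; hence almost surely every execution of step~\textbf{1} is eventually started. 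Conditioning on the finite history at the moment the $(k{+}1)$-st execution of step~\textbf{1} begins, a union bound gives probability at least $1-2^{-k}$ that this and all later executions of step~\textbf{1} succeed, and on that event the remaining play is exactly a run of the memoryless strategy $\sigma_2$ from the current state, which by Lemma~\ref{lemma_stratTwo2} has $\infny(\pi)=D$ with conditional probability one. Integrating over the history and letting $k\to\infty$ yields $\infny(\pi)=D$ almost surely; since $D^{\max}_{\even}(\parobj_{\indqual})\neq\varnothing$, the largest $\parobj_{\indqual}$-priority seen infinitely often is a.s.\ even, so $\satisfy{s}{\sigma_C}{\MDP}\qual(\parobj_{\indqual})$. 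Combining the two parts over all $\induniv\in\setuniv$ and $\indqual\in\setqual$ gives $\satisfy{s}{\sigma_C}{\MDP}\combi{\setuniv}{\setqual}$.

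I expect the main obstacle to be the measure-theoretic bookkeeping in the almost-sure part: the time after which $\sigma_C$ coincides with $\sigma_2$ forever is not a stopping time --- one must look into the future to know that no later execution of step~\textbf{1} fails --- so the argument has to be phrased by conditioning on the finite prefixes at the \emph{starts} of successive executions of step~\textbf{1} and then passing to the limit $k\to\infty$, rather than by naively cutting the path at the switch time. The case analysis in the sure part is routine, but one should check that the three cases are genuinely exhaustive, which rests on executions of step~\textbf{1} having finite length and on ``play $\sigma_1$'' branches terminating almost surely (though not surely).
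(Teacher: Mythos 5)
Your proposal is correct and follows essentially the same route as the paper: the identical three-case analysis (eventually only $\sigma_1$, $\sigma_1$ infinitely often, $\sigma_1$ only finitely often) for the sure conjuncts, and the "with probability one, eventually only $\sigma_2$ is played, which then ensures $\infny(\pi)=D$" argument for the almost-sure conjuncts. The only difference is that you spell out the measure-theoretic conditioning and the union bound $\sum_{i>k}2^{-i}=2^{-k}$ that the paper leaves implicit in a one-line remark; this is a welcome sharpening, not a divergence.
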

\begin{proof}
Let us first look at the $\limwedgeone{\induniv\in\setuniv}\univ(\parobj_{\induniv})$ condition. Each path $\pi$ has to follow one of these three cases:
\begin{itemize}
    \item Strategy $\sigma_1$ is only played a finite number of times, and for a finite duration: This means that eventually for some $i_0$, in each round $i>i_0$, in episodes of $n_i$ steps, $C^{\max}_{\even}(\parobj_a)$ was visited for all $\induniv\in\setuniv$. This also means that eventually only $\sigma_2$ is played and $\pi$ stays in $C$, hence all $\parobj_a$ are satisfied on $\pi$.
    \item Strategy $\sigma_1$ is eventually played for an infinite duration without coming back to 
$\textbf{1}$: By definition of $\sigma_1$, path $\pi$ satisfies all $\parobj_a$.
    \item Strategy $\sigma_1$ and $\sigma_2$ are both played infinitely often: The only way to stop strategy $\sigma_1$ is to have visited all $C^{\max}_{\even}(\parobj_a)$. As $\sigma_2$ and $\sigma_1$ were both played infinitely often, $\sigma_1$ was stopped infinitely often, and so $C^{\max}_{\even}(\parobj_a)$ was visited infinitely often for all $\induniv\in\setuniv$. As $\pi$ has to stay in $C$, it implies that $\pi$ satisfies all $\parobj_a$.
\end{itemize}

For the $\limwedgeone{\indqual\in\setqual}\univ(\parobj_{\indqual})$ conditions, we can prove that with probability $1$, eventually only $\sigma_2$ is played. As $\sigma_2$ has itself probability $1$ of ensuring all  $\parobj_{\indqual}$, we get that $\sigma_C$ satisfies $\limwedgeone{\indqual\in\setqual}\univ(\parobj_{\indqual})$.
\end{proof}

Now we construct a strategy $\sigma$ from $\sigma_T$ and $\sigma_C$.

\begin{definition}
\label{def:global}
Based on strategies $\sigma_T$ and $\sigma_{C}$ for all $C \in \ugec(\MDP)$, we build the global strategy $\sigma$ as follows.
\begin{enumerate}
\item Play $\sigma_T$ until a \typetwo EC $C$ is reached, then go to 2.
\item Play $\sigma_{C}$ forever.
\end{enumerate}
\end{definition} 

The following lemma concludes this direction of the proof of Lemma~\ref{lem:SAS}:
\begin{lemma}\label{lem:LeftRight}
\label{lemma_leftright}
It holds that $\satisfy{s_0}{\sigma}{\MDP}\combi{\setuniv}{\setqual}$.
\end{lemma}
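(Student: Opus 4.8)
The plan is to verify separately the two kinds of conjuncts in $\combi{\setuniv}{\setqual}=\limwedgeone{\induniv\in{\setuniv}}\univ(\parobj_{\induniv})\wedge \limwedgeone{\indqual\in{\setqual}}\qual(\parobj_{\indqual})$, classifying each path $\pi\in\paths^{\MDPtoMC{\MDP}{\sigma}}(s_0)$ according to whether or not it ever enters $\mathcal{T}_{\mathbf{II},\MDP,\setuniv,\setqual}$. By Definition~\ref{def:global}, if $\pi$ does reach $\mathcal{T}_{\mathbf{II},\MDP,\setuniv,\setqual}$, then from the first such visit on it is governed by some $\sigma_C$ with $C\in\ugec(\MDP,\setuniv,\setqual)$ and stays inside $C$; if it never does, then $\pi$ is a path of $\MDPtoMC{\MDP}{\sigma_T}$. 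The key inputs will be: (i) $\sigma_T$ is a witness of $\limwedgeone{\induniv\in\setuniv}\univ(\parobj_{\induniv})\wedge \qual(\event\mathcal{T}_{\mathbf{II},\MDP,\setuniv,\setqual})$, and (ii) Lemma~\ref{lem:suf}, which gives $\satisfy{s}{\sigma_C}{\MDP}\combi{\setuniv}{\setqual}$ for every $C\in\ugec(\MDP,\setuniv,\setqual)$ and every $s\in C$.

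First I would handle the sure conjuncts $\univ(\parobj_{\induniv})$, which must hold along \emph{every} path of $\MDPtoMC{\MDP}{\sigma}$. Fix $\induniv\in\setuniv$ and a path $\pi$. If $\pi$ never meets $\mathcal{T}_{\mathbf{II},\MDP,\setuniv,\setqual}$, then $\pi\in\paths^{\MDPtoMC{\MDP}{\sigma_T}}(s_0)$, and by (i) the path $\pi$ satisfies $\parobj_{\induniv}$. Otherwise, let $n$ be the first position with $s:=\pi(n)\in\mathcal{T}_{\mathbf{II},\MDP,\setuniv,\setqual}$ and let $C$ be the associated Type~II EC; then $\pi(n,\infty)$ is a path of $\MDPtoMC{\MDP_{\downharpoonright C}}{\sigma_C}$ from $s$ (and $\sigma_C$ keeps it in $C$), so by Lemma~\ref{lem:suf} it satisfies $\parobj_{\induniv}$, and since a parity condition depends only on $\inf(\pi)=\inf(\pi(n,\infty))$, the whole path $\pi$ satisfies $\parobj_{\induniv}$. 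Hence $\satisfy{s_0}{\sigma}{\MDP}\limwedgeone{\induniv\in\setuniv}\univ(\parobj_{\induniv})$.

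Next I would handle the almost-sure conjuncts $\qual(\parobj_{\indqual})$. Since $\sigma$ agrees with $\sigma_T$ until the first visit to $\mathcal{T}_{\mathbf{II},\MDP,\setuniv,\setqual}$ and $\sigma_T$ witnesses $\qual(\event\mathcal{T}_{\mathbf{II},\MDP,\setuniv,\setqual})$, this first visit occurs with probability one under $\sigma$; let $\rho$ range over the countably many finite prefixes from $s_0$ realising it, so that the cylinders $\cyl(\rho)$ are pairwise disjoint and $\sum_\rho\prob_{\MDPtoMC{\MDP}{\sigma}}(\cyl(\rho))=1$. Conditioning on such a $\rho$ with $s=\last{\rho}$ and associated Type~II EC $C$, the behaviour of $\sigma$ after $\rho$ is exactly $\sigma_C$ started at $s$, so by Lemma~\ref{lem:suf} the suffix satisfies every $\parobj_{\indqual}$ with conditional probability one. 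Summing over $\rho$ yields $\prob_{\MDPtoMC{\MDP}{\sigma}}(\{\pi\in\paths^{\MDPtoMC{\MDP}{\sigma}}(s_0)\mid\pi\text{ satisfies }\parobj_{\indqual}\})=1$ for each $\indqual\in\setqual$. Together with the preceding paragraph this gives $\satisfy{s_0}{\sigma}{\MDP}\combi{\setuniv}{\setqual}$.

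The main obstacle is the measure-theoretic bookkeeping in the almost-sure part: one has to justify the decomposition of the probability space of $\MDPtoMC{\MDP}{\sigma}$ along the (almost surely finite) first-hitting time of $\mathcal{T}_{\mathbf{II},\MDP,\setuniv,\setqual}$ and invoke the strong Markov property, so that "switch to $\sigma_C$ after $\rho$" genuinely realises the distribution to which Lemma~\ref{lem:suf} applies; once this is set up, everything else is a routine case analysis on the three situations already isolated above.
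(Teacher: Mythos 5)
Your proof is correct and follows essentially the same route as the paper's: a case split on whether a run reaches a \typetwo EC (using prefix-independence of parity for the sure conjuncts) together with the almost-sure reachability of $\mathcal{T}_{\mathbf{II},\MDP,\setuniv,\setqual}$ under $\sigma_T$ and Lemma~\ref{lem:suf} for the almost-sure conjuncts. You merely spell out the cylinder/conditioning bookkeeping that the paper compresses into ``invoking prefix-independence,'' which is a welcome but not substantively different addition.
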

\begin{proof}
First, we consider for all $\induniv\in\setuniv$ the objective $\univ(\parobj_a)$. Along each run $\pi$ consistent with $\sigma$, either (i) a \typetwo EC $C$ is eventually reached and $\sigma$ switches to $\sigma_C$, or (ii) $\sigma$ behaves as $\sigma_{T}$ forever. Since all strategies $\sigma_C$ and strategy $\sigma_{T}$ ensure $\univ(\parobj_a)$ and the parity condition is prefix-independent, we have that $\satisfy{s_0}{\sigma}{\MDP}\univ(\parobj_a)$.

Second, since with probability one, $\sigma_{T}$ reaches some {\sf Type\:{\rm II}} EC $C$, in which $\sigma_C$ ensures $\qual(\parobj_{\indqual})$ for all $\indqual\in\setqual$. Again invoking prefix-independence, we conclude that $\satisfy{s_0}{\sigma}{\MDP}\qual(\parobj_{\indqual})$, which ends our proof.
\end{proof}

Now we sketch the proof of the left-to-right implication of Lemma~\ref{lem:SAS}. 
We make use of the following lemma.

\begin{lemma}
\label{pr-closed-a-q}
Given an MDP $\MDP = ( S,E,Act,\prob )$,  and two sets of parity conditions $\{\parobj_{\induniv}\pipe \induniv\in\setuniv\}$ and $\{\parobj_{\indqual}\pipe \indqual\in\setqual\}$, for all states $s,s'\in S$, and for all strategies $\sigma$ the following holds: if $\satisfy{s}{\sigma}{\MDP}\combi{\setuniv}{\setqual}$ and $s'\nin \llbracket\combi{\setuniv}{\setqual}\rrbracket$, then $s'\nin\paths^{\MDPtoMC{\Gamma}{\sigma}}(s)$.
\end{lemma}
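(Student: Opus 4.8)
The plan is to argue by contraposition, exploiting the prefix-independence of the parity conditions in $\combi{\setuniv}{\setqual}$. Suppose $\satisfy{s}{\sigma}{\MDP}\combi{\setuniv}{\setqual}$ and, towards a contradiction, that $s'\in\paths^{\MDPtoMC{\Gamma}{\sigma}}(s)$; that is, there is a finite prefix $\rho$ consistent with $\sigma$ that starts in $s$ and ends in $s'$. I would then use the residual/shifted strategy $\sigma'$ defined by $\sigma'(w)=\sigma(\rho\cdot w)$ for all finite words $w$ beginning in $s'$, i.e. the behaviour of $\sigma$ after having already seen the history $\rho$. The key observation is that the set of paths from $s'$ under $\sigma'$ is exactly $\{\pi \mid \rho\cdot\pi(1,\infty)\in\paths^{\MDPtoMC{\Gamma}{\sigma}}(s)\}$ with the prefix $\rho$ stripped, and the probability measure $\prob_{\MDPtoMC{\MDP}{\sigma'}}$ from $s'$ coincides with the conditional measure of $\prob_{\MDPtoMC{\MDP}{\sigma}}$ from $s$ given the cylinder of $\rho$, up to the shift. (Here one must be careful that $\sigma$ may be an infinite-memory strategy, so $\sigma'$ is a strategy that may carry the finite memory accumulated along $\rho$; this is fine since strategies are functions of histories.)

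Next I would check each atom of $\combi{\setuniv}{\setqual}$ separately and transfer it from $\sigma$ at $s$ to $\sigma'$ at $s'$. For a sure atom $\univ(\parobj_{\induniv})$: every infinite path from $s'$ under $\sigma'$, prefixed by $\rho$, is a path from $s$ under $\sigma$, hence satisfies $\parobj_{\induniv}$; since $\parobj_{\induniv}$ is prefix-independent, removing the finite prefix $\rho$ does not change whether it is satisfied, so the path satisfies $\parobj_{\induniv}$ from $s'$ as well; thus $\satisfy{s'}{\sigma'}{\MDP}\univ(\parobj_{\induniv})$. For an almost-sure atom $\qual(\parobj_{\indqual})$: since the cylinder of $\rho$ has positive probability under $\prob_{\MDPtoMC{\MDP}{\sigma}}$ from $s$ (as $\rho$ is consistent with $\sigma$ and every transition along it has positive probability in $\MDPtoMC{\Gamma}{\sigma}$, because $s'\in\paths^{\MDPtoMC{\Gamma}{\sigma}}(s)$ uses exactly the positive-probability edges), and the measure-$1$ event ``$\parobj_{\indqual}$ holds'' has conditional probability $1$ given any positive-probability cylinder, we get that $\parobj_{\indqual}$ holds with probability $1$ from $s'$ under $\sigma'$ (again invoking prefix-independence to discard $\rho$); thus $\satisfy{s'}{\sigma'}{\MDP}\qual(\parobj_{\indqual})$. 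Combining, $\satisfy{s'}{\sigma'}{\MDP}\combi{\setuniv}{\setqual}$, so $s'\in\llbracket\combi{\setuniv}{\setqual}\rrbracket$, contradicting the hypothesis $s'\nin\llbracket\combi{\setuniv}{\setqual}\rrbracket$. This completes the contrapositive, hence the lemma.

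I expect the only delicate point — the ``main obstacle'' — to be making the shift argument fully rigorous at the level of the product Markov chain $\MDPtoMC{\Gamma}{\sigma}$: one has to phrase $\sigma'$ as a genuine strategy (carrying over the internal mode of the transition system $\mathtt{T}_\sigma$ reached after reading $\rho$) and verify that the induced MC $\MDPtoMC{\Gamma}{\sigma'}$ started at $s'$ is isomorphic, as a measure space over infinite paths, to the sub-chain of $\MDPtoMC{\Gamma}{\sigma}$ reachable from the $(q,s')$-node that $\rho$ leads to. Once this correspondence is set up, everything else is a routine application of prefix-independence of parity and of the elementary fact that conditioning a probability measure on a positive-measure cylinder preserves almost-sure events. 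No new machinery beyond the preliminaries (the definition of $\MDPtoMC{\Gamma}{\sigma}$, Carathéodory's extension, and prefix-independence of parity) is needed.
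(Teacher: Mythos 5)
Your proposal is correct and follows essentially the same route as the paper: argue via the residual strategy of $\sigma$ after a prefix reaching $s'$, use prefix-independence of parity to transfer the sure atoms, and use the positive probability of the cylinder of the prefix to transfer the almost-sure atoms by conditioning. Your version is in fact slightly more careful than the paper's (which phrases the same idea as a case split by contradiction on which atom fails at $s'$), since you make the witness strategy at $s'$ explicit, including its dependence on the mode reached after $\rho$.
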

\begin{proof}
The proof follows from the fact that for a strategy $\sigma$ in $\MDP$ that satisfies $\univ(\parobj)$ (resp. $\qual(\parobj)$), for all finite paths  $\pi$ from $s$ in $\MDPtoMC{\MDP}{\sigma}$, if $\pi$ leads to a state $\tilde{s}$, then it holds that for the set of paths originating from $\tilde{s}$, we have that $\univ(\parobj)$ (resp. $\qual(\parobj)$) is satisfied.
The details of the proof is as follows:

Assume by contradiction that $s'\in\paths^{\MDPtoMC{\Gamma}{\sigma}}(s)$.
Since it holds that  $s'\not\models_{\MDP}\combi{\setuniv}{\setqual}$, there can be two possibilities: either (i) for some $\induniv\in\setuniv$, we have $s'\not \models \univ(\parobj_{\induniv})$ or (ii) for some $\indqual\in\setqual$, we have $s'\not \models \qual(\parobj_{\indqual})$.

For (i), since the condition $\parobj_{\induniv}$ is prefix-independent, we have that any path going through 
$s'$ does not satisfy $\parobj_{\induniv}$, thus $s\not \models \univ(\parobj_{\induniv})$, and hence the contradiction.
For (ii), if $s' \not \models \qual(\parobj_{\indqual})$, and since by assumption $s'\in\paths^{\MDPtoMC{\Gamma}{\sigma}}(s)$, we have that $s'$ can be reached from $s$ with non-zero probability in the MC $\MDPtoMC{\MDP}{\sigma}$, and thus $\satisfy{s}{\sigma}{\MDP} \qual(\parobj_{\indqual})$ does not hold true, and hence the contradiction.
\end{proof}

We now introduce the following definition.

\begin{definition}[Density]\label{def:dens} Let $\MDP=( S,E,Act,\prob)$ be an MDP, $s\in S$ an initial state, $\sigma$ a strategy, and $R\subseteq S$. We say that $R$ is \textit{dense} in $\sigma$ from $s$ if and only if for all $\rho \in \fpaths^{\MDPtoMC{\MDP}{\sigma}}(s)$, there exists $\rho'$ such that $\rho \cdot \rho' \in \fpaths^{\MDPtoMC{\MDP}{\sigma}}(s)$ and $\mathtt{Last}(\rho') \in R$. That is, after all prefixes in the tree $\paths^{\MDPtoMC{\MDP}{\sigma}}(s)$, there is a continuation that visits $R$.
\end{definition}

Now we state the following lemma that uses the above definition.

\begin{lemma}
\label{lem:bp}
Given an MDP $\MDP = ( S,E,Act,\prob)$, a state $s \in S$, a set of parity conditions $\{\parobj_{\induniv}\pipe \induniv\in\setuniv\}$, a set $R \subseteq S$, if there exists a strategy $\sigma$ such that $\satisfy{s}{\sigma}{\MDP} \limwedgeone{\induniv\in\setuniv}\univ(\parobj_a)$, and $R$ is dense in $\sigma$ from $s$, then $\satisfies{s}{G^{\MDP}_{R,\{\parobj_{\induniv}\pipe \induniv\in\setuniv\}}}\limwedgeone{\induniv\in\setuniv}\univ(\parobj_a)\wedge \univ(\Box\event B)$, with $G^{\MDP}_{R,\{\parobj_{\induniv}\pipe \induniv\in\setuniv\}}$ the Streett-B\"uchi game defined in Section~\ref{sec:typeone} where the B\"uchi condition is $B$.
\end{lemma}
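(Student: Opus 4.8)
The plan is to turn the MDP strategy $\sigma$ into a Player-1 strategy $\tau$ in the Streett-B\"uchi game $G^{\MDP}_{R,\{\parobj_{\induniv}\pipe \induniv\in\setuniv\}}$ and check that every play consistent with $\tau$ satisfies $\limwedgeone{\induniv\in\setuniv}\univ(\parobj_{\induniv})\wedge \univ(\Box\event B)$. Recall the shape of that game: from a game state $u\in S\setminus R$, Player~1 picks an action $a\in Act(u)$, after which Player~2 decides whether to move to the gadget $(u,a,0)\in B$ --- from which Player~2 itself resolves the successor in $\post(u,a)$ --- or to $(u,a,1)\notin B$ --- from which Player~1 resolves the successor in $\post(u,a)$; moreover states of $R$ are absorbing, and all gadget and $R$-states have priority $0$ for every $\parobj'_{\induniv}$. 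Player~1 maintains a ``shadow'' finite path $\rho\in\fpaths^{\MDPtoMC{\MDP}{\sigma}}(s)$, updated after every game step by appending the transition just taken (its $S$-component together with the induced mode of $\MDPtoMC{\MDP}{\sigma}$). At a game state $u\notin R$ with shadow $\rho$ ending in mode $q$, Player~1 uses density of $R$ in $\sigma$ from $s$ to pick a continuation $\rho'$ of $\rho$ with $\last{\rho'}\in R$ (truncated so that no earlier state of $\rho'$ lies in $R$), plays the first action $a\in\supp(act(q,u))$ used by $\rho'$, and: \emph{if} Player~2 then moves to $(u,a,1)$ and hands the successor choice to Player~1, Player~1 follows $\rho'$ step by step; \emph{if} Player~2 moves to $(u,a,0)$, Player~1 records Player~2's successor and computes a fresh continuation to $R$ from the updated shadow. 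A straightforward induction shows that $\rho$ remains a valid path of $\MDPtoMC{\MDP}{\sigma}(s)$: every successor ever appended --- whether chosen by Player~1 along $\rho'$ or forced by Player~2 through a $0$-gadget --- lies in $\post(u,a)$ for the action $a$ played, which was taken from $\supp(act(q,u))$.

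Next I would check the two winning conditions on an arbitrary play $\pi$ consistent with $\tau$. For $\univ(\Box\event B)$, argue by dichotomy: either some $0$-gadget state is visited infinitely often in $\pi$, and then $\Box\event B$ holds since such states lie in $B$; or only finitely many $0$-gadgets are visited, in which case after the last such visit Player~1 follows one fixed finite continuation $\rho'$ uninterrupted, reaches $\last{\rho'}\in R$ after finitely many steps, and stays there forever since $R$-states are absorbing and contained in $B$, so $\Box\event B$ holds again. For the parity conjunction: if $\pi$ eventually reaches $R$ it settles in an absorbing state of priority $0$, which is even, hence each $\parobj'_{\induniv}$ is satisfied; otherwise $\pi$ stays in $(S\setminus R)$ and the gadgets forever, and the sequence of $S$-states it visits, equipped with the matching modes, is an infinite path $\pi'\in\paths^{\MDPtoMC{\MDP}{\sigma}}(s)$ (again because every move of either player followed a positive-probability transition of $\MDPtoMC{\MDP}{\sigma}$), so $\max\{\parobj_{\induniv}(t)\mid t\in\inf(\pi')\}$ is even for every $\induniv\in\setuniv$ by $\satisfy{s}{\sigma}{\MDP}\limwedgeone{\induniv\in\setuniv}\univ(\parobj_{\induniv})$; since the gadget states occurring infinitely often in $\pi$ all carry priority $0$ and priorities are nonnegative, $\max\{\parobj'_{\induniv}(t)\mid t\in\inf(\pi)\}=\max\{\parobj_{\induniv}(t)\mid t\in\inf(\pi')\}$, which is even. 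Thus every play satisfies the game objective and Player~1 wins from $s$, which is exactly $\satisfies{s}{G^{\MDP}_{R,\{\parobj_{\induniv}\pipe \induniv\in\setuniv\}}}\limwedgeone{\induniv\in\setuniv}\univ(\parobj_{\induniv})\wedge \univ(\Box\event B)$.

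The main obstacle is the bookkeeping that makes the two hypotheses combine correctly: density must be applied to the current \emph{finite shadow path} (not merely the current state, since $\sigma$ may be infinite-memory), the planned continuation must be truncated at its first visit to $R$ so that it stays consistent once $R$ is made absorbing in the game, and one must ensure that the successors \emph{forced} by Player~2 through the $0$-gadgets still keep the shadow inside $\MDPtoMC{\MDP}{\sigma}(s)$ so that the sure-parity guarantee of $\sigma$ transfers to the game play. The resulting $\tau$ is in general infinite-memory; this is harmless, since exhibiting any Player-1 strategy under which all plays satisfy an $\omega$-regular condition witnesses that Player~1 wins. This construction is the game-side counterpart of the ``reciprocal'' direction in the proof of Lemma~\ref{thm:reach-Almagor}, where the role now played by density was played there by $\qual(\event R)$.
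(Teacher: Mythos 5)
Your proposal is correct and follows essentially the same route as the paper's proof: the paper also builds a Player-1 strategy by tracking a consistent prefix of $\MDPtoMC{\MDP}{\sigma}$, uses density to label each prefix with a finite continuation into $R$ that is followed at the $(s,a,1)$-gadgets, and closes with the same case analysis (absorbed in $R$ / eventually only $1$-gadgets, hence $R$ is reached / $0$-gadgets visited infinitely often) for the B\"uchi condition and the same priority-transfer argument for the parity conjunction. Your online recomputation of the continuation after each $0$-gadget interruption is just a dynamic rephrasing of the paper's coherence condition on its labelling function $\mathtt{lab}$.
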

\begin{proof}

We construct a strategy $\sigma'$ to play in $G^{\MDP}_{R,\{\parobj_{\induniv}\pipe\induniv\in\setuniv\}}$ from strategy $\sigma$, and we show that $\sigma'$ is winning for the B\"uchi parity condition. 

Let $\alpha\colon S' \rightarrow S\cup \{\varepsilon\}$ be a mapping from the states in $G^{\MDP}_{R,\{\parobj_{\induniv}\pipe\induniv\in\setuniv\}}$ to the states in $\MDP$ such that $\alpha(s)=s$ for all $s \in S$, and $\alpha(s,a,i)=\varepsilon$ for all $(s,a,i) \in (S\setminus R) \times Act \times \{0,1\}$. Then we extend $\alpha$ to map prefixes in $G^{\MDP}_{R,\{\parobj_{\induniv}\pipe\induniv\in\setuniv\}}$ to prefixes in $\MDP$.

Now, we label each prefix $\rho \in \fpaths^{\MDPtoMC{\MDP}{\sigma}}(s)$ by a finite continuation $\mathtt{lab}(\rho)=\rho'$ such that (i) $\rho \cdot \rho' \in \fpaths^{\MDPtoMC{\MDP}{\sigma}}(s)$ and $\last{\rho'} \in R$; and (ii) if $\rho'=\rho'_1 \cdot s \cdot \rho'_2$ is the label of $\rho$ then $\mathtt{lab}(\rho \cdot \rho'_1 \cdot s)=\rho'_2$. 	It should be clear that due to the density of $R$ (as given in Definition~\ref{def:dens}), such a labelling is always possible.

Now, we define the strategy $\sigma'$ from $\sigma$ and this labelling:
  \begin{itemize}
	\item (i) For a prefix $\rho$ in the game such that $\mathtt{Last}(\rho) \in R$, the only possible choice is the only action available as $R$ is absorbing in the game.
	\item (ii) For a prefix $\rho$ such that $\mathtt{Last}(\rho) \in S \cup (S\setminus R) \times \{Act\} \times \{0,1\}$ and such that $\alpha(\rho)$ is not consistent with $\sigma$, we choose any $a\in Act$. 
	\item (iii) For a prefix $\rho$ such that $\mathtt{Last}(\rho) \in S$ and such that $\alpha(\rho)$ is consistent with $\sigma$, we define $\sigma'(\rho)=\sigma(\alpha(\rho))$.
	\item (iv) For a prefix $\rho$ such that $\mathtt{Last}(\rho) \in (S\setminus R) \times \{ Act\}\times\{1\}$ and such that $\alpha(\rho)$ is consistent with $\sigma$, then $\alpha(\rho)$ in the tree $\paths^{\MDPtoMC{\MDP}{\sigma}}(s)$ is labelled with a finite path $\rho'$ that leads to a state in $R$, i.e., $\mathtt{lab}(\alpha(\rho))=\rho'$, then we define $\sigma'(\rho)=\mathtt{First}(\rho')$.	\item (iv) For a prefix $\rho$ such that $\mathtt{Last}(\rho) \in (S\setminus R) \times \{ Act\}\times\{0\}$ and such that $\alpha(\rho)$ is consistent with $\sigma$, then we play the only action available.
\end{itemize}	
	
We now show that $\sigma'$ wins the B\"uchi $B$ and parity $\parobj'_a$ conditions for all $\induniv\in\setuniv$ defined in the game $G^{\MDP}_{R,\{\parobj_{\induniv}\pipe\induniv\in\setuniv\}}$ from state $s$. First, for $\induniv\in\setuniv$ as $\sigma$ is enforcing $\parobj_a$ in $\MDP$, we have that $\parobj'_a$ replicates the priorities given by $\parobj_a$, and absorbing states in $R$ have even priorities for $\parobj'_a$, it is clear that $\sigma'$ enforces $\parobj'_a$ in the game $G^{\MDP}_{R,\{\parobj_{\induniv}\pipe\induniv\in\setuniv\}}$. Now, for the B\"uchi condition, we consider the following case study on the paths $\pi$ consistent with $\sigma'$ in the game.
	\begin{itemize}
		\item If $\pi$ ends up in $R$, then it is clearly B\"uchi accepting.
		\item If $\pi$ is such that after a finite prefix $\rho$, we always visit states $s' \in S\times Act\times \{1\}$, then according to the definition of $\sigma'$, the play will follow a finite path $\rho'$ to a state in $R$, and so $\pi$ reaches $R$ and as a consequence $\pi$ is B\"uchi accepting.
		\item Finally, if $\pi$ is such that it is eventually never the case that we visit any $s' \in S\times Act\times \{1\}$, then we visit infinitely often the copies $s' \in S\times Act\times \{0\}$, and because all $(s,a,0) \in B$, we have that $\pi$ is also B\"uchi accepting.
	\end{itemize}
So in all cases, play $\pi$ satisfies the B\"uchi condition, and we are done.

Let $\alpha\colon S' \rightarrow S\cup \{\varepsilon\}$ be a mapping from the states in $G^{\MDP}_{R,\{\parobj_{\induniv}\pipe\induniv\in\setuniv\}}$ to the states in $\MDP$ such that $\alpha(s)=s$ for all $s \in S$, and $\alpha(s,a,i)=\varepsilon$ for all $(s,a,i) \in (S\setminus R) \times Act \times \{0,1\}$. Then we extend $\alpha$ to map prefixes in $G^{\MDP}_{R,\{\parobj_{\induniv}\pipe\induniv\in\setuniv\}}$ to prefixes in $\MDP$.

Now, we label each prefix $\rho \in \fpaths^{\MDPtoMC{\MDP}{\sigma}}(s)$ by a finite continuation $\mathtt{lab}(\rho)=\rho'$ such that (i) $\rho \cdot \rho' \in \fpaths^{\MDPtoMC{\MDP}{\sigma}}(s)$ and $\last{\rho'} \in R$; and (ii) if $\rho'=\rho'_1 \cdot s \cdot \rho'_2$ is the label of $\rho$ then $\mathtt{lab}(\rho \cdot \rho'_1 \cdot s)=\rho'_2$. 	It should be clear that due to the density of $R$ (as given in Definition~\ref{def:dens}), such a labelling is always possible.

Now, we define the strategy $\sigma'$ from $\sigma$ and this labelling:
  \begin{itemize}
	\item (i) For a prefix $\rho$ in the game such that $\mathtt{Last}(\rho) \in R$, the only possible choice is the only action available as $R$ is absorbing in the game.
	\item (ii) For a prefix $\rho$ such that $\mathtt{Last}(\rho) \in S \cup (S\setminus R) \times \{Act\} \times \{0,1\}$ and such that $\alpha(\rho)$ is not consistent with $\sigma$, we choose any $a\in Act$. 
	\item (iii) For a prefix $\rho$ such that $\mathtt{Last}(\rho) \in S$ and such that $\alpha(\rho)$ is consistent with $\sigma$, we define $\sigma'(\rho)=\sigma(\alpha(\rho))$.
	\item (iv) For a prefix $\rho$ such that $\mathtt{Last}(\rho) \in (S\setminus R) \times \{ Act\}\times\{1\}$ and such that $\alpha(\rho)$ is consistent with $\sigma$, then $\alpha(\rho)$ in the tree $\paths^{\MDPtoMC{\MDP}{\sigma}}(s)$ is labelled with a finite path $\rho'$ that leads to a state in $R$, i.e., $\mathtt{lab}(\alpha(\rho))=\rho'$, then we define $\sigma'(\rho)=\mathtt{First}(\rho')$.	\item (iv) For a prefix $\rho$ such that $\mathtt{Last}(\rho) \in (S\setminus R) \times \{ Act\}\times\{0\}$ and such that $\alpha(\rho)$ is consistent with $\sigma$, then we play the only action available.
\end{itemize}	
	
We now show that $\sigma'$ wins the B\"uchi $B$ and parity $\parobj'_a$ conditions for all $\induniv\in\setuniv$ defined in the game $G^{\MDP}_{R,\{\parobj_{\induniv}\pipe\induniv\in\setuniv\}}$ from state $s$. First, for $\induniv\in\setuniv$ as $\sigma$ is enforcing $\parobj_a$ in $\MDP$, we have that $\parobj'_a$ replicates the priorities given by $\parobj_a$, and absorbing states in $R$ have even priorities for $\parobj'_a$, it is clear that $\sigma'$ enforces $\parobj'_a$ in the game $G^{\MDP}_{R,\{\parobj_{\induniv}\pipe\induniv\in\setuniv\}}$. Now, for the B\"uchi condition, we consider the following case study on the paths $\pi$ consistent with $\sigma'$ in the game.
	\begin{itemize}
		\item If $\pi$ ends up in $R$, then it is clearly B\"uchi accepting.
		\item If $\pi$ is such that after a finite prefix $\rho$, we always visit states $s' \in S\times Act\times \{1\}$, then according to the definition of $\sigma'$, the play will follow a finite path $\rho'$ to a state in $R$, and so $\pi$ reaches $R$ and as a consequence $\pi$ is B\"uchi accepting.
		\item Finally, if $\pi$ is such that it is eventually never the case that we visit any $s' \in S\times Act\times \{1\}$, then we visit infinitely often the copies $s' \in S\times Act\times \{0\}$, and because all $(s,a,0) \in B$, we have that $\pi$ is also B\"uchi accepting.
	\end{itemize}
So in all cases, play $\pi$ satisfies the B\"uchi condition, and we are done.
\end{proof}

Lemma~\ref{pr-closed-a-q} implies that for all initial state $s'$, in $\MDPtoMC{\MDP_{\downharpoonright C}}{\sigma}$, after all finite path $\rho$ beginning from some state $(q',s')$, and ending in $(q'',s'')$ it holds that $\satisfy{(q'',s'')}{\sigma}{\MDP}\combi{\setuniv}{\setqual}$. Thus after every finite path, there exists a continuation that visits $\mathcal{T}_{\mathbf{II},\MDP,\setuniv,\setqual} $, hence $\mathcal{T}_{\mathbf{II},\MDP,\setuniv,\setqual} $ is dense from $s'$ in $\sigma$, and so by Lemma~\ref{lem:bp}, Lemma~\ref{thm:reach-Almagor} and Lemma~\ref{lem:implies_is_univ}, we have that $\satisfies{s_0}{\MDP}\limwedgeone{\induniv\in\setuniv}\univ(\parobj_a)\wedge\qual(\event\mathcal{T}_{\mathbf{II},\MDP,\setuniv,\setqual})$. Recall that $s_0$ is the initial state in Lemma~\ref{lem:SAS}. We detail below why $\mathcal{T}_{\mathbf{II},\MDP,\setuniv,\setqual} $ is non-empty.

\begin{lemma}
\label{lem:ugec}
If $\satisfies{s_0}{\MDP} \combi{\setuniv}{\setqual}$ then $\ugec(\MDP) \neq \varnothing$.
\end{lemma}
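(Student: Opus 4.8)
The plan is to prove the contrapositive direction via the structure of the paths induced by a witness strategy, using Proposition~\ref{prop:long_run}. Suppose $\satisfies{s_0}{\MDP}\combi{\setuniv}{\setqual}$, so there is a strategy $\sigma$ with $\satisfy{s_0}{\sigma}{\MDP}\combi{\setuniv}{\setqual}$. First I would invoke Proposition~\ref{prop:long_run}: with probability one, a path $\pi$ consistent with $\sigma$ from $s_0$ has $\inf(\pi)\in EC(\MDP)$. Hence there exists at least one end-component $C$ of $\MDP$ together with a finite prefix $\rho$ ending in some state $\tilde{s}$ of $C$, such that $\sigma$ conditioned on $\rho$ has positive probability of keeping the play forever inside $C$ with $\inf(\pi)=C$. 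Let $\sigma'$ denote the residual strategy of $\sigma$ after $\rho$; by standard conditioning arguments $\sigma'$ from $\tilde{s}$ still satisfies $\combi{\setuniv}{\setqual}$ (since both $\univ(\parobj_{\induniv})$ and $\qual(\parobj_{\indqual})$ are prefix-independent, and residuals of almost-sure winning strategies remain almost-sure winning from positive-probability states). Moreover, conditioning on the event $\inf(\pi)=C$ (which has positive probability under $\sigma'$ from $\tilde{s}$), we obtain a strategy $\sigma''$ that stays in $C$ forever and still almost-surely satisfies all the $\parobj_{\indqual}$, and surely satisfies all $\parobj_{\induniv}$ on every path. So $\MDP{\downharpoonright C}$ carries a strategy witnessing $\combi{\setuniv}{\setqual}$ from $\tilde{s}$.

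Next I would argue that $C$ can be taken so that \emph{every} state of $C$ satisfies $\combi{\setuniv}{\setqual}$ in $\MDP{\downharpoonright C}$. Since $C$ is strongly connected, from any state of $C$ we can reach $\tilde{s}$ while staying in $C$, and then play the witness strategy $\sigma''$; this immediately gives condition $\mathbf{(II_2)}$-type satisfaction $\limwedgeone{\induniv\in\setuniv}\qual(\parobj_{\induniv})\wedge \limwedgeone{\indqual\in\setqual}\qual(\parobj_{\indqual})$ from every state of $C$ (the reachability part has no effect on prefix-independent qualitative parity conditions), so $\mathbf{(II_2)}$ holds in $C$. For $\mathbf{(II_1)}$, I would proceed similarly: the strategy $\sigma''$ surely enforces all $\parobj_{\induniv}$ while staying in $C$, which in particular forces, on every path, that the maximal priority seen infinitely often for each $\parobj_{\induniv}$ is even; combined with the fact that a path that stays in $C$ visits states of maximal infinitely-visited priority, this gives that $C^{\max}_{\even}(\parobj_{\induniv})\neq\varnothing$ and that every state of $C$ can, while staying in $C$ and enforcing all $\univ(\parobj_{\induniv})$, almost-surely (indeed surely along the relevant paths) reach $C^{\max}_{\even}(\parobj_{\induniv})$. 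So $C$ satisfies $\mathbf{(II_1)}$ as well, i.e. $C \in \ugec(\MDP,\setuniv,\setqual)$, and therefore $\ugec(\MDP)\neq\varnothing$.

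The main obstacle I anticipate is making the conditioning step rigorous: extracting from a single witness strategy $\sigma$ a strategy that (a) stays in a fixed EC $C$ forever, (b) still almost-surely wins all $\qual(\parobj_{\indqual})$ inside $C$, and (c) surely wins all $\univ(\parobj_{\induniv})$. Step (c) is the least problematic since surely-satisfaction is downward-closed along any path-prefix: every infinite path of $\sigma''$ is an infinite path of $\sigma$ from $\tilde s$, which already satisfies all $\parobj_{\induniv}$. Step (b) requires care: conditioning a Markov-chain measure on a positive-probability, prefix-independent event preserves almost-sure satisfaction of any other prefix-independent event of measure one; this is where I would cite the qualitative reasoning of~\cite{BK08}. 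Step (a) is essentially the statement that among the ECs appearing as $\inf(\pi)$ with positive probability, we may pick one and condition on landing in it — this is again Proposition~\ref{prop:long_run} together with a countable-additivity argument. Once these are in place, establishing $\mathbf{(II_1)}$ and $\mathbf{(II_2)}$ for the chosen $C$ is routine from the definitions of $C^{\max}_{\even}(\parobj)$, prefix-independence, and strong connectedness.
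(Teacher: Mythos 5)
There is a genuine gap, concentrated in the step where you ``condition on the event $\inf(\pi)=C$'' to obtain a strategy $\sigma''$ of $\MDP{\downharpoonright C}$. Conditioning the path measure of $\MDPtoMC{\MDP}{\sigma}$ on a positive-probability tail event does not produce a strategy of the MDP: a strategy can only choose distributions over actions, it cannot bias the outcomes of the probabilistic transitions, and the conditional measure in general requires exactly such a bias. Worse, the conclusion you want from this step is simply false: an end-component $C$ with $\prob[\inf(\pi)=C]>0$ under a surely-winning strategy need \emph{not} satisfy $\mathbf{(II_1)}$. Concretely, take $\setuniv=\{1,2\}$, $\setqual=\varnothing$, states $u,v,w$ with priorities $(\parobj_1,\parobj_2)$ equal to $(2,0)$ at $u$, $(1,1)$ at $v$, $(0,2)$ at $w$; the only action at $v$ staying in $C=\{u,v,w\}$ goes to $u$ or $w$ each with probability $\tfrac12$, a second action at $v$ escapes to an absorbing state $t$ with priorities $(2,2)$, and $u,w$ return deterministically to $v$. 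A strategy that escapes to $t$ only when the current run of identical outcomes exceeds a sufficiently fast-growing threshold surely satisfies $\univ(\parobj_1)\wedge\univ(\parobj_2)$ and keeps the play in $C$ forever with positive probability; yet in $\MDP{\downharpoonright C}$ every strategy admits the possible path $(vu)^{\omega}$, which violates $\parobj_2$, so $C$ fails $\mathbf{(II_1)}$ and is not \typetwo. (The lemma still holds there because $\{t\}$ is \typetwo; your argument just points at the wrong component.)

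The paper's proof is structured precisely to avoid this trap. It does not use the long-run end-component at all: it considers the sets $R=\mathtt{States}(\paths^{\MDPtoMC{\MDP}{\sigma}}(s))$ of \emph{all} states reachable under a winning strategy, takes an $R$ minimal for inclusion, and shows by a combinatorial argument (using Lemma~\ref{pr-closed-a-q}) that such an $R$ is an EC. Minimality then yields that every nonempty subset of $R$ is dense in the strategy tree, and condition $\mathbf{(II_1)}$ --- in particular the $\qual(\event R^{\max}_{\even}(\parobj_{\induniv}))$ part, which you assert without construction --- is obtained by routing this density through the Streett--B\"uchi game reduction of Lemmas~\ref{lem:bp} and~\ref{thm:reach-Almagor}, which builds a genuinely new strategy rather than a conditioned one. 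Your observation that $C^{\max}_{\even}(\parobj_i)\neq\varnothing$ for all $i\in\setuniv\cup\setqual$ on any positive-probability long-run EC is correct and does give $\mathbf{(II_2)}$, but $\mathbf{(II_1)}$ is the hard half and your argument for it does not go through.
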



\begin{proof}
We recall that given an MDP $\MDP$, an initial state $s$, a strategy $\sigma$ and a set of paths $\Pi \subseteq \paths^{\MDPtoMC{\MDP}{\sigma}}(s)$, we define
\[
\mathtt{States}(\Pi) = \{s \in S \mid \exists\, \pi \in \Pi,\, \exists\, n \in \mathbb{N}_0,\, \pi(n) = s\}.
\]

Now consider the following set $\mathcal{S}$ of subsets of $S$:
$$\mathcal{S}=\{ R \subseteq S \mid \exists\, s \in S,\, \exists\, \sigma\text{ a strategy, }$$
$$(\satisfy{s}{\sigma}{\MDP} \combi{\setuniv}{\setqual})\wedge(R =\mathtt{States}(\paths^{\MDPtoMC{\MDP}{\sigma}}(s)))\}.
$$

We have that for each $R \in\min_{\subseteq}(\mathcal{S})$, the following properties hold:
\begin{enumerate}
\item $R$ is an EC in $\MDP$,
\item $\forall\, s\in R,\, \satisfies{s}{\MDP_{\downharpoonright R}} \limwedgeone{\induniv\in\setuniv}(\univ(\parobj_a) \wedge \qual(\event R^{\max}_{\even}(\parobj_a)))$,
\item $\forall\, s\in R,\, \satisfies{s}{\MDP_{\downharpoonright R}} \limwedgeone{\induniv\in\setuniv}\qual(\parobj_a) \wedge \qual (\parobj_{\indqual})$.
\end{enumerate}

Before proving these three items, we claim that for all $R\in \min_{\subseteq}(\mathcal{S})$, and $s \in R$, there exists a strategy $\sigma_R$ such that $\satisfy{s}{\sigma_R}{\MDP_{\downharpoonright R}} \combi{\setuniv}{\setqual}$, i.e., $\sigma_R$ satisfies the property without leaving $R$. This is a direct consequence of Lemma~\ref{pr-closed-a-q} and the minimality of $R$ in $\mathcal{S}$ for the $\subseteq$ order. We use strategy $\sigma_R$ in the rest of the proof.

Item \textbf{1)}. We first prove that $R$ is strongly connected. By contradiction, assume it is not the case, i.e., that there exist $s, s' \in R$ such that there is no path in $R$ from $s$ to $s'$. Then, let $R'$ be the set of states reachable with strategy $\sigma_R$ from a prefix $\rho$ ending in $s$. By Lemma~\ref{pr-closed-a-q}, we have that $R' \in \mathcal{S}$. But as there is no path from $s$ to $s'$ in $R$, we have that $s' \not\in R'$ and $R' \subsetneq R$. This contradicts the minimality of $R$, hence we conclude that $R$ is strongly connected. Then, for all state $s$ of $R$, and $a\in Act(s)$, we have that $\post(s,a)\neq \varnothing$, as $R$ is strongly connected. Hence, it remains to show that for all states $s\in R$, there exists an action $a$ such that we have $\post(s,a) \subseteq R$. By contradiction, fix some $s \in R$, and assume that for all action $a$ there exists $s_a \not\in R$ such that $(s, s_a) \in E$. As $R$ belongs to $\mathcal{S}$, recall that $R =\mathtt{States}(\paths^{\MDPtoMC{\MDP}{\sigma}}(s''))$ for some strategy $\sigma$ and state $s''$. Since $s \in R$, there exists a prefix $\rho \in \fpaths^{\MDPtoMC{\MDP}{\sigma}}(s'')$ such that $\mathtt{Last(\rho)} = s$. But then, for some $a$, prefix $\rho' = \rho \cdot s_a$ also belongs to $\fpaths^{\MDPtoMC{\MDP}{\sigma}}(s'')$, and $s_a \in R$. Thus, we conclude that $R$ is indeed an EC in $\MDP$.

Item \textbf{2)}. We fix any $s \in R$, and we prove that $\satisfies{s}{\MDP_{\downharpoonright R}} \limwedgeone{\induniv\in\setuniv}(\univ(\parobj_a) \wedge \qual(\event R^{\max}_{\even}(\parobj_a)))$. As seen above, from the minimality of $R$ and Lemma~\ref{pr-closed-a-q}, we know that $\satisfy{s}{\sigma_R}{\MDP_{\downharpoonright R}} \\ \combi{\setuniv}{\setqual}$. Now, again from the minimality of $R$ in $\mathcal{S}$, we know that in the subtree induced by $\paths^{\MDPtoMC{\MDP}{\sigma_R}}(s)$, every non-empty subset $R' \subseteq R$ is dense, otherwise, there would exists some $r\in R$ such that a subtree of $\paths^{\MDPtoMC{\MDP}{\sigma_R}}(s)$ only visits $R\backslash r$, and this subtree would itself define a winning strategy for $\wedge{\setuniv}{\setqual}$, breaking the minimality assumption. As a consequence, for all prefixes $\rho$, there exists a path beginning with $\rho$ that eventually reaches a state of $R'$. Using the reduction to a parity-B\"uchi game which underlies Lemma~\ref{thm:reach-Almagor}, we deduce from the density argument presented in Lemma~\ref{lem:bp} that we can build $\sigma$ from $\sigma_R$ such that $\satisfy{s}{\sigma}{\MDP_{\downharpoonright R}} \limwedgeone{\induniv\in\setuniv}(\qual(\parobj_a) \wedge \qual(\event R^{\max}_{\even}(\parobj_a)))$.
It remains to argue that $R^{\max}_{\even}(\parobj_a)$ is non-empty to prove this item. This is necessarily true, otherwise $R^{\max}_{\odd}(\parobj_a)$ would be non-empty, and $\sigma_R$ would not ensure $\univ(\parobj_a)$ in $R$ (as $R^{\max}_{\odd}(\parobj_a)$ would be a dense subset).

Item \textbf{3)}. This is trivial as, for $s \in R$, we have that  $\sigma_R$ enforces $\satisfy{s}{\sigma_R}{\MDP_{\downharpoonright R}} \combi{\setuniv}{\setqual}$, a stronger property.
\end{proof}
Finally, we state the following lemma. It refines Lemma~\ref{lem:ugec}
by showing that some \typetwo EC of $\ugec(\MDP)$ can be reached almost-surely while satisfying $\limwedgeone{\induniv\in\setuniv}(\univ(\parobj_a)$. We define $\\ \mathcal{T}_{\mathbf{II},\MDP,\setuniv,\setqual}^{\min}= {\displaystyle \cup_{R \in\min_{\subseteq}(\mathcal{S})} R}$, that is the set of all states that belong to a minimal set $R$ of $\mathcal{S}$.


\begin{lemma}
\label{lemma_rightleft}
If $\satisfies{s_0}{\MDP} \combi{\setuniv}{\setqual}$ then $\satisfies{s_0}{\MDP} \limwedgeone{\induniv\in\setuniv}(\univ(\parobj_a) \wedge \qual(\event \mathcal{T}_{\mathbf{II},\MDP,\setuniv,\setqual}^{\min}))$.
\end{lemma}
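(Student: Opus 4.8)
The plan is to follow the same route as the left-to-right implication of Lemma~\ref{lem:SAS}, only with the smaller target set $\mathcal{T}_{\mathbf{II},\MDP,\setuniv,\setqual}^{\min}$ in place of $\mathcal{T}_{\mathbf{II},\MDP,\setuniv,\setqual}$. First I would fix a witness strategy $\sigma$ with $\satisfy{s_0}{\sigma}{\MDP}\combi{\setuniv}{\setqual}$. By Lemma~\ref{pr-closed-a-q} every state reached by $\sigma$ satisfies $\combi{\setuniv}{\setqual}$, hence also $\limwedgeone{\induniv\in\setuniv}\univ(\parobj_a)$, so $\sigma$ never leaves the sub-MDP $\MDP'$ obtained from $\MDP$ by pruning every state (and every forbidden action) that fails $\limwedgeone{\induniv\in\setuniv}\univ(\parobj_a)$. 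This pruning is legitimate by the remark following Lemma~\ref{lem:implies_is_univ}, and it changes neither the family $\mathcal{S}$ nor $\mathcal{T}_{\mathbf{II},\MDP,\setuniv,\setqual}^{\min}$, because every witness of $\combi{\setuniv}{\setqual}$ must stay inside $\MDP'$ while conversely a strategy of $\MDP'$ is a strategy of $\MDP$. From here on I work inside $\MDP'$, where the hypothesis of Lemma~\ref{lem:implies_is_univ} --- namely that $\satisfies{s}{\MDP'}\limwedgeone{\induniv\in\setuniv}\univ(\parobj_a)$ for all states $s$ --- holds by construction.

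The heart of the argument is to show that $R := \mathcal{T}_{\mathbf{II},\MDP,\setuniv,\setqual}^{\min}$ is dense in $\sigma$ from $s_0$ in the sense of Definition~\ref{def:dens}. Take any $\rho \in \fpaths^{\MDPtoMC{\MDP'}{\sigma}}(s_0)$ ending in a state $\tilde s$ and let $\sigma[\rho]$ be the residual of $\sigma$ after $\rho$. Since parity objectives are prefix-independent in both the sure and the almost-sure semantics (this is the same observation as the one used just before Lemma~\ref{lem:ugec}), $\satisfy{\tilde s}{\sigma[\rho]}{\MDP}\combi{\setuniv}{\setqual}$, hence the set $R_\rho := \mathtt{States}(\paths^{\MDPtoMC{\MDP}{\sigma[\rho]}}(\tilde s))$ belongs to $\mathcal{S}$. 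As $\mathcal{S}$ is a finite family of subsets of the finite set $S$, the non-empty family $\{R' \in \mathcal{S} \mid R' \subseteq R_\rho\}$ has a $\subseteq$-minimal element, which is then $\subseteq$-minimal in $\mathcal{S}$; so there is $R^\star \in \min_{\subseteq}(\mathcal{S})$ with $R^\star \subseteq R_\rho$. By definition of $R_\rho$ this means that some state of $R^\star \subseteq R$ lies on a finite path consistent with $\sigma[\rho]$, i.e.\ $\rho$ has an extension $\rho\cdot\rho'$ consistent with $\sigma$ with $\last{\rho\cdot\rho'} \in R$, which is exactly density. (Non-emptiness of $\mathcal{T}_{\mathbf{II},\MDP,\setuniv,\setqual}^{\min}$ follows the same way, since $\mathtt{States}(\paths^{\MDPtoMC{\MDP}{\sigma}}(s_0)) \in \mathcal{S}$.)

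With density in hand the conclusion follows by the Section~\ref{sec:typeone} machinery, exactly as in the paragraph preceding Lemma~\ref{lem:ugec}. Lemma~\ref{lem:bp} applied to $\sigma$ and $R$ gives $\satisfies{s_0}{G^{\MDP'}_{R,\{\parobj_{\induniv}\pipe \induniv\in\setuniv\}}}\limwedgeone{\induniv\in\setuniv}\univ(\parobj_a)\wedge\univ(\Box\event B)$; the equivalence established in the proof of Lemma~\ref{thm:reach-Almagor} converts this into $\satisfies{s_0}{\MDP'}\limwedgeone{\induniv\in\setuniv}\univ(\neg (\event R)\rightarrow\parobj_a)\wedge\qual(\event R)$; and Lemma~\ref{lem:implies_is_univ}, whose hypothesis holds in $\MDP'$, upgrades this to $\satisfies{s_0}{\MDP'}\limwedgeone{\induniv\in\setuniv}\univ(\parobj_a)\wedge\qual(\event R)$. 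A strategy realizing this in $\MDP'$ is in particular a strategy of $\MDP$, so $\satisfies{s_0}{\MDP}\limwedgeone{\induniv\in\setuniv}(\univ(\parobj_a)\wedge\qual(\event \mathcal{T}_{\mathbf{II},\MDP,\setuniv,\setqual}^{\min}))$, which is the claim. Note moreover that by Lemma~\ref{lem:ugec} each $R^\star \in \min_{\subseteq}(\mathcal{S})$ is a \typetwo EC, so $\mathcal{T}_{\mathbf{II},\MDP,\setuniv,\setqual}^{\min} \subseteq \mathcal{T}_{\mathbf{II},\MDP,\setuniv,\setqual}$, which is precisely why this statement refines Lemma~\ref{lem:ugec}.

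The step I expect to be the main obstacle is the density argument: density must be witnessed by continuations of $\sigma$ itself, so one has to route through the residual strategies $\sigma[\rho]$, observe that their reachable-state sets are members of $\mathcal{S}$, and exploit finiteness of $\mathcal{S}$ to locate a $\subseteq$-minimal member below each of them (which by Lemma~\ref{lem:ugec} is a \typetwo EC lying inside $\mathcal{T}_{\mathbf{II},\MDP,\setuniv,\setqual}^{\min}$). A secondary technical point is verifying that the pruning to $\MDP'$ leaves $\mathcal{S}$, and hence $\mathcal{T}_{\mathbf{II},\MDP,\setuniv,\setqual}^{\min}$, unchanged, so that Lemma~\ref{lem:implies_is_univ} can be invoked with its hypothesis met.
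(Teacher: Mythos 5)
Your proposal is correct and follows essentially the same route as the paper's proof: establish that $\mathcal{T}_{\mathbf{II},\MDP,\setuniv,\setqual}^{\min}$ is dense in the tree of the witness strategy $\sigma$ via Lemma~\ref{pr-closed-a-q} applied to residual strategies, and then invoke the density machinery of Lemma~\ref{lem:bp} together with the Streett--B\"uchi game reduction to build $\sigma'$. You merely make explicit two steps the paper leaves implicit --- the minimal-element argument locating some $R^\star \in \min_{\subseteq}(\mathcal{S})$ inside each residual reachable set, and the pruning needed to meet the hypothesis of Lemma~\ref{lem:implies_is_univ} --- both of which are sound.
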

\begin{proof}
Let $\sigma$ be a witness for $\satisfies{s_0}{\MDP} \combi{\setuniv}{\setqual}$. By Lemma~\ref{lem:ugec}, we know that $\ugec(\MDP)$ is non-empty, and so is $\mathcal{T}_{\mathbf{II},\MDP,\setuniv,\setqual}^{\min}$. Furthermore, we claim that $\mathcal{T}_{\mathbf{II},\MDP,\setuniv,\setqual}^{\min}$ is dense in the tree induced by $\paths^{\MDPtoMC{\MDP}{\sigma}}(s_0)$. Indeed, by Lemma~\ref{pr-closed-a-q}, after every prefix $\rho \in \fpaths^{\MDPtoMC{\MDP}{\sigma}}(s_0)$, the following property holds: $\satisfy{\mathtt{Last}(\rho)}{\sigma[\rho]}{\MDP} \combi{\setuniv}{\setqual}$ (hence $\mathcal{T}_{\mathbf{II},\MDP,\setuniv,\setqual}^{\min}$ is reached, repeating the previous arguments). Since this holds for all prefixes of $\fpaths^{\MDPtoMC{\MDP}{\sigma}}(s_0)$, we have that
$\mathcal{T}_{\mathbf{II},\MDP,\setuniv,\setqual}^{\min}$ is indeed dense in the tree of $\sigma$.  Hence, again using the density argument presented in Lemma~\ref{lem:bp}, we can build $\sigma'$ from $\sigma$ such that $\satisfy{s_0}{\sigma'}{\MDP} \limwedgeone{\induniv\in\setuniv}(\univ(\parobj_a) \wedge \qual(\event \mathcal{T}_{\mathbf{II},\MDP,\setuniv,\setqual}^{\min}))$.
\end{proof}

We end this section with the following observation. 
Lemma \ref{pr-closed-a-q} implies that winning strategies only visit states belonging to $\llbracket\combi{\setuniv}{\setqual}\rrbracket$. 
As a consequence, pruning the states that do not satisfy $\combi{\setuniv}{\setqual}$ does not affect correctness. We use this pruned MDP in the rest of the paper. We state this formally:
\begin{assumption}
\label{as:pruned}
For every state $s$ of\ \ $\MDP$, we have: $\\ s\!\in\!\llbracket\combi{\setuniv}{\setqual}\rrbracket$.
\end{assumption}
We detail how to do this pruning in Section~\ref{sec:complex}, we use Lemma~\ref{lem:SAS} to find the states that do not satisfy the objective.




\section{\typethree end-components}
\label{sec:typethree}
In this section, we define \typethree ECs.
These end-components are used to characterize the winning strategies for formulas of the form $\combi{\setuniv}{\setqual}\wedge \posi(\parobj_{\indposi})$.
To do so, we show in Lemma~\ref{lem:aqp-ec} that all \typethree ECs satisfy such a formula. 
In Lemma~\ref{lem:SASPr}, we use the previous lemma and the technical Lemma~\ref{lem:probToAS} to relate the formula $\combi{\setuniv}{\setqual}\wedge \posi(\parobj_{\indposi})$ to the almost-sure reachability of the set of \typethree ECs under the constraint $\combi{\setuniv}{\setqual}$. 
We explain in Section~\ref{sec:nz_e} how to compute reachability of a set of states under the constraint $\combi{\setuniv}{\setqual}$, and in Section~\ref{sec:complex} we explain how to compute the set of \typethree ECs.


Given two sets of parity conditions $\{\parobj_{\induniv}\pipe \induniv\in\setuniv\}$, $\{\parobj_{\indqual}\pipe \indqual\in\setqual\}$ and another parity condition $\parobj_{\indposi}$, an end-component $C$ of $\MDP$ is \typethree($\setuniv,\setqual,\parobj_{\indposi}$)
if the following two properties hold:
\begin{itemize}
	\item $\mathbf{(III_1)}$ $\forall\, s\in C,\, \satisfies{s}{\MDP} \combi{\setuniv}{\setqual}$;

	\item $\mathbf{(III_2)}$ $\forall\, s\in C,\, \satisfies{s}{\MDP{\downharpoonright C}} \limwedgeone{\induniv\in\setuniv}\qual(\parobj_{\induniv})\wedge \limwedgeone{\indqual\in\setqual}\qual(\parobj_{\indqual})\wedge\qual(\parobj_{\indposi})$
\end{itemize}
We write \typethree($\setuniv,\setqual,\parobj_{\indposi}$) EC as \typethree ECs when the parity sets are clear from the context.
We note that condition $\mathbf{(III_1)}$ may require an infinite memory strategy (see Lemma~\ref{lem:SAS}), and can always be satisfied in the pruned MDP, due to Assumption~\ref{as:pruned}. Condition $\mathbf{(III_1)}$ can be checked using~\cite{etessami2007multi}.
Note that condition $\mathbf{(III_2)}$ is only about the parity conditions indexed by $\setuniv$ and $\setqual$, and must hold while staying inside the end-component $C$, but the witness strategy for $\mathbf{(III_1)}$ may leave $C$.
This notion strengthens in a non-trivial way the notion of \textit{very-good end-component} (VGEC in \cite{BRR17}) which are \typethree ECs where $\setuniv$ is a singleton, $\setqual=\varnothing$, and the $\parobj_{\indposi}$ stays as it is.
From condition $\mathbf{(III_1)}$ and Lemma~\ref{lem:SAS} we know that if there exists a \typethree EC in the MDP $\MDP$ then there also exists a  \typetwo EC.
We introduce the following notations: $\nec(\MDP,\setuniv,\setqual,\parobj_{\indposi})$ is the set of all \typethree($\setuniv,\setqual,\\ \parobj_{\indposi}$) ECs of $\MDP$, and $\mathcal{T}_{\mathbf{III},\setuniv,\setqual,\parobj_{\indposi}} = {\displaystyle \cup_{C \in \nec(\MDP,\setuniv,\setqual,\parobj_{\indposi})} C}$ is the set of states belonging to some \typethree EC in $\MDP$. 

In the sequel, we relate \typethree ECs to the formula $\\ \combi{\setuniv}{\setqual}\\ \wedge \posi(\parobj_{\indposi})$. In particular, from all states belonging to a {\sf Type\:{\rm III}} ECs there exists a strategy satisfying the formula $\combi{\setuniv}{\setqual}\wedge \posi(\parobj_{\indposi})$. We illustrate this by the following example.

\begin{example}
Consider the example in Figure~\ref{fig:VGEC}.
We show a strategy satisfying $\univ(\parobj_1) \wedge \qual(\parobj_2)\wedge \posi(\parobj_3)$ in an MDP that is also a \typethree EC:
All the states satisfy condition  $\mathbf{(III_1)}$ when action $a$ is chosen from state $s$, and all the states satisfy condition $\mathbf{(III_2)}$ when action $b$ is chosen from state $s$.

\begin{figure}[t]
\centering
\vspace{-15pt}
\hspace{-15pt}
\centering
\scalebox{0.9}{
	\begin{tikzpicture}
		
		\node[player] (s2) at (4,1) {2,2,1} ;
		\node[player,initial,initial above,initial text={}] (s4) at (6,1) {1,1,1} ;
		\node[player] (s5) at (8,2) {1,1,1} ;
		\node[player] (s6) at (8,0) {2,2,2} ;


		\node () at (6,0.3) {\large {\color{blue}$s$}} ;

 \path[-latex]  
				(s4) edge node[above] {\small $a,1$} (s2)
				(s4) edge[bend left=20] node[above] {\small $b,1$} (s5)
				(s5) edge[bend left=20] node[below] {\small $a,0.5$} (s4)
				(s5) edge node[right] {\small $a,0.5$} (s6)
				(s6) edge[bend left=20] node[below] {\small $a,1$} (s4)
		    	(s2) edge[loop above] node[above] {\small $a,1$} (s2)
				
		;
	\end{tikzpicture}
}
\caption{\label{fig:VGEC}An example of a \typethree EC}
\end{figure}
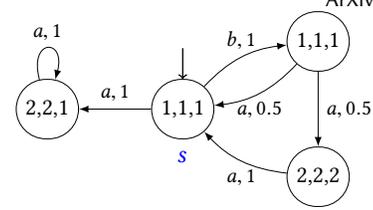
Now the strategy from $s$ to satisfy  $\univ(\parobj_1) \wedge \qual(\parobj_2)\wedge \posi(\parobj_3)$  is the following.
In \emph{round $1$}, action $b$ is chosen some $i_0$ times.
If the state with parity value ($2,2,2$) is visited 
in round $1$,
then proceed to round $2$. Otherwise action $a$ is chosen from state $s$ at the end of round $1$.
In round $2$, action $b$ is chosen $i_1=i_0+1$ times and so on, resulting in action $b$ being chosen $i_j=i_0+j$ times in round $j$.
From every round, we proceed to the next round if the state with parity value ($2,2,2$) is visited in the current round, or otherwise we switch to playing action $a$ from state $s$.
Now we compute the probability of never switching to action $a$ in state $s$.
The probability of not choosing action $a$ from $s$ after the first round is $1-2^{-i_0}$.
The probability of not choosing $a$ from $s$ after the first round and as well as after the second round is $(1-2^{-i_0})\cdot (1-2^{-(i_0+1)})$,
and thus the probability of not choosing $a$ from $s$ after each of the first $n$ rounds is $p(n)=\Pi _{j=0}^{n-1}(1-2^{-(i_0+j)})$.
In~\cite{BRR17}, it has been shown that $\lim_{n \rightarrow \infty}p(n)>0$, implying that with non-zero probability, action $a$ will never be played in $s$.
Hence with non-zero probability $\parobj_3$ holds.
\end{example}


\begin{lemma}
\label{lem:aqp-ec}
Given an MDP $\MDP$, and a \typethree EC $C$ in $\MDP$, for all states $s\in C$, we have $\satisfies{s}{\MDP}\combi{\setuniv}{\setqual}\wedge \posi(\parobj_{\indposi})$.
\end{lemma}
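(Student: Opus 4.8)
The goal is to show that within a \typethree EC $C$, from every state $s \in C$ there is a single strategy satisfying $\combi{\setuniv}{\setqual}\wedge\posi(\parobj_{\indposi})$. The natural approach mirrors the construction used in Example following Figure~\ref{fig:VGEC} and in~\cite{BRR17}: combine a witness $\sigma_1$ for $\mathbf{(III_1)}$ (guaranteed by $\mathbf{(III_1)}$, i.e.\ $\satisfies{s}{\MDP}\combi{\setuniv}{\setqual}$, and realized concretely by the strategy $\sigma_C$ from Lemma~\ref{lem:suf} which stays in $C$ --- recall $C$ is a \typeone($\setuniv,\setuniv$) EC by $\mathbf{(III_2)}$'s prerequisites and is in fact a \typetwo EC since $\mathbf{(III_2)}$ strengthens $\mathbf{(II_2)}$) with a randomized memoryless witness $\sigma_2$ for $\mathbf{(III_2)}$, which almost-surely satisfies all of $\parobj_{\induniv}$, $\parobj_{\indqual}$ \emph{and} $\parobj_{\indposi}$ while staying in $C$. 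First I would note that $C$ is a \typetwo EC (so Lemma~\ref{lem:suf} applies): indeed $\mathbf{(III_2)}$ implies $\mathbf{(II_2)}$, and condition $\mathbf{(II_1)}$ must be checked --- here one invokes that $\mathbf{(III_2)}$ guarantees a sub-EC $D$ with $D^{\max}_{\even}(\parobj_i)\neq\varnothing$ for all $i\in\setuniv$, which via Lemma~\ref{lemma_stratTwo1}/\ref{lemma_stratTwo2} and the earlier development gives $\mathbf{(II_1)}$ as well (alternatively $\mathbf{(III_1)}$ plus Assumption~\ref{as:pruned} and Lemma~\ref{lem:SAS} localizes a \typetwo EC inside $C$; by strong connectivity and maximality one argues $C$ itself is \typetwo).

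The core construction is a round-based strategy. Fix $s\in C$. In round $j$ (starting $j=0$), play $\sigma_2$ for $i_j = i_0 + j$ steps, for some fixed $i_0\in\mathbb{N}$ to be chosen. If during round $j$ a state of $C^{\max}_{\even}(\parobj_{\indposi})$ (the witness states for $\parobj_{\indposi}$ inside the sub-EC $D$ of $\mathbf{(III_2)}$) is visited, proceed to round $j+1$; otherwise abandon the round-based behavior and switch permanently to $\sigma_C$ (the strategy of Lemma~\ref{lem:suf}, which satisfies $\combi{\setuniv}{\setqual}$ from wherever we are, since $C$ is a \typetwo EC and $\sigma_C$ is defined for every state of $C$). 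The key quantitative fact, already established in~\cite{BRR17} and reused in the Example, is that if $p_j$ denotes a lower bound on the probability of visiting $C^{\max}_{\even}(\parobj_{\indposi})$ within $i_j$ steps when playing $\sigma_2$ (which is bounded below since $\sigma_2$ reaches this set almost-surely and $C$ is finite, via Proposition~\ref{prop:opti_reach}), then choosing $i_0$ large enough makes $\prod_{j\ge 0} p_j > 0$. Hence with strictly positive probability we never switch, i.e.\ we play $\sigma_2$ forever; on such paths $\parobj_{\indposi}$ is satisfied almost-surely by $\sigma_2$, so $\posi(\parobj_{\indposi})$ holds.

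It remains to check $\combi{\setuniv}{\setqual}=\limwedgeone{\induniv\in\setuniv}\univ(\parobj_{\induniv})\wedge\limwedgeone{\indqual\in\setqual}\qual(\parobj_{\indqual})$. For the sure conditions: every path consistent with the strategy either (a) eventually switches to $\sigma_C$ forever, or (b) plays $\sigma_2$ forever; in case (a) prefix-independence of parity and the fact that $\sigma_C$ ensures $\univ(\parobj_{\induniv})$ give the result, and in case (b) $\sigma_2$ stays in $C$ and (being a $\mathbf{(III_2)}$-witness) surely satisfies each $\parobj_{\induniv}$ on paths that stay in $D$ --- but one must be slightly careful, since $\sigma_2$ only \emph{almost}-surely confines to $D$; the point is that $\sigma_2$ never leaves $C$, and on any path staying in $C$ forever under $\sigma_2$ the parity conditions $\parobj_{\induniv}$ hold because... here is where I expect the main obstacle. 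The cleanest fix is to observe that by Lemma~\ref{lemma_stratTwo2} we may take $\sigma_2$ to confine to $D$ with probability one, but "surely" is stronger; instead one should argue via the round structure that on the measure-zero set of non-confining paths we will (almost surely, but that is enough for the $\qual$ parts) detect failure and switch --- so the honest statement is: $\univ(\parobj_{\induniv})$ is obtained by noting that whenever $\sigma_2$ is played forever, the path must stay in $C$, and $C$ being a \typeone($\setuniv,\setuniv$) EC together with the round-based re-invocation of $C^{\max}_{\even}(\parobj_{\induniv})$ forces the max infinitely-visited priority for each $\parobj_{\induniv}$ to be even. For the almost-sure conditions $\limwedgeone{\indqual\in\setqual}\qual(\parobj_{\indqual})$: with probability one, either we switch to $\sigma_C$ (which ensures all $\qual(\parobj_{\indqual})$, being a \typetwo witness), or we play $\sigma_2$ forever and confine to $D$ with $D^{\max}_{\even}(\parobj_{\indqual})\neq\varnothing$, so each $\parobj_{\indqual}$ holds; either way the $\qual$ conditions are met. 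This completes the verification. The main difficulty, as flagged, is reconciling the \emph{sure} parity requirements $\univ(\parobj_{\induniv})$ with a strategy that on a positive-measure branch commits to $\sigma_2$ forever --- the resolution is that $\sigma_2$ can be taken to stay inside $C$ with certainty (not just inside $D$ with probability one), and $C$ is \typeone, so $\sigma_C$'s guarantee on $C$ and the round mechanism's repeated visits to the $C^{\max}_{\even}$ sets close the gap.
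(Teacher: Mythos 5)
Your round-based construction has the right overall shape (it is the shape of the paper's proof and of the example after Figure~\ref{fig:VGEC}), but the switching criterion you chose breaks the sure part of the objective. You test, in each round, whether $C^{\max}_{\even}(\parobj_{\indposi})$ was visited. The paper instead tests whether $C^{\max}_{\even}(\parobj_{\induniv})$ was visited for \emph{every} $\induniv\in\setuniv$, and this choice is what makes $\limwedgeone{\induniv\in\setuniv}\univ(\parobj_{\induniv})$ hold: on any path that never triggers a switch, each $C^{\max}_{\even}(\parobj_{\induniv})$ is, by construction of the test, visited in every round and hence infinitely often, so each $\parobj_{\induniv}$ is surely satisfied on that path. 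With your test, a path can pass the $\parobj_{\indposi}$ check forever while visiting some $C^{\max}_{\even}(\parobj_{\induniv})$ only finitely often; such paths have measure zero under $\sigma_2$, but measure zero is not good enough for a sure ($\univ$) condition, and the fact that $C$ admits \emph{some} strategy winning $\univ(\parobj_{\induniv})$ does not constrain the paths produced by $\sigma_2$. Your closing sentence appeals to ``the round mechanism's repeated visits to the $C^{\max}_{\even}$ sets,'' but your mechanism only enforces repeated visits to $C^{\max}_{\even}(\parobj_{\indposi})$, so the gap is not closed. (The $\posi(\parobj_{\indposi})$ part does not need the test to mention $\parobj_{\indposi}$ at all: conditioned on playing $\sigma_2$ forever --- an event of probability at least $1-\varepsilon>0$ --- $\sigma_2$ satisfies $\parobj_{\indposi}$ almost surely by $\mathbf{(III_2)}$.)

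A second, independent problem is your fallback strategy. You switch to $\sigma_C$ from Lemma~\ref{lem:suf}, which requires $C$ to be a \typetwo EC, and you argue $C$ is \typetwo because $\mathbf{(III_2)}$ yields a sub-EC $D$ with all $D^{\max}_{\even}(\parobj_i)\neq\varnothing$. That only gives the almost-sure condition $\mathbf{(II_2)}$; condition $\mathbf{(II_1)}$ demands a \emph{sure} parity guarantee while confined to $C$, which does not follow from $\mathbf{(III_2)}$, and $\mathbf{(III_1)}$ only asserts $\satisfies{s}{\MDP}\combi{\setuniv}{\setqual}$ in the full MDP --- the witness may leave $C$. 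So a \typethree EC need not be \typetwo, and $\sigma_C$ may not exist on $C$. The paper sidesteps this entirely by taking the fallback to be $\sigma_1$, an arbitrary witness of $\mathbf{(III_1)}$ (allowed to leave $C$), and then concluding $\combi{\setuniv}{\setqual}$ on switching paths by prefix-independence. Replacing your test set by $\bigcup$-over-$\setuniv$ of the $C^{\max}_{\even}(\parobj_{\induniv})$ (conjunctively, one per round) and your fallback by the $\mathbf{(III_1)}$ witness turns your argument into the paper's proof.
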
 
\begin{proof}
Consider some $\varepsilon>0$, and $\fun{f}{\mathbb{N}_0}{\mathbb{Q}\cap (0,1]}$ a series of probabilities such that the infinite product $\prod_{i\in\mathbb{N}_0}{} f(i)>1-\varepsilon$. 
Let $\sigma_1$ be a strategy satisfying ($\mathbf{III_1}$), and $\sigma_2$ be a strategy satisfying ($\mathbf{III_2}$).
Using Proposition~\ref{prop:opti_reach}, we associate with $\sigma_2$ a sequence of numbers $\fun{g}{\mathbb{N}_0}{\mathbb{N}_0}$ such that if $\sigma_2$ is played for $g(i)$ steps then for all $\induniv\in\setuniv$ we have that $C^{\max}_{\even}(\parobj_{\induniv})$ is visited with probability at least $f(i)$. 
We consider the following infinite-memory strategy $\sigma$ by combining strategies $\sigma_1$ and $\sigma_2$ as follows:
\begin{itemize}
\item a) Let $i=0$
\item b) Play $\sigma_2$ for $g(i)$ steps. Let $i=i+1$.
\item c) if for all $\induniv\in\setuniv$ we have $C^{\max}_{\even}(\parobj_{\induniv})$ was visited, then go to b), else play $\sigma_1$ forever.
\end{itemize}
When following $\sigma$, in each round $i$, we have probability at least $f(i)$ of continuing to play $\sigma_2$. The probability of playing $\sigma_2$ forever is thus the same as the probability of visiting all $C^{\max}_{\even}(\parobj_{\induniv})$ in each round, that is,
at least $1-\varepsilon$, and thus satisfying the parity conditions $\parobj_{\indqual}$ for $\indqual\in\setqual$ and $\parobj_{\indposi}$ with probability $1-
\varepsilon$. In all the paths where $\sigma_2$ keeps being played, for all $\induniv\in\setuniv$ we have that $\parobj_{\induniv}$ is satisfied since $C^{max}_{even}(\parobj_{\induniv})$ is visited infinitely often. 
For the plays switching to $\sigma_1$ at some point (we have probability $\varepsilon$ to switch to one of these plays at some point), we have $\combi{\setuniv}{\setqual}$. 
This implies that considering all possibilities, we have that all conditions $\parobj_{\indqual}$ are satisfied with probability $1-\varepsilon+\varepsilon=1$, and that all possible plays satisfy all $\parobj_{\induniv}$.
Finally, $\parobj_{\indposi}$ is satisfied with probability $1-\varepsilon$, and hence with probability greater than $0$.
Thus for $s\in C$, we have $\satisfy{s}{\sigma}{\MDP}\combi{\setuniv}{\setqual}\wedge \posi(\parobj_{\indposi})$.
\end{proof}


We now relate the existence of a \typethree end component with the $\limwedgeone{\induniv\in\setuniv}\univ(\parobj_{\induniv})$ objective.
We begin with the following observation.
\begin{lemma} \label{lem:probToAS}
Given an end-component $C$ in an MDP $\MDP$, a set of parity conditions $\{\parobj_{x}\pipe x\in X\}$, for all states $s \in C$, we have $\satisfies{s}{\MDP{\downharpoonright C}}\qual(\limwedgeone{x\in X}\parobj_{x})$ iff $\satisfies{s}{\MDP{\downharpoonright C}}\posi(\limwedgeone{x\in X}\parobj_{x})$ .
\end{lemma}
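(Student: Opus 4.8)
The plan is to prove the two directions separately; the right-to-left implication is the only one that requires work, since $\qual$ trivially implies $\posi$ (an almost-sure event has positive probability). For the nontrivial direction, suppose $\satisfies{s}{\MDP{\downharpoonright C}}\posi(\limwedgeone{x\in X}\parobj_x)$, witnessed by a strategy $\sigma$ playing inside $C$. By Proposition~\ref{prop:long_run}, the set of states visited infinitely often along a $\sigma$-path is almost surely an end-component of $\MDP{\downharpoonright C}$; in particular, with positive probability the limit set is \emph{some} end-component $D \subseteq C$ on which all the conjoined parity conditions are satisfied, i.e.\ for every $x \in X$ the maximal $\parobj_x$-priority occurring in $D$ is even. (If no such $D$ existed, then almost surely some $\parobj_x$ would be violated, contradicting that the conjunction holds with positive probability.) Equivalently, $D$ is a sub-EC of $C$ with $D^{\max}_{\even}(\parobj_x)\neq\varnothing$ for all $x\in X$, which is exactly the structural object appearing in the proof of Lemma~\ref{lemma_stratTwo1}.

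Once such a $D$ is in hand, the second step is to exhibit a strategy that reaches $D$ and stays there almost surely, starting from an arbitrary state $s\in C$. Since $C$ is an end-component, it is strongly connected as a sub-MDP, so from any $s\in C$ one can reach $D$ with positive probability; and as recalled in the proofs of Lemma~\ref{lemma_stratTwo1} and Lemma~\ref{lemma_stratTwo2} (citing~\cite{BK08}), a uniform randomized memoryless strategy on $D$ visits every state of $D$ infinitely often, so that the limit set equals $D$ almost surely. Concretely, define $\sigma'$ to play a memoryless reachability strategy towards $D$ until $D$ is entered, and thereafter play the uniform randomized memoryless strategy on $D$; a standard argument (the probability of reaching $D$ within a bounded number of steps is bounded below, so $D$ is reached with probability $1$) shows $\prob_{\MDPtoMC{\MDP{\downharpoonright C}}{\sigma'}}[\{\pi \mid \infny(\pi)=D\}]=1$.

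Finally, because $D^{\max}_{\even}(\parobj_x)\neq\varnothing$ for every $x\in X$, every path whose limit set is $D$ satisfies $\parobj_x$ for all $x$, hence satisfies the conjunction $\limwedgeone{x\in X}\parobj_x$. Therefore $\satisfy{s}{\sigma'}{\MDP{\downharpoonright C}}\qual(\limwedgeone{x\in X}\parobj_x)$, which is what we wanted. The main obstacle is the first step: correctly extracting from a positive-probability witness the existence of an EC $D$ on which \emph{all} conditions simultaneously have even top priority — here one must be careful that it is the \emph{same} $D$ that works for every $x\in X$, which follows because along a single path the limit set is a single EC, and on a positive-probability set of paths this common limit set must satisfy every $\parobj_x$. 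The rest is a routine repackaging of the reachability-and-stay arguments already used for conditions $\mathbf{(II_2)}$ and $\mathbf{(III_2)}$.
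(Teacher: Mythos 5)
Your proposal is correct and follows essentially the same route as the paper's proof: the easy direction is dismissed, and for the converse you use Proposition~\ref{prop:long_run} to extract a sub-EC $D$ of $C$ whose maximal priority is even for every $\parobj_x$ (occurring as the limit set of a positive-probability set of winning paths), then reach $D$ with probability one inside $C$ and play a uniform randomized memoryless strategy within $D$. Your explicit remark that a single path has a single limit set, so the \emph{same} $D$ works for all $x\in X$, is exactly the point the paper relies on when asserting the existence of $D_X$.
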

%
\begin{proof}
The left to right implication is obvious. For the right to left implication, consider $\sigma$ such that $\satisfy{s}{\sigma}{\MDP{\downharpoonright C}}\posi(\limwedgeone{x\in X}\parobj_{x})$. By Proposition~\ref{prop:long_run}, when playing strategy $\sigma$ from $s$ there is probability $1$ that the set of states visited infinitely often is an end-component. We consider sets of states, such that every set forms an end-component with non-zero probability when playing  $\sigma$ from $s$. We call this set of sets of states $\mathcal{C}$. Formally $\mathcal{C}= \{D\mid \prob(\{\pi \in \paths^{\MDPtoMC{\MDP}{\sigma}}(s) \mid  \inf(\proj{S}(\pi))=D\})> 0$. There exists at least one sub-EC $D_X\in\mathcal{C}$ of $C$ such that for all $x\in X$, we have $\max(\{\parobj_x(s)\mid s\in D_X\})$ is even. We define a strategy $\sigma'$ as follows: since for reachability objective in an $MDP$ a deterministic memoryless strategy suffices, we use such a deterministic memoryless strategy inside $C$ until we reach a state $s\in D_X$; then we play a uniform randomized strategy that has probability $1$ of visiting all states of $D_X$ while staying inside $D_X$. We have $\satisfy{s}{\sigma'}{\MDP{\downharpoonright C}}\qual(\limwedgeone{x\in X}\parobj_{x})$, and the result follows.
\end{proof}

\stam{

\begin{figure}[t]\label{fig:posi}
\centering
\vspace{-15pt}
\hspace{-15pt}
\centering
\scalebox{1}{
	\begin{tikzpicture}
		\node[player,initial,initial text={}] (sinit) at (0,0) {1,1} ;
		\node[player] (s1) at (2,0) {1,2} ;
		\node[player] (s2) at (-2,0) {2,1} ;

		\node (s4) at (0,-.6) {\large $s$} ;
		\node (s5) at (2,-.6) {\large $s_1$} ;
		\node (s6) at (-2,-.6) {\large $s_2$} ;
		
 \path[-latex]  (sinit) edge[bend left=20] node[above] {\small $a,1$} (s1)
				(sinit) edge[bend left=20] node[above] {\small $a,1$} (s1)				
				(sinit) edge[bend right=20] node[above] {\small $b,1$} (s2)				
		;
	\end{tikzpicture}
}
\caption{\label{fig:CEposiexis}Counterexample to Remark~\ref{rmk:posi} and Remark~\ref{rmk:exis}}
\end{figure}
}

Now we state the main result of this section that relates the existence of a \typethree end component to the objective $\combi{\setuniv}{\setqual}\wedge \posi(\parobj_{\indposi})$.

\begin{lemma}
\label{lem:SASPr}
Given an MDP $\MDP$, two sets of parity conditions $\{\parobj_{\induniv}\pipe \induniv\in\setuniv\}$, $\{\parobj_{\indqual}\pipe \indqual\in\setqual\}$, and another parity condition $\parobj_{\indposi}$, for all states $s$, we have $\satisfies{s}{\MDP}\combi{\setuniv}{\setqual}\wedge \posi(\parobj_{\indposi})$ iff $\satisfies{s}{\MDP}\combi{\setuniv}{\setqual}\wedge \posi({\event \mathcal{T}_{\mathbf{III},\setuniv,\setqual,\parobj_{\indposi}}})$ .
\end{lemma}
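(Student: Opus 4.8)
The plan is to prove the two implications separately, mirroring the structure used for Lemma~\ref{lem:SAS}.

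\textbf{Left-to-right direction.} Suppose $\satisfy{s}{\sigma}{\MDP}\combi{\setuniv}{\setqual}\wedge\posi(\parobj_{\indposi})$ for some witness strategy $\sigma$. Since $\posi(\parobj_{\indposi})$ holds, there is a set of paths $\Pi\subseteq\paths^{\MDPtoMC{\MDP}{\sigma}}(s)$ of non-zero probability on which $\parobj_{\indposi}$ is satisfied. By Proposition~\ref{prop:long_run}, with probability one the set $\inf(\pi)$ is an EC, so there is at least one EC $D$ that arises as $\inf(\pi)$ with non-zero probability and on which $\parobj_{\indposi}$ is even (i.e. $D^{\max}_{\even}(\parobj_{\indposi})\neq\varnothing$). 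Moreover, because $\sigma$ is a witness for $\combi{\setuniv}{\setqual}$, every state visited along any path of $\MDPtoMC{\MDP}{\sigma}(s)$ lies in $\llbracket\combi{\setuniv}{\setqual}\rrbracket$ by Lemma~\ref{pr-closed-a-q}, and in particular every state of $D$ satisfies $\combi{\setuniv}{\setqual}$, giving condition $\mathbf{(III_1)}$ for $D$. To get $\mathbf{(III_2)}$, I argue that the restriction of $\sigma$ to the sub-MDP spanned by $D$ (after the path has settled into $D$) gives a strategy staying in $D$ that satisfies $\parobj_{\indposi}$ with non-zero probability; then, since all the $\parobj_{\induniv},\parobj_{\indqual}$ are also almost-surely satisfied inside $D$ (the reached EC of an almost-sure strategy must be even for all of them, as in Lemma~\ref{lemma_stratTwo2}), we have $\satisfies{s'}{\MDP{\downharpoonright D}}\limwedgeone{\induniv\in\setuniv}\qual(\parobj_{\induniv})\wedge\limwedgeone{\indqual\in\setqual}\qual(\parobj_{\indqual})\wedge\posi(\parobj_{\indposi})$ for $s'\in D$, and then Lemma~\ref{lem:probToAS} upgrades the last $\posi$ to $\qual$, yielding $\mathbf{(III_2)}$. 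Hence $D$ is a \typethree EC, so $D\subseteq\mathcal{T}_{\mathbf{III},\setuniv,\setqual,\parobj_{\indposi}}$, and since $\sigma$ reaches $D$ with positive probability, $\satisfy{s}{\sigma}{\MDP}\combi{\setuniv}{\setqual}\wedge\posi(\event\mathcal{T}_{\mathbf{III},\setuniv,\setqual,\parobj_{\indposi}})$.

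\textbf{Right-to-left direction.} Suppose $\satisfy{s}{\tau}{\MDP}\combi{\setuniv}{\setqual}\wedge\posi(\event\mathcal{T}_{\mathbf{III},\setuniv,\setqual,\parobj_{\indposi}})$. Then there is a finite prefix $\rho$ consistent with $\tau$ ending in some state $s'\in\mathcal{T}_{\mathbf{III},\setuniv,\setqual,\parobj_{\indposi}}$, i.e. $s'$ belongs to some \typethree EC $C$; moreover this prefix is taken with non-zero probability and, by Lemma~\ref{pr-closed-a-q}, $\satisfy{s'}{\tau[\rho]}{\MDP}\combi{\setuniv}{\setqual}$ so $s'$ still satisfies $\combi{\setuniv}{\setqual}$. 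By Lemma~\ref{lem:aqp-ec}, from $s'$ there is a strategy $\sigma_C$ with $\satisfy{s'}{\sigma_C}{\MDP}\combi{\setuniv}{\setqual}\wedge\posi(\parobj_{\indposi})$. Define $\sigma$ to play $\tau$ until the prefix $\rho$ is realised (reaching $s'$) and then switch to $\sigma_C$; on every other branch of $\tau$ keep playing $\tau$. Since $\tau$ and every continuation-by-$\sigma_C$ both ensure $\limwedgeone{\induniv\in\setuniv}\univ(\parobj_{\induniv})$ and $\limwedgeone{\indqual\in\setqual}\qual(\parobj_{\indqual})$, and these are prefix-independent, $\sigma$ ensures $\combi{\setuniv}{\setqual}$. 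The prefix $\rho$ is realised with probability $p>0$; conditioned on that event, $\sigma_C$ ensures $\parobj_{\indposi}$ with probability $q>0$, so $\parobj_{\indposi}$ holds with probability at least $pq>0$, i.e. $\posi(\parobj_{\indposi})$. Thus $\satisfy{s}{\sigma}{\MDP}\combi{\setuniv}{\setqual}\wedge\posi(\parobj_{\indposi})$.

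The main obstacle is the left-to-right direction: extracting from an arbitrary (infinite-memory) witness $\sigma$ a \typethree EC actually hit with positive probability. One must carefully separate the event ``$\parobj_{\indposi}$ holds'' into a positive-measure family of paths whose limit behaviour is confined to a single EC $D$, verify that $D$ simultaneously carries even-priority states for $\parobj_{\indposi}$ and for all the $\qual$-parities (so that $\mathbf{(III_2)}$ holds after applying Lemma~\ref{lem:probToAS}), and confirm $\mathbf{(III_1)}$ via Lemma~\ref{pr-closed-a-q}. The measure-theoretic bookkeeping — choosing $D$ among countably many candidate ECs with positive probability, and ensuring the residual strategy inside $D$ genuinely witnesses $\posi(\parobj_{\indposi})$ while staying in $D$ — is the delicate part; the rest is routine prefix-independence and strategy-splicing as in Lemmas~\ref{lemma_leftright} and~\ref{lemma_rightleft}.
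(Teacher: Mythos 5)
Your proposal is correct and follows essentially the same route as the paper: the right-to-left direction splices the reaching strategy with the Lemma~\ref{lem:aqp-ec} strategy inside a \typethree EC and uses prefix-independence, while the left-to-right direction uses Proposition~\ref{prop:long_run} to find a positive-probability limit EC where the conjunction of all parities holds with positive probability, upgrades it to almost-sure via Lemma~\ref{lem:probToAS} for condition $\mathbf{(III_2)}$, and gets $\mathbf{(III_1)}$ from Lemma~\ref{pr-closed-a-q} (equivalently, Assumption~\ref{as:pruned}, which is what the paper invokes). The measure-theoretic point you flag as delicate is handled no more rigorously in the paper than in your sketch.
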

\begin{proof}
We begin with the right to left implication.
Consider a strategy $\sigma_s$ such that $\satisfies{s, \sigma_s}{\MDP}\combi{\setuniv}{\setqual}\wedge \posi({\event \mathcal{T}_{\mathbf{III},\setuniv,\setqual,\parobj_{\indposi}}})$. We show below that $\satisfies{s}{\MDP}\combi{\setuniv}{\setqual}\wedge\posi(\parobj_{\indposi})$. First note that for all states $q\in \mathcal{T}_{\mathbf{III},\setuniv,\setqual,\parobj_{\indposi}}$, we consider a strategy $\sigma_{q}$ such that $\satisfy{q}{\sigma_q}{\MDP}\combi{\setuniv}{\setqual}\wedge \posi(\parobj_{\indposi})$ (we know that such a strategy always exists in a \typethree EC, thanks to Lemma~\ref{lem:aqp-ec}).
Now we construct a strategy $\sigma$, such that $\satisfy{s}{\sigma}{\MDP}\combi{\setuniv}{\setqual}\wedge \posi(\parobj_{\indposi})$. Strategy $\sigma$ is defined from $\sigma_s$ and $\sigma_q$ as follows: 
We play $\sigma_s$ from $s$.
If we reach a state $q$ belonging to a \typethree EC, we play $\sigma_q$ from $q$ forever. 
Since such a state $q$ can be reached with non-zero probability, and strategy $\sigma_q$ satisfies $\parobj_{\indposi}$ with non-zero probability, we have that $\sigma$ satisfies $\parobj_{\indposi}$ with non-zero probability.
As both $\sigma_s$ and $\sigma_q$ satisfy $\combi{\setuniv}{\setqual}$, and thus, while following $\sigma$, since every path in the corresponding MC ends up either following $\sigma_s$ forever or following $\sigma_q$ forever, we have that $\sigma$ also satisfies $\combi{\setuniv}{\setqual}$.


Now for the left to right implication, let $\sigma$ be a strategy such that $\satisfy{s}{\sigma}{\MDP}\combi{\setuniv}{\setqual}\wedge \posi(\parobj_{\indposi})
$.
It is easy to see that $\satisfy{s}{\sigma}{\MDP}\posi(\limwedgeone{\induniv\in\setuniv}(\parobj_{\induniv})\wedge \limwedgeone{\indqual\in\setqual}(\parobj_{\indqual})\wedge (\parobj_{\indposi}))
$.
From Proposition~\ref{prop:long_run}, there is probability $1$ that an  infinite path ends up in an end-component. 
Hence in the Markov chain $\MDP^{[\sigma]}$ there is a non-zero probability that an infinite path will reach an end-component $C$ such that for all states $s'\in C$, we have $\satisfy{s'}{\sigma}{\MDP{\downharpoonright C}}\posi(\limwedgeone{\induniv\in\setuniv}(\parobj_{\induniv})\wedge \limwedgeone{\indqual\in\setqual}(\parobj_{\indqual})\wedge (\parobj_{\indposi}))$.
From Lemma~\ref{lem:probToAS}, we thus have that for all $s'\in C$, there exists $\sigma'$ such that $\satisfy{s'}{\sigma'}{\MDP{\downharpoonright C}}\qual(\limwedgeone{\induniv\in\setuniv}(\parobj_{\induniv})\wedge \limwedgeone{\indqual\in\setqual}(\parobj_{\indqual})\wedge (\parobj_{\indposi}))$.
Thus condition $\mathbf{(III_2)}$ is satisfied.
As we consider a pruned MDP thanks to Assumption~\ref{as:pruned}, for all $s'\in C$ we have that $\satisfies{s'}{\MDP} \combi{\setuniv}{\setqual}$.

Thus $C$ is a \typethree EC that can be reached from $s$ with non-zero probability, and thus $\satisfy{s}{\sigma}{\MDP}\combi{\setuniv}{\setqual}\wedge \\ \posi({\event \mathcal{T}_{\mathbf{III},\setuniv,\setqual,\parobj_{\indposi}}})$.
\end{proof}

\section{Formulas with multiple Non-Zero and multiple Exists}
\label{sec:nz_e}
In this section, we discuss how to compute strategies for formulas that consist of several sure parity objectives, several almost-sure parity objectives, several non-zero parity objectives, and several existential parity objectives.
We show in Lemma~\ref{lem:2ep} that such a formula can be split into sub-formulas having a single non-zero or a single existential parity objective.
Further, we show in Lemma~\ref{lem:r_to_par} that a single non-zero parity objective can be transformed into an existential parity objective. 
We finally show in Lemma~\ref{lem:allconstraints2} how to check the satisfiability of a formula that consists of several sure parity objectives, several almost-sure parity objectives, and one existential parity objective.

\begin{lemma} \label{lem:2ep}
Given an MDP $\MDP$, a state $s$, four sets of parity conditions 
$\{\parobj_{\induniv}\pipe \induniv\in\setuniv\}$, $\{\parobj_{\indqual}\pipe \indqual\in\setqual\}$, 
$\{\parobj_{\indposi}\pipe \indposi\in\setposi\}$, and $\{\parobj_{\indexis}\pipe \indexis\in\setexis\}$, the following holds: 
$\satisfies{s}{\MDP}\combi{\setuniv}{\setqual}
\wedge 
\limwedgeone{\indposi\in\setposi}\posi(\parobj_{\indposi})
\wedge 
\limwedgeone{\indexis\in\setexis}\exis(\parobj_{\indexis})$ 
iff for all $\indposi\in\setposi$ we have $\satisfies{s}{\MDP}\combi{\setuniv}{\setqual}
\wedge 
\posi(\parobj_{\indposi})$, and for all $\indexis\in\setexis$, we have $\satisfies{s}{\MDP}\combi{\setuniv}{\setqual}
\wedge 
\exis(\parobj_{\indexis})$
\end{lemma}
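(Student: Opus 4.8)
The plan is to prove both directions. The left-to-right implication is immediate: a single witness strategy $\sigma$ for the conjunction $\combi{\setuniv}{\setqual}\wedge\limwedgeone{\indposi\in\setposi}\posi(\parobj_{\indposi})\wedge\limwedgeone{\indexis\in\setexis}\exis(\parobj_{\indexis})$ witnesses in particular each of the simpler formulas $\combi{\setuniv}{\setqual}\wedge\posi(\parobj_{\indposi})$ (for a fixed $\indposi$) and $\combi{\setuniv}{\setqual}\wedge\exis(\parobj_{\indexis})$ (for a fixed $\indexis$), since the semantics of conjunction only \emph{weakens} when we drop conjuncts.

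The interesting direction is right-to-left: we must \emph{stitch together} the separate witness strategies into a single one. First I would handle the pure-$\posi$ part. For each $\indposi\in\setposi$, fix a strategy $\sigma_{\indposi}$ with $\satisfy{s}{\sigma_{\indposi}}{\MDP}\combi{\setuniv}{\setqual}\wedge\posi(\parobj_{\indposi})$. The idea is to randomize over which $\sigma_{\indposi}$ to ``commit to'': from $s$, pick an index $\indposi\in\setposi$ uniformly at random and then play $\sigma_{\indposi}$ forever. Since $\sigma_{\indposi}$ ensures $\parobj_{\indposi}$ with non-zero probability, the combined strategy ensures $\parobj_{\indposi}$ with probability at least $\tfrac{1}{|\setposi|}\cdot(\text{that non-zero value})>0$ for every $\indposi$ simultaneously (different runs account for different conjuncts), and since \emph{every} $\sigma_{\indposi}$ satisfies $\combi{\setuniv}{\setqual}$, so does the mixture (each run follows some $\sigma_{\indposi}$, which is prefix-independent-safe for the $\univ$ atoms, and the $\qual$ atoms hold along each branch with probability $1$, hence overall). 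One subtlety: the randomization must happen at $s$ before any move; formally one introduces one extra ``choice'' step or uses the transition-system encoding of strategies with $|\setposi|$ initial modes chosen with the appropriate probabilities — this is routine given the strategy model in Section~\ref{sec:prelims}.

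Next I would fold in the $\setexis$ atoms. An $\exis(\parobj_{\indexis})$ atom only requires that \emph{some} path in $\paths^{\MDPtoMC{\MDP}{\sigma}}(s)$ satisfies $\parobj_{\indexis}$, i.e.\ a purely existential/one-player requirement about the support of the induced Markov chain. The key observation is that $\exis$ atoms are ``free'': given the mixed strategy $\sigma$ for $\combi{\setuniv}{\setqual}\wedge\limwedgeone{\indposi}\posi(\parobj_{\indposi})$ constructed above, one can further split a positive-probability branch of $\sigma$ to also realize each $\parobj_{\indexis}$ witness path, because we have a witness strategy $\sigma_{\indexis}$ for $\combi{\setuniv}{\setqual}\wedge\exis(\parobj_{\indexis})$ for each $\indexis$; mixing it in (again with positive probability on a fresh index, or appending it to one of the $\posi$-branches) adds the required satisfying path to the support without disturbing the $\univ$ atoms (which are universally quantified and prefix-independent, hence robust to adding branches that also satisfy them — note $\sigma_{\indexis}$ itself enforces $\univ(\parobj_{\induniv})$) and without disturbing the $\qual$ and already-established $\posi$ atoms (probabilities only redistribute, staying positive; the $\qual$ atoms hold on every branch). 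I would therefore build the final $\sigma$ by mixing, with strictly positive weights summing to $1$, all of $\{\sigma_{\indposi}\}_{\indposi\in\setposi}$ and $\{\sigma_{\indexis}\}_{\indexis\in\setexis}$ at the initial step.

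The main obstacle — and the step needing care — is verifying that this single mixture simultaneously satisfies \emph{all} the $\univ$ and $\qual$ conjuncts. For $\univ(\parobj_{\induniv})$: every run of the mixture follows some component strategy, each of which enforces $\parobj_{\induniv}$ \emph{surely}, and $\parobj_{\induniv}$ is prefix-independent, so the one initial randomization step cannot break it. For $\qual(\parobj_{\indqual})$: conditioned on choosing component $\sigma_j$, the $\parobj_{\indqual}$ condition holds with probability $1$; summing over the finitely many components, it holds with probability $1$ unconditionally. I would also remark (or cite Assumption~\ref{as:pruned} / the pruning discussion) that each $\sigma_j$ can be taken to only visit states in $\llbracket\combi{\setuniv}{\setqual}\rrbracket$, which makes the compatibility argument cleanest. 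Assembling these observations gives $\satisfy{s}{\sigma}{\MDP}\combi{\setuniv}{\setqual}\wedge\limwedgeone{\indposi\in\setposi}\posi(\parobj_{\indposi})\wedge\limwedgeone{\indexis\in\setexis}\exis(\parobj_{\indexis})$, completing the proof.
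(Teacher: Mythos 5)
Your proposal is correct and follows essentially the same route as the paper: the left-to-right direction is immediate, and for the converse the paper likewise takes one witness strategy per $\posi$ or $\exis$ conjunct and forms a single randomized strategy that selects among them uniformly at the start, checking (as you do, in more detail than the paper) that the $\univ$, $\qual$, $\posi$ and $\exis$ atoms are all preserved by the mixture. Your extra care about how the initial randomization fits the behavioral-strategy model is a reasonable elaboration of a point the paper glosses over, not a departure from its argument.
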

\begin{proof}
As the left to right implication is obvious, we prove here the other direction.
For $i\in\setposi\cup\setexis$, we consider a strategy $\sigma_i$ such that $\satisfy{s}{\sigma_i}{\MDP}\combi{\setuniv}{\setqual}\wedge Q_i(\parobj_{i})$ for the appropriate $Q_i\in \{\posi,\exis\}$. Now we construct a randomized strategy $\sigma$ given all $\sigma_i$ in which each $\sigma_i$ is chosen uniformly, that is with equal probability. Clearly, $\satisfies{s}{\MDP}\combi{\setuniv}{\setqual}
\wedge 
\limwedgeone{\indposi\in\setposi}\posi(\parobj_{\indposi})
\wedge 
\limwedgeone{\indexis\in\setexis}\exis(\parobj_{\indexis})$, 
and hence the result.
\end{proof}

We now show that a non-zero objective can be replaced with an existential parity objective.
Towards this, we first observe the following.
\begin{proposition}
\label{prop:reachpar}
Every reachability condition can be translated to a parity condition.
\end{proposition}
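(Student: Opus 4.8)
The plan is to show that the reachability objective $\event R$ towards a set $R \subseteq S$ can be captured by a parity condition $\parobj_R$ on the MDP, possibly after a trivial modification that makes states of $R$ absorbing (or, more precisely, after observing that whether a path visits $R$ is determined by its behaviour and we only need to record whether $R$ has been seen). The key observation is that reachability is a ``reach and stay'' style property at the level of priorities: we want to assign priorities so that exactly the paths visiting $R$ get an even maximal priority infinitely often.

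First I would make the states in $R$ absorbing, adding a self-loop $(s,a,s)$ for some $a\in Act(s)$ at each $s\in R$; this does not change whether $\event R$ holds, since once a path reaches $R$ the reachability condition is already satisfied and conversely a path avoiding $R$ is unaffected. Then I define the parity condition $\parobj_R$ by $\parobj_R(s) = 2$ for all $s\in R$ and $\parobj_R(s) = 1$ for all $s \notin R$. Now I would check: if a path $\pi$ visits $R$, then since $R$ is absorbing, $\pi$ eventually stays in $R$ forever, so $\inf(\pi)\subseteq R$ and every state in $\inf(\pi)$ has priority $2$; hence $\max\{\parobj_R(s)\mid s\in\inf(\pi)\}=2$, which is even, so $\pi$ satisfies $\parobj_R$. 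Conversely, if $\pi$ never visits $R$, then every state of $\pi$, and in particular every state of $\inf(\pi)$, has priority $1$, so the maximum priority seen infinitely often is $1$, which is odd, and $\pi$ does not satisfy $\parobj_R$. Thus a path satisfies $\parobj_R$ if and only if it visits $R$, establishing the translation for any of the $\univ$, $\qual$, $\posi$, $\exis$ semantics, since these are all defined uniformly in terms of the set of paths satisfying the condition.

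The main (mild) obstacle is the need to make $R$ absorbing: without this, a path could dip into $R$ once and then leave, in which case $\event R$ holds but $\inf(\pi)$ might consist entirely of priority-$1$ states, so the naive assignment fails. Making $R$ absorbing is harmless for the reachability objective but does modify the MDP; I would note explicitly (as the paper already does in the construction of the Streett-B\"uchi game and in Figure~\ref{fig:rv}) that this modification preserves all the objectives of interest on paths, because once $R$ is hit the continuation is irrelevant to $\event R$. Alternatively, if one wants to avoid modifying the transition structure, one can work on the product with a two-state monitor recording whether $R$ has been visited and put priority $2$ on the ``visited'' copies and priority $1$ on the others; I would mention this as the variant used when the ambient MDP must be kept intact, but the absorbing-$R$ version suffices for the present purposes.

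In writing this up I would keep it to a few lines: define $\parobj_R$, invoke the absorbing assumption on $R$, and give the two-direction argument above. Since Proposition~\ref{prop:reachpar} is used (in the subsequent Lemma~\ref{lem:r_to_par}) to replace a non-zero reachability-style objective by a parity objective, the uniform phrasing ``a path satisfies $\event R$ iff it satisfies $\parobj_R$'' is exactly what is needed, and no further work about strategies or probabilities is required — the equivalence of path sets immediately lifts to all four qualitative modes.
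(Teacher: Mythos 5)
Your core idea is right, but you have inverted which of your two constructions is the load-bearing one. The paper's proof is exactly your ``alternative'': it builds the product $\MDP'$ with state space $S\times\{1,2\}$, i.e.\ a two-copy monitor recording whether $R$ has been visited, and assigns priority $i$ to $(s,i)$. Your primary route --- making $R$ absorbing and setting $\parobj_R=2$ on $R$ and $1$ elsewhere --- does correctly characterize $\event R$ as a parity condition \emph{at the level of the reachability atom alone}, but your claim that ``the absorbing-$R$ version suffices for the present purposes'' is where the argument breaks. The present purpose is Lemma~\ref{lem:r_to_par} (and Lemma~\ref{lem:SASP_to_SASE}), where the new parity condition must coexist with the conjunction $\combi{\setuniv}{\setqual}$ on the \emph{same} MDP under the \emph{same} strategy. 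Making $R$ absorbing changes the set of continuations available after a path enters $R$: a strategy that previously passed through $R$ and then went on to satisfy the sure and almost-sure parity objectives $\parobj_{\induniv}$, $\parobj_{\indqual}$ can no longer do so, and conversely states of $R$ may become trivially winning or losing for those objectives. So the equivalence ``$\satisfies{s}{\MDP}\combi{\setuniv}{\setqual}\wedge\posi(\event R)$ iff the same with $\parobj_R$ on the modified MDP'' fails in general; the equivalence of path sets for $\event R$ does \emph{not} lift once other atoms are conjoined.

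The product construction avoids this precisely because the second copy is a full copy of $\MDP$ (not a sink), so every path of $\MDP'$ projects to a path of $\MDP$ and the other parity conditions, lifted componentwise, are satisfied by exactly the corresponding strategies. If you promote your monitor variant to be the actual proof and drop the absorbing construction (or keep it only as an aside valid for the isolated reachability objective), your write-up matches the paper's argument. You should also then say explicitly how the other parity conditions are lifted to $\MDP'$ and why strategies correspond, since that correspondence is what Lemma~\ref{lem:r_to_par} consumes.
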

\begin{proof}
Given an MDP $\MDP=(S,E,Act,\prob)$, we construct an MDP $\MDP'=(S',E',Act,\prob')$ such that $S'= S \times \{1,2\}$ where intuitively $\MDP'$ consists of two copies of $\MDP$ with state space $S \times \{1\}$ and $S \times \{2\}$ respectively, and the reachability condition being satisfied corresponds to moving from the first copy to the second copy, and staying there forever. Formally, we consider the parity condition $\parobj$ such that for $s'\in S'$, with $s'=(s,i)$ for $s\in S$ and $i\in\{1,2\}$, we have $\parobj(s')=i$.
\end{proof}
We use this $\MDP'$ in the following two lemmas.

\begin{lemma}
\label{lem:r_to_par}
Given an MDP $\MDP$, two sets of parity conditions $\{\parobj_{\induniv}\pipe \induniv\in\setuniv\}$ and $\{\parobj_{\indqual}\pipe \indqual\in\setqual\}$, and a reachability set $R$, there exists a state $s'$ in $\MDP'$, and a parity condition denoted $\parobj_{\event R}$ such that for all states $s$, we have $\satisfies{s}{\MDP}\combi{\setuniv}{\setqual}\wedge \posi(\event R)$ iff $\satisfies{s'}{\MDP'}\combi{\setuniv}{\setqual}\wedge \exis(\parobj_{\event R})$.
\end{lemma}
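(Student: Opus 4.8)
The plan is to work with the two-copy MDP $\MDP'=(S',E',Act,\prob')$ introduced in Proposition~\ref{prop:reachpar}, making its construction precise: $S'=S\times\{1,2\}$; inside each copy the transitions of $\MDP$ are mimicked; every state $(t,1)$ with $t\in R$ additionally has an action leading deterministically to $(t,2)$; and there is no edge from copy~$2$ back to copy~$1$. On $\MDP'$ I reinterpret the parity conditions of $\setuniv\cup\setqual$ by $\parobj_x'(t,i)=\parobj_x(t)$ (these are the conditions meant by $\combi{\setuniv}{\setqual}$ on the right-hand side of the statement), and I set $\parobj_{\event R}(t,i)=i$. Since copy~$2$ is a trap and the only edges into it start from states $(t,1)$ with $t\in R$, a path of $\MDP'$ satisfies $\parobj_{\event R}$ (maximal priority seen infinitely often equal to $2$) if and only if it eventually enters and then stays forever in copy~$2$, which happens if and only if its projection onto $S$ reaches $R$. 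I then take $s'=(s,1)$.

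For the left-to-right direction, from a strategy $\sigma$ with $\satisfy{s}{\sigma}{\MDP}\combi{\setuniv}{\setqual}\wedge\posi(\event R)$ I build $\sigma'$ on $\MDP'$ that plays in copy~$1$ exactly as $\sigma$ on the de-indexed history, takes the jump action the first time it visits a state $(t,1)$ with $t\in R$, and afterwards plays in copy~$2$ the action $\sigma$ prescribes on the corresponding (de-indexed, stutter-removed) history. Every path of $\MDPtoMC{\MDP'}{\sigma'}$ de-indexes, after deleting the single stutter step at the jump, to a path of $\MDPtoMC{\MDP}{\sigma}$; because the jump edge has probability~$1$, this correspondence is a measure-preserving bijection between the two path spaces, and because $\parobj_x'$ agrees with $\parobj_x$ on both copies, a path of $\MDP'$ satisfies $\parobj_x'$ exactly when its image satisfies $\parobj_x$. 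Hence $\univ(\parobj_{\induniv}')$ and $\qual(\parobj_{\indqual}')$ are inherited from $\sigma$, and since $\posi(\event R)$ gives a positive probability of reaching $R$ under $\sigma$, some path of $\MDPtoMC{\MDP'}{\sigma'}$ enters copy~$2$, witnessing $\exis(\parobj_{\event R})$. The right-to-left direction is symmetric: from $\sigma'$ with $\satisfy{s'}{\sigma'}{\MDP'}\combi{\setuniv}{\setqual}\wedge\exis(\parobj_{\event R})$ I define $\sigma$ on $\MDP$ that simulates $\sigma'$, carrying one bit of memory recording whether the jump has already been taken so that $\sigma$ always knows in which copy $\sigma'$ currently is; the same measure-preserving path correspondence transports $\univ(\parobj_{\induniv})$ and $\qual(\parobj_{\indqual})$ back to $\sigma$, and the witnessing path for $\parobj_{\event R}$, being determined by a finite prefix reaching $R$, yields $\Pr[\event R]>0$, i.e. $\posi(\event R)$.

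The routine parts are the bookkeeping of the projection map (forgetting copy indices and the one stutter step) and the verification that it preserves the probability measure, which follows since the added jump transitions are deterministic. The main obstacle I expect is handling the $\qual$ atoms with care: one must check that reinterpreting $\parobj_{\indqual}$ as the copy-insensitive condition $\parobj_{\indqual}'$ makes its almost-sure satisfaction in $\MDP'$ exactly equivalent to that of $\parobj_{\indqual}$ in $\MDP$. This is precisely where the ``once in copy~$2$, stay in copy~$2$'' property and the prefix-independence of parity conditions are used — without the trap property, a path oscillating between the two copies could in principle change the maximal priority seen infinitely often, and the equivalence would break.
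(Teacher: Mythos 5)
Your proof is correct and follows essentially the same route as the paper: use the two-copy construction of Proposition~\ref{prop:reachpar} to turn $\event R$ into a parity condition, and exploit the fact that for a reachability event, positive probability under a fixed strategy is equivalent to the existence of a witnessing path (since the event is determined by a finite prefix, which has positive probability). The paper's own proof is only a few lines and leaves the strategy/path correspondence implicit; you have simply made that bookkeeping explicit.
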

\begin{proof}
Let $\satisfies{s}{\MDP}\combi{\setuniv}{\setqual}\wedge \posi(\event R)$. As a reachability condition is satisfied for $\posi$ iff it is satisfied for $\exis$, we want to satisfy $\satisfies{s}{\MDP}\combi{\setuniv}{\setqual}\wedge \exis(\event R)$. We now use Proposition~\ref{prop:reachpar} and denote $\parobj_{\event R}$ the parity condition associated to $\event R$, we get that it is necessary and sufficient to find a strategy for $\satisfies{s'}{\MDP'}\combi{\setuniv}{\setqual}\wedge \exis(\parobj_{\event R})$
\end{proof}

We get from Lemma~\ref{lem:SASPr} and Lemma~\ref{lem:r_to_par} the following:
\begin{lemma}
\label{lem:SASP_to_SASE}
Given an MDP $\MDP$, and two sets of parity conditions $\{\parobj_{\induniv}\pipe \induniv\in\setuniv\}$ and $\{\parobj_{\indqual}\pipe \indqual\in\setqual\}$, and a parity condition $\parobj_{\indposi}$, for all states $s$, we have $\satisfies{s}{\MDP}\combi{\setuniv}{\setqual}\wedge \posi(\parobj_{\indposi})$ iff there exists $s'$ in $\MDP'$ such that $\satisfies{s'}{\MDP'}\combi{\setuniv}{\setqual}\wedge \exis(\parobj_{\event \mathcal{T}_{\mathbf{III},\indposi}})$, where $\mathcal{T}_{\mathbf{III},\indposi}$ is defined w.r.t. $\MDP'$.
\end{lemma}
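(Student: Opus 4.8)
The plan is to obtain Lemma~\ref{lem:SASP_to_SASE} as a direct composition of two earlier results, with no substantially new argument needed. First I would invoke Lemma~\ref{lem:SASPr}, which states that $\satisfies{s}{\MDP}\combi{\setuniv}{\setqual}\wedge \posi(\parobj_{\indposi})$ iff $\satisfies{s}{\MDP}\combi{\setuniv}{\setqual}\wedge \posi({\event \mathcal{T}_{\mathbf{III},\setuniv,\setqual,\parobj_{\indposi}}})$. This replaces the non-zero \emph{parity} objective $\posi(\parobj_{\indposi})$ by a non-zero \emph{reachability} objective $\posi(\event R)$ with $R = \mathcal{T}_{\mathbf{III},\setuniv,\setqual,\parobj_{\indposi}}$, while leaving the $\combi{\setuniv}{\setqual}$ part untouched.

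Second, I would apply Lemma~\ref{lem:r_to_par} to this reachability set $R$. That lemma gives an MDP $\MDP'$ (two copies of $\MDP$ with the parity condition distinguishing the copies, as in Proposition~\ref{prop:reachpar}), a state $s'$ in $\MDP'$, and a parity condition $\parobj_{\event R}$ such that $\satisfies{s}{\MDP}\combi{\setuniv}{\setqual}\wedge \posi(\event R)$ iff $\satisfies{s'}{\MDP'}\combi{\setuniv}{\setqual}\wedge \exis(\parobj_{\event R})$. Writing $\parobj_{\event \mathcal{T}_{\mathbf{III},\indposi}}$ for this parity condition and noting that $\mathcal{T}_{\mathbf{III},\indposi}$ is, by construction, the set $\mathcal{T}_{\mathbf{III},\setuniv,\setqual,\parobj_{\indposi}}$ viewed inside $\MDP'$, chaining the two equivalences yields exactly the statement: $\satisfies{s}{\MDP}\combi{\setuniv}{\setqual}\wedge \posi(\parobj_{\indposi})$ iff there exists $s'$ in $\MDP'$ with $\satisfies{s'}{\MDP'}\combi{\setuniv}{\setqual}\wedge \exis(\parobj_{\event \mathcal{T}_{\mathbf{III},\indposi}})$.

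The one point requiring a little care — and the only place where anything beyond a mechanical chaining happens — is to make sure the set $R$ fed into Lemma~\ref{lem:r_to_par} is computed consistently: Lemma~\ref{lem:SASPr} produces the target set $\mathcal{T}_{\mathbf{III},\setuniv,\setqual,\parobj_{\indposi}}$ as a subset of the state space of $\MDP$, and Lemma~\ref{lem:r_to_par} lifts this to $\MDP'$, so the notation $\mathcal{T}_{\mathbf{III},\indposi}$ "defined w.r.t. $\MDP'$" should be read as the image of that set under the canonical embedding of $\MDP$'s states into the first copy of $\MDP'$. I would state this identification explicitly in one sentence so that the two lemmas compose without ambiguity. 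Also, since $\combi{\setuniv}{\setqual}$ is prefix-independent and depends only on the priorities inherited from $\MDP$ (which are copied verbatim into both copies of $\MDP'$ by Proposition~\ref{prop:reachpar}), its truth value is preserved by the translation, so no extra work is needed there.

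I do not expect any real obstacle: both ingredients are already proved in the excerpt, and the statement is essentially their transitive closure. If anything, the subtlety is purely bookkeeping about which MDP and which state space the objects live in, which is why I would devote the bulk of the (short) proof to pinning down that correspondence rather than to any mathematical argument.

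\begin{proof}
By Lemma~\ref{lem:SASPr}, $\satisfies{s}{\MDP}\combi{\setuniv}{\setqual}\wedge \posi(\parobj_{\indposi})$ holds iff $\satisfies{s}{\MDP}\combi{\setuniv}{\setqual}\wedge \posi({\event \mathcal{T}_{\mathbf{III},\setuniv,\setqual,\parobj_{\indposi}}})$, i.e., iff the non-zero parity objective can be replaced by the non-zero reachability objective towards the set $R = \mathcal{T}_{\mathbf{III},\setuniv,\setqual,\parobj_{\indposi}} \subseteq S$ of states belonging to some \typethree EC of $\MDP$.

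Now apply Lemma~\ref{lem:r_to_par} with this reachability set $R$. It yields an MDP $\MDP'$ (the two-copy construction of Proposition~\ref{prop:reachpar}), a state $s'$ in $\MDP'$, and a parity condition, which we denote $\parobj_{\event \mathcal{T}_{\mathbf{III},\indposi}}$, such that $\satisfies{s}{\MDP}\combi{\setuniv}{\setqual}\wedge \posi(\event R)$ iff $\satisfies{s'}{\MDP'}\combi{\setuniv}{\setqual}\wedge \exis(\parobj_{\event \mathcal{T}_{\mathbf{III},\indposi}})$. Here $\mathcal{T}_{\mathbf{III},\indposi}$ denotes the set $R = \mathcal{T}_{\mathbf{III},\setuniv,\setqual,\parobj_{\indposi}}$ lifted to $\MDP'$ via the canonical embedding of $S$ into the first copy $S \times \{1\}$ of the state space of $\MDP'$; since the priorities of $\MDP$ are copied verbatim into both copies of $\MDP'$ and $\combi{\setuniv}{\setqual}$ is prefix-independent, the $\combi{\setuniv}{\setqual}$ part of the objective is preserved under this translation.

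Chaining the two equivalences gives: $\satisfies{s}{\MDP}\combi{\setuniv}{\setqual}\wedge \posi(\parobj_{\indposi})$ iff there exists $s'$ in $\MDP'$ such that $\satisfies{s'}{\MDP'}\combi{\setuniv}{\setqual}\wedge \exis(\parobj_{\event \mathcal{T}_{\mathbf{III},\indposi}})$, as claimed.
\end{proof}
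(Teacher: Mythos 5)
Your proof is correct and is exactly the derivation the paper intends: the lemma is introduced with ``We get from Lemma~\ref{lem:SASPr} and Lemma~\ref{lem:r_to_par} the following,'' and your argument is precisely that chaining, with the set $R=\mathcal{T}_{\mathbf{III},\setuniv,\setqual,\parobj_{\indposi}}$ fed from the first lemma into the second. The extra sentence pinning down the identification of $\mathcal{T}_{\mathbf{III},\indposi}$ with the embedded copy of that set in $\MDP'$ is a reasonable clarification of the paper's (slightly loose) notation and does not change the argument.
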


Now, since by Assumption~\ref{as:pruned} we have removed all the states that do not satisfy $\combi{\setuniv}{\setqual}$, we have the following:
\begin{lemma} \label{lem:allconstraints2}
Given an MDP $\MDP$, a state $s$, two sets of parity conditions $\{\parobj_{\induniv}\pipe \induniv\in\setuniv\}$ and $\{\parobj_{\indqual}\pipe \indqual\in\setqual\}$, and another parity condition $\parobj$, we have that $\satisfies{s}{\MDP}\combi{\setuniv}{\setqual} \wedge \exis(\parobj)$ iff $\satisfies{s}{\MDP} \exis(\limwedgeone{\induniv\in\setuniv}\parobj_{\induniv} \wedge \parobj)$.
\end{lemma}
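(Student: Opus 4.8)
The plan is to prove the two implications separately, exploiting the fact that, after the pruning of Assumption~\ref{as:pruned}, every state of $\MDP$ satisfies $\combi{\setuniv}{\setqual}$. The right-to-left direction is the easy one: suppose $\satisfies{s}{\MDP} \exis(\limwedgeone{\induniv\in\setuniv}\parobj_{\induniv} \wedge \parobj)$, so there is a strategy $\sigma_0$ and a path $\pi_0 \in \paths^{\MDPtoMC{\MDP}{\sigma_0}}(s)$ that satisfies all $\parobj_{\induniv}$ and $\parobj$ simultaneously. I would build a strategy $\sigma$ that mimics $\sigma_0$ deterministically along the prefix of $\pi_0$ so that $\pi_0$ remains a feasible path, witnessing $\exis(\parobj)$; at the same time, since $\pi_0$ already satisfies $\limwedgeone{\induniv\in\setuniv}\parobj_{\induniv}$, this one path witnesses the conjunction. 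The subtlety is that $\exis(\parobj)$ on its own is trivially fine, but we must also guarantee $\limwedgeone{\induniv\in\setuniv}\univ(\parobj_{\induniv})$ — i.e. \emph{every} path under $\sigma$ must satisfy all $\parobj_{\induniv}$. This is where Assumption~\ref{as:pruned} does the work: at the branching points of $\pi_0$ where the environment might deviate, the resulting state $s'$ still satisfies $\combi{\setuniv}{\setqual}$, hence in particular $\limwedgeone{\induniv\in\setuniv}\univ(\parobj_{\induniv})$ via some strategy $\sigma_{s'}$; so I define $\sigma$ to follow $\pi_0$ and, the moment a deviation from $\pi_0$ occurs landing in some $s'$, switch to $\sigma_{s'}$. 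Using prefix-independence of parity conditions, every resulting path satisfies all $\parobj_{\induniv}$, the path $\pi_0$ itself additionally satisfies $\parobj$, so $\satisfy{s}{\sigma}{\MDP}\combi{\setuniv}{\setqual}\wedge\exis(\parobj)$.

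For the left-to-right direction, assume $\satisfy{s}{\sigma}{\MDP}\combi{\setuniv}{\setqual}\wedge\exis(\parobj)$. Then there is some path $\pi \in \paths^{\MDPtoMC{\MDP}{\sigma}}(s)$ satisfying $\parobj$; and because $\sigma$ enforces $\limwedgeone{\induniv\in\setuniv}\univ(\parobj_{\induniv})$, in particular this very path $\pi$ satisfies every $\parobj_{\induniv}$. Hence $\pi$ is a single path that satisfies $\limwedgeone{\induniv\in\setuniv}\parobj_{\induniv}\wedge\parobj$. To turn this into $\satisfies{s}{\MDP}\exis(\limwedgeone{\induniv\in\setuniv}\parobj_{\induniv}\wedge\parobj)$ I exhibit a strategy (e.g. one that plays deterministically along the actions taken on $\pi$) under which $\pi$ remains a valid path of the induced Markov chain, which is immediate since $\pi$ is already consistent with $\sigma$ and I only need one witnessing path for an $\exis$ objective.

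The main obstacle — modest here — is the left-to-right side's need to recall that $\exis$ objectives only require a \emph{single} path, so dropping the $\combi{\setuniv}{\setqual}$ constraint is harmless; and on the right-to-left side, making precise the "switch to $\sigma_{s'}$ on deviation" construction and checking, via prefix-independence of parity, that the universal conditions $\univ(\parobj_{\induniv})$ are preserved along \emph{all} branches. Both points are routine given the machinery already developed (Assumption~\ref{as:pruned} and the prefix-independence of parity conditions used throughout Sections~\ref{sec:typeone}–\ref{sec:typethree}); the lemma is essentially an observation that the $\univ$ part of $\combi{\setuniv}{\setqual}$ can be folded into the single existential path, and the $\qual$ part is automatically available everywhere after pruning.
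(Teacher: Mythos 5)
Your left-to-right direction is fine (and in fact simpler than you make it: the witnessing strategy $\sigma$ itself already works, since every path under $\sigma$ satisfies all $\parobj_{\induniv}$, so the path witnessing $\exis(\parobj)$ witnesses $\exis(\limwedgeone{\induniv\in\setuniv}\parobj_{\induniv}\wedge\parobj)$ directly). The gap is in the right-to-left direction: your ``follow $\pi_0$ deterministically, switch to $\sigma_{s'}$ only upon deviation'' construction establishes $\limwedgeone{\induniv\in\setuniv}\univ(\parobj_{\induniv})\wedge\exis(\parobj)$, but you never actually verify the $\limwedgeone{\indqual\in\setqual}\qual(\parobj_{\indqual})$ conjuncts of $\combi{\setuniv}{\setqual}$, and they can fail. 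The problem is that ``deviation'' may occur with probability strictly less than one --- indeed with probability zero if the transitions along $\pi_0$ are deterministic --- so the induced Markov chain may put positive (even full) measure on following $\pi_0$ forever, and $\pi_0$ is only guaranteed to satisfy the $\parobj_{\induniv}$ and $\parobj$, not the $\parobj_{\indqual}$. Concretely, take states $s_1,s_2$ with $\setuniv=\varnothing$, $\setqual=\{\indqual\}$, $\parobj_{\indqual}(s_1)=1$, $\parobj_{\indqual}(s_2)=2$, $\parobj(s_1)=2$, $\parobj(s_2)=1$; action $a$ loops deterministically on $s_1$, action $b$ moves deterministically to the absorbing $s_2$. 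Every state satisfies $\qual(\parobj_{\indqual})$, so Assumption~\ref{as:pruned} holds, and $\pi_0=s_1^\omega$ witnesses $\exis(\parobj)$. Your construction plays $a$ forever, no deviation ever happens, and $\qual(\parobj_{\indqual})$ fails; saying the $\qual$ part is ``automatically available after pruning'' conflates the existence of \emph{some} strategy achieving it with the strategy you built achieving it.

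The paper closes this gap by forcing the switch to happen almost surely rather than only upon environmental deviation: fix a strategy $\sigma_{\wedge}$ with $\satisfy{s}{\sigma_{\wedge}}{\MDP}\combi{\setuniv}{\setqual}$ (Assumption~\ref{as:pruned}), and at every step toss a fair coin --- on heads, abandon $\pi_0$ and play $\sigma_{\wedge}$ forever; on tails, play the next action of $\pi_0$ (also switching to $\sigma_{\wedge}$ if a probabilistic transition leaves $\pi_0$). The switch then occurs with probability $1$, which yields $\limwedgeone{\indqual\in\setqual}\qual(\parobj_{\indqual})$, while $\pi_0$ remains a path of the induced chain (each of its finite prefixes retains positive probability), which is all that $\exis(\parobj)$ requires; prefix-independence gives the $\univ$ conjuncts on every branch exactly as in your argument. (The paper also first replaces $\pi_0$ by a lasso-shaped witness via Streett non-emptiness so the construction is effective, a point your sketch does not address but which is secondary to the correctness issue above.)
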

\begin{proof}
The left to right result is obvious as $\limwedgeone{\induniv\in\setuniv}\univ(\parobj_{\induniv})\wedge\exis(\parobj)$ implies $\exis(\limwedgeone{\induniv\in\setuniv}\parobj_{\induniv}\wedge \parobj)$.

For the other direction, consider a strategy $\widehat{\sigma}$ such that $\satisfy{s}{\widehat{\sigma}}{\MDP} \exis(\limwedgeone{\induniv\in\setuniv}\parobj_{\induniv}\wedge \parobj)$. 
Now, a conjunction of parity conditions is a Streett condition \cite{CHP07},
and non-emptiness problem of Streett automaton is decidable.

We noted in Section~\ref{sec:prelims} that satisfying existentially is the same as finding a satisfying path in the nondeterministic automaton. This means that if the Streett automaton is non-empty, then there exists a finite-memory strategy $\sigma$ (linear in the indices of $\parobj_{\induniv}$ and $\parobj$) in $\MDP$ such that there exists a path $\pi$ in the MC $\MDP^{[\sigma]}$ satisfying both $\parobj$ and all $\parobj_{\induniv}$ for $\induniv\in\setuniv$.

Now by assumption, there exists a strategy $\sigma_{\wedge}$ such that $\satisfy{s}{\sigma_{\wedge}}{\MDP}\combi{\setuniv}{\setqual}$. 
A strategy $\sigma$ such that $\satisfy{s}{\sigma}{\MDP}\combi{\setuniv}{\setqual} \\ \wedge\exis(\parobj)$ is obtained below by combining $\widehat{\sigma}$ and $\sigma_{\wedge}$ as follows.
At each step a coin is tossed. 
If it gives head, then we play $\sigma_{\wedge}$ forever. 
Otherwise, we play this step as specified by the strategy $\widehat{\sigma}$.
If this results in deviating from the path $\pi$,
then we play $\sigma_{\wedge}$ forever, else repeat the same process.

Strategy $\sigma$ has a path ensuring $\parobj$: the one where we always follow $\widehat{\sigma}$, that happens when all the coin tosses give tails, and the state randomly taken in the MDP is always in $\pi$.
The probability of switching to $\sigma_{\wedge}$ some time is $1$, thus satisfying $\limwedgeone{\indqual\in\setqual}\qual(\parobj_{\indqual})$. 
If we follow the path
$\pi$, then for all ${\induniv\in\setuniv}$ we have that $\parobj_{\induniv}$ is satisfied, otherwise we switch to $\sigma_{\wedge}$ at some point, and for all ${\induniv\in\setuniv}$ we again have that $\parobj_{\induniv}$ is satisfied, thus ensuring $\limwedgeone{\induniv\in\setuniv}\univ(\parobj_{\induniv})$.
\end{proof}

This concludes the decidability proof of the realizability of the negation and disjunction-free fragment of \QPL. Indeed, given a formula 
$\satisfies{s}{\MDP}\combi{\setuniv}{\setqual}
\wedge 
\limwedgeone{\indposi\in\setposi}\posi(\parobj_{\indposi})
\wedge 
\limwedgeone{\indexis\in\setexis}\exis(\parobj_{\indexis})$, we use Lemma~\ref{lem:2ep} to split it into formulas of the form $\satisfies{s}{\MDP}\combi{\setuniv}{\setqual}
\wedge 
\posi(\parobj_{\indposi})$, and of the form $\satisfies{s}{\MDP}\combi{\setuniv}{\setqual}
\wedge 
\exis(\parobj_{\indexis})$. We then apply Lemma~\ref{lem:SASPr} on formulas of the form $\satisfies{s}{\MDP}\combi{\setuniv}{\setqual}
\wedge 
\posi(\parobj_{\indposi})$, and then use Lemma~\ref{lem:SASP_to_SASE} to transform the non-zero objective into an existential objective. For formulas of the form  $\satisfies{s}{\MDP}\combi{\setuniv}{\setqual}
\wedge 
\exis(\parobj_{\indexis})$, we use Lemma~\ref{lem:allconstraints2}.

By Remark~\ref{rem:negation} it shows the decidability of the \QPL-realizability problem. 
We state the complexity of this realizability problem in the next Section.




\section{Complexity results}
\label{sec:complex}
In this section, we analyze the complexity of deciding the existence of strategies to satisfy \QPL formulas.
Recall from the results of Sections  \ref{sec:typethree} and \ref{sec:nz_e}, that in order to find strategies for formulas of the form $\limwedgeone{\induniv\in\setuniv}\univ(\parobj_{\induniv}) \wedge \limwedgeone{\indqual\in\setqual}\qual(\parobj_{\indqual})\wedge \limwedgeone{\indposi\in\setposi}\posi(\parobj_{\indposi}) \wedge \limwedgeone{\indexis\in\setexis}\exis(\parobj_{\indexis})$, we need to compute the set of maximal \typethree ECs. 
In particular, we need these ECs for subformulas of the form $\limwedgeone{\induniv\in\setuniv}\univ(\parobj_{\induniv}) \wedge \limwedgeone{\indqual\in\setqual}\qual(\parobj_{\indqual})\wedge \limwedgeone{\indposi\in\setposi}\posi(\parobj_{\indposi})$.
\begin{algorithm}
\caption{}
\label{proc:all}
\begin{flushleft}
\textbf{Input} : An MDP $\MDP$, $s$ a state of $\MDP$, parity conditions $\{\parobj_{\induniv}\pipe \induniv\in\setuniv\}$, $\{\parobj_{\indqual}\pipe \indqual\in\setqual\}$, $\{\parobj_{\indposi}\pipe \indposi\in\setposi\}$, and $\{\parobj_{\indexis}\pipe \indexis\in\setexis\}$. \\
\textbf{Output} : 
true if 
$\satisfies{s}{\MDP}\limwedgeone{\induniv\in\setuniv}\univ(\parobj_{\induniv})\limwedgeone{\indqual\in\setqual}\qual(\parobj_{\indqual})\wedge \limwedgeone{\indposi\in\setposi}\posi(\parobj_{\indposi}) \wedge \limwedgeone{\indexis\in\setexis}\exis(\parobj_{\indexis})$\\
else false.
\end{flushleft}
\begin{algorithmic}[1]
    \State \label{proc-sat-1}Compute the set of maximal \typeone($\setuniv,\setuniv$) ECs $C$ such that $\forall\, s'\in C,\, \satisfies{s'}{\MDP{\downharpoonright C}} \limwedgeone{\induniv\in\setuniv}\univ(\parobj_{\induniv}) \wedge \limwedgeone{\induniv\in\setuniv} \qual(\event C^{\max}_{\even}(\parobj_{\induniv}))$. \Comment{By Lemma~\ref{lem:alg-typeone}.}
    \State \label{proc-sat-2}Compute the set of maximal \typetwo($\setuniv,\setqual$) ECs $C$:  $C$ is a maximal \typeone($\setuniv,\setqual$) EC and $\forall\, s'\in C,\, \satisfies{s'}{\MDP{\downharpoonright C}} \limwedgeone{\induniv\in\setuniv}\qual(\parobj_{\induniv})\wedge \limwedgeone{\indqual\in\setqual}\qual(\parobj_{\indqual})$.
    \State \label{proc-sat-3}Compute the set $\mathcal{S}_1$ of states $s'$ such that $\satisfies{s'}{\MDP} \limwedgeone{\induniv\in\setuniv}\univ(\parobj_{\induniv}) \wedge \qual(\event \mathcal{T}_{\mathbf{II},\MDP,\setuniv,\setqual})$. \Comment{Correct by Lemma~\ref{lem:SAS}.}
    \newline
    If $s\nin \mathcal{S}_1$ then return false.
    \State \label{proc-sat-4}Compute $\MDP{\downharpoonright \mathcal{S}_1}$ where all the states that do not satisfy $\combi{\setuniv}{\setqual}$ have been pruned. \Comment{Correct by Lemma~\ref{pr-closed-a-q}.}
    \For {\texttt{all $\indposi\in\setposi$}} 
    \State \label{proc-sat-5} Compute the set of maximal \typethree($\setuniv,\setqual,\parobj_{\indposi}$) ECs $C$ of $\MDP{\downharpoonright \mathcal{S}_1}$: $\forall\, s\in C$, we have that $\satisfies{s}{(\MDP\downharpoonright \mathcal{S}_1){\downharpoonright C}} \limwedgeone{\induniv\in\setuniv}\qual(\parobj_{\induniv})\limwedgeone{\indqual\in\setqual}\qual(\parobj_{\indqual})\wedge \qual(\parobj_{\indposi})$.  \Comment{By Lemma~\ref{lem:SASPr}.}
    \State \label{proc-sat-6}Compute the parity condition $\parobj_{\event \mathcal{T}_{\mathbf{III},\setuniv,\setqual,\parobj_{\indposi}}}$ and the MDP $\MDP'$ with set $\mathcal{S}'_1$ of states 
    \Comment{$\MDP'$ and $\mathcal{S}'_1$ are defined in Proposition~\ref{prop:reachpar}.}
    \State \label{proc-sat-8} Check if $\satisfies{(s,1)}{\mathcal{S}'_1} \exis(\limwedgeone{\induniv\in\setuniv}\parobj_{\induniv}\wedge\parobj_{\event \mathcal{T}_{\mathbf{III},\setuniv,\setqual,\parobj_{\indposi}}})$. \Comment{By lemmas \ref{lem:SASP_to_SASE} and \ref{lem:allconstraints2}; ($s,1$) is defined in Proposition~\ref{prop:reachpar}.}
    \EndFor
    \For {\texttt{all $\indexis\in\setexis$}}
    \State \label{proc-sat-7}Check if $\satisfies{s}{\mathcal{S}_1} \exis(\limwedgeone{\induniv\in\setuniv}\parobj_{\induniv}\wedge\parobj_{\indexis})$. \Comment{By Lemma \ref{lem:allconstraints2}.}
    \EndFor
    \State If any of the checks in Steps~\ref{proc-sat-8} and ~\ref{proc-sat-7} fails then return \texttt{false}, else return \texttt{true}.
\end{algorithmic}
\end{algorithm}
This in turn requires solving several Streett games in general (Lemma \ref{lem:alg-typeone}).
The procedure is described in Algorithm \ref{proc:all}.
We show that the algorithm runs in time $\Sigma_2^{\sf{P}}={\sf P}^{\sf{NP}}$ (Theorem \ref{thm:multi-parity}).
We also show that we have a polynomial algorithm for the special case where the set $\setuniv$ is empty (Theorem~\ref{thm:poly}), and that randomization and finite memory are required (Theorem~\ref{thm:poly_space}). The problem is in ${\sf NP} \cap {\sf CoNP}$ when $\setuniv$ is singleton (Theorem \ref{thm:NPcapCoNP}).
Finally, we show that finding a strategy for an arbitrary \QPL formula that may have combination of conjunctions and disjunctions is in ${\Sigma}^{\sf P}_2 (={\sf NP}^{\sf NP})$ (Theorem \ref{thm:Sigma2P}), and it is both ${\sf NP}$-hard and ${\sf coNP}$-hard (Theorem \ref{thm:hardness}).

\begin{theorem}
\label{thm:multi-parity}
Given an MDP $\MDP$, and a state $s$, Algorithm~\ref{proc:all} decides if $\satisfies{s}{\MDP}\limwedgeone{\induniv\in\setuniv}\univ(\parobj_{\induniv})\wedge\limwedgeone{\indqual\in\setqual}\qual(\parobj_{\indqual})\wedge \limwedgeone{\indposi\in\setposi}\posi(\parobj_{\indposi}) \wedge \limwedgeone{\indexis\in\setexis}\exis(\parobj_{\indexis})$, and solves the problem in $\sf{P}^{\sf{NP}}$ time.
\end{theorem}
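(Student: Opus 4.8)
The plan is to establish correctness of Algorithm~\ref{proc:all} first, then carry out a careful complexity accounting showing every step lies in $\sf{P}^{\sf{NP}}$ so that the whole procedure runs in $\sf{P}^{\sf{NP}}=\Delta_2^{\sf P}$.

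\textbf{Correctness.}
First I would observe that the formula $\limwedgeone{\induniv\in\setuniv}\univ(\parobj_{\induniv})\wedge\limwedgeone{\indqual\in\setqual}\qual(\parobj_{\indqual})\wedge \limwedgeone{\indposi\in\setposi}\posi(\parobj_{\indposi}) \wedge \limwedgeone{\indexis\in\setexis}\exis(\parobj_{\indexis})$ is exactly $\combi{\setuniv}{\setqual}\wedge \limwedgeone{\indposi\in\setposi}\posi(\parobj_{\indposi})\wedge \limwedgeone{\indexis\in\setexis}\exis(\parobj_{\indexis})$, so by Lemma~\ref{lem:2ep} it suffices to check, for each $\indposi\in\setposi$, that $\satisfies{s}{\MDP}\combi{\setuniv}{\setqual}\wedge\posi(\parobj_{\indposi})$ and, for each $\indexis\in\setexis$, that $\satisfies{s}{\MDP}\combi{\setuniv}{\setqual}\wedge\exis(\parobj_{\indexis})$. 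Steps~\ref{proc-sat-1}--\ref{proc-sat-4} compute the maximal \typeone and \typetwo ECs (Lemma~\ref{lem:alg-typeone}) and the set $\mathcal{S}_1$ of states from which $\combi{\setuniv}{\setqual}$ is satisfiable, using the equivalence $\satisfies{s}{\MDP}\combi{\setuniv}{\setqual}$ iff $\satisfies{s}{\MDP}\limwedgeone{\induniv\in\setuniv}\univ(\parobj_{\induniv})\wedge\qual(\event\mathcal{T}_{\mathbf{II},\MDP,\setuniv,\setqual})$ of Lemma~\ref{lem:SAS}; if $s\notin\mathcal{S}_1$ the formula is unsatisfiable and we correctly return false. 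After pruning to $\MDP{\downharpoonright \mathcal{S}_1}$, which is sound by Lemma~\ref{pr-closed-a-q} and Assumption~\ref{as:pruned}, the loop over $\setposi$ computes the maximal \typethree($\setuniv,\setqual,\parobj_{\indposi}$) ECs and applies Lemma~\ref{lem:SASPr} to reduce $\combi{\setuniv}{\setqual}\wedge\posi(\parobj_{\indposi})$ to $\combi{\setuniv}{\setqual}\wedge\posi(\event\mathcal{T}_{\mathbf{III},\setuniv,\setqual,\parobj_{\indposi}})$, then Lemma~\ref{lem:SASP_to_SASE} turns this into an existential-parity query $\satisfies{(s,1)}{\MDP'}\combi{\setuniv}{\setqual}\wedge\exis(\parobj_{\event\mathcal{T}_{\mathbf{III},\indposi}})$ on the doubled MDP, and finally Lemma~\ref{lem:allconstraints2} reduces this to the pure existential query $\satisfies{(s,1)}{\MDP'}\exis(\limwedgeone{\induniv\in\setuniv}\parobj_{\induniv}\wedge\parobj_{\event\mathcal{T}_{\mathbf{III},\indposi}})$ checked in Step~\ref{proc-sat-8}. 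The loop over $\setexis$ applies Lemma~\ref{lem:allconstraints2} directly. Hence the algorithm returns true iff all the split subformulas are satisfiable iff the input formula is satisfiable.

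\textbf{Complexity.}
The key point is that $\sf{P}^{\sf{NP}}$ is closed under polynomially many sequential calls to an $\NPTIME$ oracle (and $\NPTIME\cap\coNP$ oracle calls in particular), so it is enough to show each step is polynomial-time with an $\NPTIME$ oracle. Step~\ref{proc-sat-1} runs Algorithm~\ref{proc:sgec}, which by Lemma~\ref{lem:alg-typeone} solves $O(\abs{\setuniv}\cdot\abs{S}^3)$ Streett games; each Streett game lies in $\coNP$ (Lemma~\ref{thm:reach-Almagor}), so this is polynomially many $\coNP$ oracle calls. Steps~\ref{proc-sat-2} and~\ref{proc-sat-5} are analogous: checking $\mathbf{(II_2)}$ (resp.\ $\mathbf{(III_2)}$) on an EC is polynomial by Lemma~\ref{lemma_stratTwo1} and the analogous result of~\cite{etessami2007multi}, and iterating over MECs and pruning gives polynomially many such checks, so again polynomially many oracle calls. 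Steps~\ref{proc-sat-3}--\ref{proc-sat-4} reuse Lemmas~\ref{lem:SAS}, \ref{thm:reach-Almagor}, \ref{lem:implies_is_univ} to identify $\mathcal{S}_1$ and prune, again via polynomially many Streett-game ($\coNP$) queries. The doubling construction $\MDP'$ of Proposition~\ref{prop:reachpar} only polynomially increases the size. Each existential-parity query in Steps~\ref{proc-sat-8} and~\ref{proc-sat-7} amounts to Streett automaton non-emptiness, which is in $\NPTIME$, and there are at most $\abs{\setposi}+\abs{\setexis}$ of them. Summing up, Algorithm~\ref{proc:all} performs polynomially many calls to an $\NPTIME$ (or $\NPTIME\cap\coNP$) oracle interleaved with polynomial-time bookkeeping, hence runs in $\sf{P}^{\sf{NP}}=\Delta_2^{\sf P}=\Sigma_2^{\sf P}\cap\Pi_2^{\sf P}$ time.

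\textbf{Main obstacle.}
The delicate part is the complexity bookkeeping for Step~\ref{proc-sat-1} (and its uses inside Steps~\ref{proc-sat-2}, \ref{proc-sat-3}, \ref{proc-sat-5}): one must be careful that the outer loops of Algorithm~\ref{proc:sgec} — recomputing MEC decompositions and restarting after each attractor removal — terminate after polynomially many rounds (each round removes at least one state), so that the total number of Streett games stays polynomial rather than blowing up, and that feeding the $\coNP$ answers back into the loop is still captured by a single $\sf{P}^{\sf{NP}}$ computation (this is where the subtlety of adaptive vs.\ non-adaptive queries matters, but adaptivity is allowed in $\Delta_2^{\sf P}$). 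Once the polynomial bound on the number of oracle queries is nailed down and one invokes the standard fact that $\sf{P}^{\NPTIME\cap\coNP}=\NPTIME\cap\coNP\subseteq\sf{P}^{\sf{NP}}$ together with closure of $\sf{P}^{\sf{NP}}$ under polynomially many oracle calls, the theorem follows.
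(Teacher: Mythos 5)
Your proposal is correct and follows essentially the same route as the paper: split via Lemma~\ref{lem:2ep}, compute $\mathcal{S}_1$ through {\sf Type\:{\rm I}}/{\sf Type\:{\rm II}} ECs and Lemma~\ref{lem:SAS}, prune by Lemma~\ref{pr-closed-a-q}, and reduce the $\posi$ and $\exis$ atoms via Lemmas~\ref{lem:SASPr}, \ref{lem:SASP_to_SASE} and~\ref{lem:allconstraints2}, with the complexity coming from polynomially many adaptive ${\sf NP}$-oracle (Streett-game) calls. Two small points: the paper additionally justifies Step~\ref{proc-sat-2} by proving that the maximal {\sf Type\:{\rm II}} ECs are exactly the maximal {\sf Type\:{\rm I}} ECs satisfying $\mathbf{(II_2)}$, which you gloss over; and your closing identity $\Delta_2^{\sf P}=\Sigma_2^{\sf P}\cap\Pi_2^{\sf P}$ is not a known fact (only the inclusion $\Delta_2^{\sf P}\subseteq\Sigma_2^{\sf P}\cap\Pi_2^{\sf P}$ holds) and should be dropped, though neither issue affects the ${\sf P}^{\sf NP}$ bound claimed by the theorem.
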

\begin{proof}
For the correctness of Algorithm~\ref{proc:all}, consider a formula $\varphi=\limwedgeone{\induniv\in\setuniv}\univ(\parobj_{\induniv}) \wedge \limwedgeone{\indqual\in\setqual}\qual(\parobj_{\indqual})\wedge \limwedgeone{\indposi\in\setposi}\posi(\parobj_{\indposi}) \wedge \limwedgeone{\indexis\in\setexis}\exis(\parobj_{\indexis})$.
Given $\varphi$, we explain below how to check its satisfiability. 

According to Lemma~\ref{lem:2ep}, to check if $s \models_{\Gamma} \varphi$, we need to check for all $\indposi\in\setposi$, if $\satisfies{s}{\MDP}\combi{\setuniv}{\setqual}\wedge \posi(\parobj_{\indposi})$ and that for all $\indexis\in\setexis$ if $\satisfies{s}{\MDP}\combi{\setuniv}{\setqual}\wedge \exis(\parobj_{\indexis})$.

For the non-zero ($\posi$) part, recall from Section~\ref{sec:nz_e}, that for each $\parobj_{\indposi}$, we need to compute the set of all \typethree ECs for the objective $\combi{\setuniv}{\setqual}\wedge \posi(\parobj_{\indposi})$. 
Recall that each $\typethree$ EC $C$ is such that two properties hold. First $\forall\, s'\in C$, we have $\satisfies{s'}{\MDP} \combi{\setuniv}{\setqual}$;
we do this in Step \ref{proc-sat-3} of the algorithm.
Then we also check that $\forall\, s'\in C,\, \satisfies{s'}{\MDP{\downharpoonright C}} \limwedgeone{\induniv\in\setuniv}\qual(\parobj_{\induniv})\limwedgeone{\indqual\in\setqual}\qual(\parobj_{\indqual})\\ \wedge \qual(\parobj_{\indposi})$. This is done in Step \ref{proc-sat-5}. 

Now for an existential parity objective $\exis(\parobj_{\indexis})$,
by Lemma~\ref{lem:allconstraints2}, we only need to check if: $\satisfies{s}{\MDP{\downharpoonright S_{\llbracket\combi{\setuniv}{\setqual}\rrbracket}}} \exis(\limwedgeone{\induniv\in\setuniv}\parobj_{\induniv}\wedge\parobj_{\indexis})$
It remains to explain how the sub-MDP in which all states satisfy the formula $\limwedgeone{\induniv\in\setuniv}\univ(\parobj_{\induniv})\limwedgeone{\indqual\in\setqual}\qual(\parobj_{\indqual})$ is computed. This is done in Steps  \ref{proc-sat-1} to \ref{proc-sat-4}. 

To do so we first find the set of states $s'$ such that $\satisfies{s'}{\MDP}\limwedgeone{\induniv\in\setuniv}\univ(\parobj_{\induniv})\limwedgeone{\indqual\in\setqual}\qual(\parobj_{\indqual})$. 
We find the set of maximal \typetwo ECs.
By definition of \typeone and \typetwo ECs, a maximal \typeone EC that satisfies $\mathbf{(II_2)}$ is a maximal \typetwo EC. We now prove that if $C$ is a maximal \typetwo EC, it is included in $C_1$, a maximal \typeone EC. Since for all $s'\in C$, we have $\satisfies{s'}{\MDP{\downharpoonright C}} \limwedgeone{\induniv\in\setuniv}\qual(\parobj_{\induniv})\wedge \limwedgeone{\indqual\in\setqual}\qual(\parobj_{\indqual})$, we also have that for all $s'\in C_1,\, \satisfies{s'}{\MDP{\downharpoonright C_1}} \limwedgeone{\induniv\in\setuniv}\qual(\parobj_{\induniv})\wedge \limwedgeone{\indqual\in\setqual}\qual(\parobj_{\indqual})$ (it suffices to take a strategy that has probability $1$ of reaching some state $s''\in C$ and then play the strategy ensuring $\satisfies{s''}{\MDP{\downharpoonright C}} \limwedgeone{\induniv\in\setuniv}\qual(\parobj_{\induniv})\wedge \limwedgeone{\indqual\in\setqual}\qual(\parobj_{\indqual})$). Hence $C_1$ is a \typetwo EC. By maximality of $C$, we have $C=C_1$. This means that to find the maximal \typetwo EC, it is sufficient to compute the maximal \typeone EC and remove those maximal \typeone ECs that do not ensure condition $\mathbf{(II_2)}$. Finding the maximal \typeone EC is done in Step \ref{proc-sat-1} thanks to Lemma~\ref{lem:alg-typeone}. We then check condition  $\mathbf{(II_2)}$ in Step \ref{proc-sat-2}.


It then remains to find the states $s'$ such that  $\satisfies{s'}{\MDP} \limwedgeone{\induniv\in\setuniv}\univ(\parobj_{\induniv}) \\ \wedge \qual(\event \mathcal{T}_{II,\MDP,\setuniv,\setuniv})$ (Step \ref{proc-sat-3}). 
It can be done by transforming the MDP into a Streett-B\"uchi game that in turn can be transformed into a Streett game since a B\"uchi condition is a special case of a parity condition, and a conjunction of parity conditions is a Streett condition.

For the complexity, 
Steps \ref{proc-sat-2}, \ref{proc-sat-4}, \ref{proc-sat-5}, \ref{proc-sat-6}, \ref{proc-sat-8} and \ref{proc-sat-7}  are polynomial. Step \ref{proc-sat-3} is parity-complete if there is only one parity condition, polynomial if there is none, and is co-NP complete in general (we have to solve a Streett game). Step \ref{proc-sat-1} is parity-complete if there is only one parity condition, polynomial if there is none. In the general case, Step~\ref{proc-sat-1} requires to iteratively solve a polynomial number of Streett games (which is in \sf{coNP}), and use the result of this computation to remove some of the states, resulting in a $\sf{P^{NP}}$ complexity. Further details of the procedure are given  in algorithms~\ref{proc:issg} to~\ref{proc:sgec} in the proof of Lemma~\ref{lem:alg-typeone}.
 This leads us to a $\sf{P^{NP}}$ complexity for Algorithm~\ref{proc:all}.
\end{proof}

\begin{theorem} \label{thm:poly}
Given an MDP $\MDP$, and a state $s$, it is decidable in polynomial time if $\satisfies{s}{\MDP}\limwedgeone{\indqual\in\setqual}\qual(\parobj_{\indqual})\wedge \limwedgeone{\indposi\in\setposi}\posi(\parobj_{\indposi}) \wedge \limwedgeone{\indexis\in\setexis}\exis(\parobj_{\indexis})$.
\end{theorem}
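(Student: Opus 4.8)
The plan is to note that the claimed statement is precisely the instance of Theorem~\ref{thm:multi-parity} (equivalently, of Algorithm~\ref{proc:all}) in which there are no sure atoms, i.e.\ $\setuniv = \varnothing$, so that $\combi{\setuniv}{\setqual}$ collapses to $\limwedgeone{\indqual\in\setqual}\qual(\parobj_{\indqual})$. Correctness of Algorithm~\ref{proc:all} on this instance is already granted by Theorem~\ref{thm:multi-parity}; hence the only task is to show that Algorithm~\ref{proc:all} runs in polynomial time when $\setuniv=\varnothing$. The guiding observation is that, in the complexity analysis of Algorithm~\ref{proc:all}, the only potentially super-polynomial work is the solving of Streett games, and those Streett games appear solely in order to enforce the sure parity conditions $\{\parobj_{\induniv}\mid \induniv\in\setuniv\}$; once $\setuniv=\varnothing$ they vanish, and every remaining subtask is a classical polynomial-time MDP- or graph-analysis problem.

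Concretely, I would walk through Algorithm~\ref{proc:all} under the hypothesis $\setuniv=\varnothing$. In Step~\ref{proc-sat-1} the defining condition $\mathbf{(I_1)}$ of \typeone($\varnothing,\varnothing$) ECs is vacuous, so the maximal \typeone ECs are just the maximal end-components, computed in polynomial time by the standard MEC decomposition. In Step~\ref{proc-sat-2} one keeps those ECs that in addition satisfy $\mathbf{(II_2)}$, which with $\setuniv=\varnothing$ is exactly $\qual(\limwedgeone{\indqual\in\setqual}\parobj_{\indqual})$ inside the EC; by Lemma~\ref{lemma_stratTwo1} this is decidable in polynomial time for a fixed EC, and --- the one point needing a little care --- running the attractor-removal procedure from the proof of Lemma~\ref{lemma_stratTwo1} inside each MEC and taking the union of the good sub-ECs it produces (a union argument in the spirit of Lemma~\ref{lem:sgecunion2}, showing such unions remain good ECs) yields the whole set $\mathcal{T}_{\mathbf{II},\MDP,\varnothing,\setqual}$ in polynomial time. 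Step~\ref{proc-sat-3} then only asks, since the sure-conjunction is empty, for the states that almost-surely reach $\mathcal{T}_{\mathbf{II},\MDP,\varnothing,\setqual}$ --- classical polynomial-time almost-sure reachability in an MDP, with no Streett game involved precisely because $\setuniv=\varnothing$ --- and by Lemma~\ref{lem:SAS} this set coincides with $\llbracket\combi{\varnothing}{\setqual}\rrbracket$, so the pruning of Step~\ref{proc-sat-4} is the same polynomial computation.

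Finally, for the loop over $\setposi$: in Step~\ref{proc-sat-5}, after the pruning, $\mathbf{(III_1)}$ holds at every state, while $\mathbf{(III_2)}$ with $\setuniv=\varnothing$ is the almost-sure satisfaction of the single conjunction $\limwedgeone{\indqual\in\setqual}\parobj_{\indqual}\wedge\parobj_{\indposi}$ inside an EC --- again polynomial by the reasoning above --- so $\mathcal{T}_{\mathbf{III},\varnothing,\setqual,\parobj_{\indposi}}$ is obtained in polynomial time; Step~\ref{proc-sat-6} performs the trivial two-copy construction of Proposition~\ref{prop:reachpar}, and Step~\ref{proc-sat-8}, with $\setuniv=\varnothing$, reduces to deciding existential reachability of $\mathcal{T}_{\mathbf{III},\varnothing,\setqual,\parobj_{\indposi}}$ in the underlying graph, which is polynomial. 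Likewise each iteration of the loop over $\setexis$ (Step~\ref{proc-sat-7}) reduces, with $\setuniv=\varnothing$, to deciding $\exis(\parobj_{\indexis})$, i.e.\ to whether some cycle reachable from $s$ has even maximal priority --- polynomial. Since both loops run for $\abs{\setposi}$ and $\abs{\setexis}$ iterations, both bounded by the size of the input, the total running time stays polynomial. The main obstacle is thus not a single hard step but the bookkeeping: verifying that one can extract the \emph{sets} $\mathcal{T}_{\mathbf{II},\MDP,\varnothing,\setqual}$ and $\mathcal{T}_{\mathbf{III},\varnothing,\setqual,\parobj_{\indposi}}$ in polynomial time (rather than only decide conditions $\mathbf{(II_2)}$/$\mathbf{(III_2)}$ on a fixed EC); everything else is a direct specialization of Theorem~\ref{thm:multi-parity}.
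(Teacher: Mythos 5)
Your proof is correct and takes essentially the same route as the paper: both specialize Algorithm~\ref{proc:all} to $\setuniv=\varnothing$, observe that the \typeone($\varnothing,\varnothing$) ECs are just the maximal end-components, that Step~\ref{proc-sat-3} degenerates to a polynomial almost-sure reachability computation, and that the remaining checks reduce to polynomial-time qualitative MDP analyses. Your walkthrough is more detailed than the paper's (which is quite terse and defers to~\cite{etessami2007multi}), but the substance is the same.
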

\begin{proof}[Proof]
The proof is similar to that of Theorem~\ref{thm:multi-parity}, but as $\setuniv=\varnothing$ we compute \typeone($\setuniv,\setuniv$) that is \typeone($\varnothing,\varnothing$) ECs, that is any EC, and so Step~\ref{proc-sat-1} returns the maximal ECs. In the same way, Step~\ref{proc-sat-3} polynomially computes one almost-sure reachability problem.
As we consider \typeone($\varnothing,\varnothing$), the checks in Steps 2,3,6,8 and 10 can be done with a randomized finite-memory strategy. Further, a strategy for each of these steps can be computed in polynomial time~\cite{etessami2007multi}.
\end{proof}

In the above proof, we show that randomized finite-memory strategies are sufficient for formulas of the form $\\ \satisfies{s}{\MDP}\limwedgeone{\indqual\in\setqual}\qual(\parobj_{\indqual})\wedge \limwedgeone{\indposi\in\setposi}\posi(\parobj_{\indposi}) \wedge \limwedgeone{\indexis\in\setexis}\exis(\parobj_{\indexis})$. Below, we show that such strategies are indeed necessary. 
\begin{theorem} \label{thm:poly_space}
Given an MDP $\MDP$, and a state in $\MDP$,
to solve the realizability problem for formulas of the form
$\limwedgeone{\indqual\in\setqual}\qual(\parobj_{\indqual})\wedge \limwedgeone{\indposi\in\setposi}\posi(\parobj_{\indposi}) \wedge \limwedgeone{\indexis\in\setexis}\exis(\parobj_{\indexis})$, a randomized finite-memory winning strategy may be necessary.
\end{theorem}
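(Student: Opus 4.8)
The plan is to prove the statement by exhibiting a single small MDP together with a formula of the prescribed shape for which (i) a winning strategy exists, (ii) no deterministic strategy is winning, and (iii) no memoryless strategy is winning. Point (i) shows the instance is realizable, and points (ii)--(iii) together show that any winning strategy must be both randomized and equipped with memory; since the set $\setuniv$ of sure atoms is empty, the proof of Theorem~\ref{thm:poly} guarantees that a randomized finite-memory winning strategy does exist, so the example witnesses that this is exactly the power required in the worst case. I will take the formula $\limwedgeone{\indposi\in\setposi}\posi(\parobj_{\indposi})$ with $\setposi=\{1,2\}$ and $\setqual=\setexis=\varnothing$, i.e. $\posi(\parobj_1)\wedge\posi(\parobj_2)$. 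As a side remark, the MDP of Figure~\ref{fig:GEC} with this same formula already witnesses that randomization alone is necessary (from $s_0$ every deterministic choice commits to a single successor and hence satisfies at most one atom), but it does not force memory, so a genuinely different example is needed for the memory part.

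The MDP I would use, call it $\MDP$, has three states $s_0$ (initial), $c$, $t$, with priority pairs $(\parobj_1,\parobj_2)$ equal to $(2,1)$ at $s_0$, $(0,1)$ at $c$, and $(1,2)$ at $t$. From $s_0$ the unique action leads to $c$ with probability $1$; from $c$ the action $a$ returns to $s_0$ and the action $b$ moves to $t$, both with probability $1$; $t$ is absorbing. All transitions being deterministic, the only randomization available to a strategy is the choice between $a$ and $b$ at $c$. The structural fact driving the argument is: a run satisfies $\parobj_1$ iff it stays in $\{s_0,c\}$ forever (then the set of infinitely visited $\parobj_1$-priorities is $\{0,2\}$, maximum even, whereas the maximal $\parobj_2$-priority is $1$, odd), and it satisfies $\parobj_2$ iff it eventually reaches $t$ (then the infinitely visited priorities are $\{1\}$ for $\parobj_1$, odd, and $\{2\}$ for $\parobj_2$, even). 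These two events are complementary on any single run, so realizing $\posi(\parobj_1)\wedge\posi(\parobj_2)$ forces a strategy under which the run stays in $\{s_0,c\}$ with positive probability and reaches $t$ with positive probability.

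I would then check the three required points. For realizability, the strategy $\sigma$ that at the first visit of $c$ plays $b$ with probability $\tfrac12$ and otherwise plays $a$ and from then on always plays $a$ at $c$ uses two memory modes (``first visit of $c$ pending'' and ``committed to stay'') together with a single coin toss; with probability $\tfrac12$ it reaches $t$ (hence $\posi(\parobj_2)$ holds) and with probability $\tfrac12$ it remains in $\{s_0,c\}$ (hence $\posi(\parobj_1)$ holds), so $\satisfy{s_0}{\sigma}{\MDP}\posi(\parobj_1)\wedge\posi(\parobj_2)$. Determinism does not suffice because a deterministic strategy induces a single run of $\MDPtoMC{\MDP}{\sigma}$, which satisfies at most one of $\parobj_1,\parobj_2$. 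Memorylessness does not suffice because a memoryless strategy fixes one distribution on $\{a,b\}$ at $c$: if it puts positive weight on $b$ then $t$ is reached almost surely and $\posi(\parobj_1)$ fails, while if it puts weight $0$ on $b$ the run never leaves $\{s_0,c\}$ and $\posi(\parobj_2)$ fails.

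The main obstacle is the memorylessness step, and the example is designed precisely to overcome it. Inside an end-component a randomized memoryless strategy is very powerful: almost surely it visits every state of the component infinitely often, so a prefix-independent objective holds there either almost surely or almost never, and no routing decision inside the component ever needs to be remembered. Consequently the example must be built so that the decisive choice is a \emph{once-and-for-all} alternative---stay inside $\{s_0,c\}$ or escape to the absorbing state $t$---whose two outcomes are incompatible on a run; this is what genuinely forces memory, and verifying that no memoryless randomization can split the probability mass across the two incompatible regions is the heart of the argument. Everything else (the end-component analysis, the explicit two-mode witness strategy) is routine once this example is fixed.
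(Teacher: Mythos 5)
Your proof is correct, but it takes a genuinely different route from the paper. The paper's proof also exhibits a single small MDP forcing both randomization and memory, but it works in a different sub-fragment of the formula class: it uses $\qual(\parobj_1)\wedge\exis(\parobj_2)$ on a two-state MDP (Figure~\ref{fig:mem_needed}), where the tension is between almost-sure stabilization in the low-priority state (needed for $\qual(\parobj_1)$) and the existence of a single, measure-zero, forever-alternating path (needed for $\exis(\parobj_2)$); the witness strategy tosses a coin at each return and commits to looping forever on the first tails. You instead use two $\posi$ atoms, $\posi(\parobj_1)\wedge\posi(\parobj_2)$, on a three-state MDP where the two tail events (stay in the cycle vs.\ escape to the absorbing state) are complementary on every run, so the strategy must split probability mass across them with a single remembered coin toss. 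Both examples isolate the same phenomenon — a once-and-for-all choice whose repetition under a memoryless distribution collapses to an almost-sure outcome — and your case analyses (deterministic strategies yield a single run in a Dirac-transition MDP; memoryless randomization with positive escape weight reaches the trap almost surely by Borel--Cantelli) are sound. What your version buys is that it avoids any reasoning about measure-zero existential paths, staying entirely within probability-mass arguments, and it connects naturally to the paper's own Figure~\ref{fig:GEC} (which you correctly identify as forcing randomization but not memory). What the paper's version buys is a smaller MDP and a demonstration that the phenomenon already appears in the $\qual$/$\exis$ fragment without any $\posi$ atom. Either example suffices for the theorem as stated.
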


\begin{proof}
We show that the lemma already holds for formulas of the form $\qual(\parobj_1)\wedge\exis(\parobj_2)$.
Consider the MDP in Figure~\ref{fig:mem_needed}. We want a strategy $\sigma$ such that in $\MDP^{[\sigma]}$, we have a subset of $(ab)^*a^{\omega}$ with measure $1$ to satisfy $\qual(\parobj_1)$, and we want at least one path in $(a^+b)^{\omega}$ to satisfy $\exis(\parobj_2)$. We can argue that no randomized memoryless strategy, as well as no deterministic memoryful strategy suffices here. However, we can use the following randomized strategy with one bit of memory: Toss a coin every time the token is in state $(0,1)$. If the coin gives heads, play $ba$, otherwise if the coin gives tails once, play $a$ forever from the state $(0,1)$ (one bit of memory is used to store if the coin has given tails).
\begin{figure}[t]
\centering
\vspace{5pt}
\scalebox{1}{
	\begin{tikzpicture}
		\node[player,,initial,initial text={}] (s0) at (0,0) {$1,2$} ;
		\node[player] (s1) at (3,0) {$0,1$} ;
		
		\path[-latex]  
			(s0) edge[bend left=20] node[above] {\small $a$, 1} (s1)
			(s1) edge[bend left=20] node[below] {\small $b$, 1} (s0)
			(s1) edge [loop above] node[above] {\small $a$, 1} (s1)
		;
	\end{tikzpicture}
}
\caption{An MDP requiring randomized finite-memory strategy for the objective $\qual(\parobj_1)\wedge\exis(\parobj_2)$.}
\label{fig:mem_needed}
\end{figure}
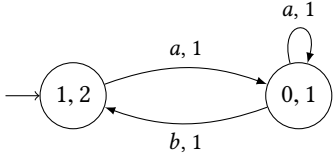
\end{proof}

\begin{theorem}
\label{thm:NPcapCoNP}
Given an MDP $\MDP$, and a state $s$, we can decide in $\sf{NP}\cap\sf{coNP}$ if $\satisfies{s}{\MDP}\univ(\parobj_{\induniv})\wedge \limwedgeone{\indqual\in\setqual}\qual(\parobj_{\indqual})\wedge \limwedgeone{\indposi\in\setposi}\posi(\parobj_{\indposi}) \wedge \limwedgeone{\indexis\in\setexis}\exis(\parobj_{\indexis})$. This is done by solving a polynomial number of parity games.
\end{theorem}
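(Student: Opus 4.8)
The plan is to reuse Algorithm~\ref{proc:all} and its correctness proof from Theorem~\ref{thm:multi-parity}, specializing to the case $\setuniv = \{\induniv\}$ (a single sure atom), and to argue that every step can then be carried out by a polynomial number of parity-game solving calls rather than by Streett-game solving calls. First I would recall that, by Lemma~\ref{lem:2ep}, the realizability of $\univ(\parobj_{\induniv})\wedge \limwedgeone{\indqual\in\setqual}\qual(\parobj_{\indqual})\wedge \limwedgeone{\indposi\in\setposi}\posi(\parobj_{\indposi}) \wedge \limwedgeone{\indexis\in\setexis}\exis(\parobj_{\indexis})$ reduces to deciding, for each $\indposi\in\setposi$, whether $\satisfies{s}{\MDP}\combi{\setuniv}{\setqual}\wedge \posi(\parobj_{\indposi})$, and for each $\indexis\in\setexis$, whether $\satisfies{s}{\MDP}\combi{\setuniv}{\setqual}\wedge \exis(\parobj_{\indexis})$, where now $\combi{\setuniv}{\setqual}=\univ(\parobj_{\induniv})\wedge\limwedgeone{\indqual\in\setqual}\qual(\parobj_{\indqual})$.

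The key observation is that when $\setuniv$ is a singleton, each game that Algorithm~\ref{proc:all} builds has only \emph{one} parity condition (coming from the single $\univ$ atom) together with one B\"uchi condition, so it is a one-parity-plus-one-B\"uchi game, which is itself a parity game (a B\"uchi condition is a parity condition with priorities in $\{1,2\}$, and the conjunction of two parity conditions where one ranges over $\{1,2\}$ is again a parity condition after the standard product/priority-merging construction; this is exactly the situation already used in \cite{AKV16,BRR17} for the $\sf{NP}\cap\sf{coNP}$ results there). Concretely: Step~\ref{proc-sat-1} (computing maximal \typeone$(\{\induniv\},\{\induniv\})$ ECs via Lemma~\ref{lem:alg-typeone}) now invokes Algorithm~\ref{proc:maxsgeck} with $\setuniv=\{\induniv\}$, so Algorithm~\ref{proc:issg} solves a single one-parity-plus-B\"uchi game — a parity game — rather than a Streett game; by the complexity accounting in the proof of Lemma~\ref{lem:alg-typeone} this is $O(\abs{S}^3)$ many parity-game calls. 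Step~\ref{proc-sat-3} (the set of states satisfying $\univ(\parobj_{\induniv})\wedge \qual(\event \mathcal{T}_{\mathbf{II},\MDP,\setuniv,\setqual})$) uses the Streett-B\"uchi game of Section~\ref{sec:typeone}, which with one parity condition is again a parity game, so it is parity-complete as already noted in the complexity discussion of Theorem~\ref{thm:multi-parity}. Steps~\ref{proc-sat-2}, \ref{proc-sat-4}, \ref{proc-sat-5}, \ref{proc-sat-6}, \ref{proc-sat-8}, \ref{proc-sat-7} are polynomial (almost-sure reachability, $\mathbf{(II_2)}$/$\mathbf{(III_2)}$ checks via Lemma~\ref{lemma_stratTwo1}, pruning, and the existential/non-emptiness checks of Lemma~\ref{lem:allconstraints2} which reduce to Streett non-emptiness on an automaton and are in $\PTIME$). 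Hence the whole algorithm makes a polynomial number of oracle calls, each of which is a parity game.

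To conclude, I would observe that parity games are determined with memoryless winning strategies and that deciding the winner of a parity game is in $\sf{NP}\cap\sf{coNP}$ \cite{DBLP:conf/stoc/CaludeJKL017} and the references therein; since Algorithm~\ref{proc:all} only makes polynomially many parity-game queries and otherwise runs in polynomial time, and since $\PTIME^{\sf{NP}\cap\sf{coNP}}=\sf{NP}\cap\sf{coNP}$ by the standard argument of \cite{brassard1979note} (guess-and-verify the oracle answers together with their memoryless witnesses, using the self-duality of parity games), the whole procedure lies in $\sf{NP}\cap\sf{coNP}$. I would state explicitly that this also yields the quasi-polynomial bound by plugging in the quasi-polynomial parity solvers, matching the complexity of \cite{BRR17}. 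The main obstacle I expect is the bookkeeping to make sure that \emph{no} step secretly introduces a genuine conjunction of $\ge 2$ parity conditions (which would force a Streett game and break the bound): in particular one must check that Lemma~\ref{lem:allconstraints2}'s reduction of $\exis(\limwedgeone{\induniv\in\setuniv}\parobj_{\induniv}\wedge\parobj_{\indexis})$ with $\abs{\setuniv}=1$ is an existential check for a conjunction of just two parity conditions, which is a Streett non-emptiness question solvable in polynomial time (not a two-player Streett game), and that the $\mathbf{(III_2)}$-style almost-sure conjunctive-parity checks are handled in $\PTIME$ by Lemma~\ref{lemma_stratTwo1}/\cite{etessami2007multi} rather than by a game. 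Once that is verified, the $\sf{NP}\cap\sf{coNP}$ membership follows immediately.
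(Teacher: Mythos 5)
Your proposal is correct and follows essentially the same route as the paper: specialize Algorithm~\ref{proc:all} to the singleton case $\setuniv=\{\induniv\}$, observe that every Streett(-B\"uchi) game degenerates to a single-parity-plus-B\"uchi game solvable as a parity game (citing \cite{AKV16,BRR17} for that reduction, as the paper does), count a polynomial number of such calls, and conclude via $\sf{P}^{\sf{NP}\cap\sf{coNP}}=\sf{NP}\cap\sf{coNP}$ \cite{brassard1979note}. Your extra bookkeeping about Lemma~\ref{lem:allconstraints2} being a polynomial-time Streett \emph{non-emptiness} check rather than a Streett game is a correct and worthwhile clarification, but it does not change the argument.
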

\begin{proof}[Proof]
The proof is similar to that of Theorem~\ref{thm:multi-parity}, but Step~\ref{proc-sat-1} computes maximal \typeone($\{\ \induniv\},\{ \induniv\}$) ECs which can be done in $\sf{P}^{\sf{NP}\cap\sf{coNP}}$~\cite{AKV16}. Also  
the set of states that ensure the objective in Step~\ref{proc-sat-3} with only one sure parity condition can be computed by solving a polynomial number of parity games~\cite{AKV16,BRR17}, and by pruning states after solving each parity game. The result follows since $\sf{P}^{\sf{NP}\cap\sf{coNP}}=\sf{NP}\cap\sf{coNP}$~\cite{brassard1979note}.
\end{proof}

We now turn to the case of Boolean combinations of $\univ(\parobj)$, $\qual(\parobj)$, $\posi(\parobj)$, and $\exis(\parobj)$ atoms.

\begin{theorem}
\label{thm:Sigma2P}
\QPL realizability is in 
$\sf{NP}^{\sf{NP}}$
.
\end{theorem}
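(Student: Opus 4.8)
The plan is to reduce \QPL realizability to the procedure encapsulated in Algorithm~\ref{proc:all} and account for the cost of guessing the disjunct and all the nondeterministic subroutines. First I would recall from Remark~\ref{rem:negation} that every \QPL formula can be put into negation-free DNF, i.e., a disjunction $\bigvee_j \psi_j$ where each $\psi_j$ is a conjunction $\limwedgeone{\induniv\in\setuniv}\univ(\parobj_{\induniv}) \wedge \limwedgeone{\indqual\in\setqual}\qual(\parobj_{\indqual})\wedge \limwedgeone{\indposi\in\setposi}\posi(\parobj_{\indposi}) \wedge \limwedgeone{\indexis\in\setexis}\exis(\parobj_{\indexis})$ of atoms. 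Since $\satisfies{s}{\MDP}\varphi_1\vee\varphi_2$ is equivalent to $\satisfies{s}{\MDP}\varphi_1$ or $\satisfies{s}{\MDP}\varphi_2$ (as noted after Remark~\ref{rem:negation}), the $\sf{NP}^{\sf{NP}}$ machine first nondeterministically guesses an index $j$ and then checks $\satisfies{s}{\MDP}\psi_j$; this guess costs only one existential layer and the DNF, while possibly exponential in size, is guessed and then never materialized in full — only the chosen disjunct $\psi_j$, which is polynomial, is used.

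Next I would invoke Theorem~\ref{thm:multi-parity}: for the fixed conjunctive disjunct $\psi_j$, Algorithm~\ref{proc:all} correctly decides $\satisfies{s}{\MDP}\psi_j$ and runs in $\sf{P}^{\sf{NP}}$. Composing the outer nondeterministic guess with a $\sf{P}^{\sf{NP}}$ computation yields an $\sf{NP}^{\sf{NP}}$ (that is, $\Sigma_2^{\sf P}$) procedure: an $\sf{NP}$ machine guessing the disjunct and then calling, as a black box, the $\sf{P}^{\sf{NP}}$ decision procedure, which itself makes polynomially many calls to an $\sf{NP}$ oracle (the Streett-game oracle underlying Lemma~\ref{lem:alg-typeone}, Lemma~\ref{thm:reach-Almagor}, and the various checks in Steps~\ref{proc-sat-3}, \ref{proc-sat-5}, \ref{proc-sat-8}, \ref{proc-sat-7}). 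Formally one argues $\sf{NP}^{\sf{P}^{\sf{NP}}} = \sf{NP}^{\sf{NP}}$, so the overall complexity is $\Sigma_2^{\sf P}$ as claimed.

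The main obstacle — and the one point that needs care rather than a routine invocation — is bookkeeping the size blowups so that everything stays polynomial in the right parameters. One must check: (i) converting to negation-free DNF can blow up the formula exponentially, so the reduction must be phrased as "guess a disjunct" rather than "write out the DNF"; each disjunct is a sub-formula of the DNF whose size is bounded by the number of atoms times the nesting depth, hence polynomial in $|\varphi|$, and membership in the set of disjuncts is checkable in polynomial time. (ii) The $\parobj'$-game constructions of Section~\ref{sec:typeone} (the Streett-B\"uchi game $G^{\MDP}_{R,\{\parobj_{\induniv}\pipe \induniv\in\setuniv\}}$) and the copy construction $\MDP'$ of Proposition~\ref{prop:reachpar} each only increase the instance polynomially, so the oracle queries remain polynomial-sized. (iii) Algorithm~\ref{proc:all} and its subroutines (Algorithms~\ref{proc:issg}--\ref{proc:sgec}) solve $O(|\setuniv|\cdot|S|^3)$ Streett games, a polynomial number, each an $\sf{NP}$ query; and the iterative pruning between queries is polynomial-time, justifying the $\sf{P}^{\sf{NP}}$ bound used as a subroutine. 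Once these size accounts are confirmed, the composition $\sf{NP}^{\sf{P}^{\sf{NP}}} = \Sigma_2^{\sf P}$ closes the proof, and I would cite~\cite{brassard1979note} for the collapse of oracle hierarchies where needed.
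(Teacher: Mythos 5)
Your proposal is correct and follows essentially the same route as the paper: put $\varphi$ in negation-free form, have the outer $\sf{NP}$ machine guess a conjunctive clause of atoms (a disjunct of the implicit DNF, never materialized in full), verify that the clause entails $\varphi$, decide the clause with the $\sf{P}^{\sf{NP}}$ algorithm of Theorem~\ref{thm:multi-parity}, and conclude via $\sf{NP}^{\sf{P}^{\sf{NP}}}=\sf{NP}^{\sf{NP}}$. The only cosmetic difference is that the paper verifies the guessed clause by an $\sf{NP}$-oracle check that it propositionally implies $\varphi$, whereas you verify disjunct membership directly; both fit within the same complexity budget.
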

\begin{proof}[Proof]
Assume that we are given an MDP $\MDP$, a state $s$, and a \QPL formula $\varphi$.  
We assume that $\varphi$ is in negation-free normal form. A negation-free equivalent formula can be found in polynomial time, it suffices to take the negation normal form (by using De Morgan's law to push negations inwards), and to take the dual of the negated atomic parity conditions.

We note that there exists a DNF formula equivalent to $\varphi$ with the same set of atomic objectives. 
If $\varphi$ is satisfiable, then there exists a conjunctive clause $\psi$ that is a conjunction of these atomic objectives, and $\psi$ can be guessed by an $\sf{NP}$ machine such that
$\satisfies{s}{\MDP}{\psi}$.
Given this witness $\psi$, we can use an $\sf{NP}$ oracle to check that $\psi$ implies $\varphi$ where the atomic objectives are regarded as classical propositional atoms. 
We can then use the $\sf{P}^{\sf{NP}}$ algorithm of Theorem~\ref{thm:multi-parity} to get a proof that 
indeed $\satisfies{s}{\MDP}{\psi}$
implying that $\varphi$ is also satisfied by $s$.
The result follows since the class $\sf{NP}^{{\sf{P}^{\sf{NP}}}}$ is the same as $\sf{NP}^{\sf{NP}}$.
\end{proof}

\begin{theorem}
\label{thm:hardness}
\QPL realizability is both $\sf{NP}$-hard and $\sf{coNP}$-hard.
\end{theorem}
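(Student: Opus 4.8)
The plan is to establish the two lower bounds by two independent polynomial‑time reductions, both producing target formulas that lie already in $\mathbb{B}(\univ)$ (only sure atoms, but with disjunctions nested inside conjunctions): $\sf{coNP}$‑hardness will come from games with a conjunction of parity objectives, and $\sf{NP}$‑hardness from SAT.

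For $\sf{coNP}$‑hardness I would reduce from deciding whether Player~$1$ wins a two‑player turn‑based game whose objective is a conjunction $\limwedgeone{\induniv\in\setuniv}\parobj_{\induniv}$ of parity conditions; such a conjunction is a Streett condition and deciding the winner of such a game is $\sf{coNP}$‑complete~\cite{CHP07}. Given such a game $G$ with initial state $s$, I would regard it as an MDP $\MDP$ in the standard way: Player~$1$'s moves become the actions of $\MDP$, and each Player~$2$ state becomes a state with a single action whose distribution (say uniform) is supported on all its successors; the precise probabilities are irrelevant since $\univ(\cdot)$ depends only on the set of paths of $\MDPtoMC{\Gamma}{\sigma}$. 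A Player~$1$ strategy in $G$ is then exactly a strategy $\sigma$ in $\MDP$, and the plays of $G$ consistent with it are exactly the paths of $\MDPtoMC{\Gamma}{\sigma}$. Hence Player~$1$ wins $G$ from $s$ if and only if $\satisfies{s}{\MDP}\limwedgeone{\induniv\in\setuniv}\univ(\parobj_{\induniv})$, a \QPL formula; the reduction is clearly polynomial, so \QPLnospace‑realizability is $\sf{coNP}$‑hard already for conjunctions of sure atoms, consistently with Table~\ref{tab:main}.

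For $\sf{NP}$‑hardness I would reduce from SAT. From a CNF formula $\Phi=\bigwedge_{i=1}^{m}C_i$ over $x_1,\dots,x_n$ I would build an MDP $\MDP_{\Phi}$ whose strategies encode truth assignments: its states are $u_1,\dots,u_n,t_1,f_1,\dots,t_n,f_n$, all transitions deterministic, with two actions at each $u_k$ leading to $t_k$ and to $f_k$, and the unique action from both $t_k$ and $f_k$ leading to $u_{k+1}$ (indices cyclic, $u_{n+1}=u_1$). Thus on every infinite path the states $u_1,\dots,u_n$ recur and, for each $k$, at least one of $t_k,f_k$ recurs. For every $k$ I would use two parity conditions, $\parobj_k^{+}$ with priority $2$ on $t_k$, priority $3$ on $f_k$, $0$ elsewhere, and $\parobj_k^{-}$ with priority $2$ on $f_k$, priority $3$ on $t_k$, $0$ elsewhere; I set $A(x_k)=\univ(\parobj_k^{+})$, $A(\neg x_k)=\univ(\parobj_k^{-})$, and take as target formula $\varphi=\bigwedge_{i=1}^{m}\bigl(\bigvee_{\ell\in C_i}A(\ell)\bigr)$. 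The point of the odd priority $3$ is that $\satisfy{u_1}{\sigma}{\MDP_\Phi}A(x_k)$ holds precisely when on \emph{every} path of $\MDPtoMC{\MDP_\Phi}{\sigma}$ the state $f_k$ recurs only finitely often (otherwise the largest $\parobj_k^{+}$‑priority seen infinitely often is $3$), i.e.\ when $t_k$ is the one that recurs on every path; symmetrically for $A(\neg x_k)$. Consequently $A(x_k)$ and $A(\neg x_k)$ behave exactly like a positive and a negative literal over a Boolean variable, and no single $\sigma$ can witness both. I would then argue: from any $\sigma\models\varphi$ one reads off a partial assignment $\tau$ (set $\tau(x_k)$ according to whichever of $A(x_k),A(\neg x_k)$ holds, arbitrarily when neither does); since each clause $C_i$ contains a literal $\ell$ with $\satisfy{u_1}{\sigma}{\MDP_\Phi}A(\ell)$, that literal is true under $\tau$, so $\tau\models\Phi$. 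Conversely, a satisfying assignment $\tau$ gives the deterministic memoryless strategy choosing $t_k$ when $\tau(x_k)=1$ and $f_k$ otherwise, whose unique path makes $A(x_k)$ hold exactly when $\tau(x_k)=1$, hence $\sigma\models\varphi$. Therefore $\satisfies{u_1}{\MDP_\Phi}\varphi$ iff $\Phi$ is satisfiable, and the reduction is polynomial, which yields $\sf{NP}$‑hardness; note that the disjunctions inside $\varphi$ are essential, since a purely conjunctive formula is only $\sf{coNP}$‑hard.

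The step I expect to be the crux is the ``non‑cheating'' part of the $\sf{NP}$‑hardness reduction: one must rule out a clever (possibly infinite‑memory, randomised) strategy that alternates between $t_k$ and $f_k$ and thereby witnesses the mutually contradictory literal atoms of one variable at once — which would make $\mathbb{B}(\univ)$ strictly stronger than CNF‑SAT and break the argument. The auxiliary odd priority on the ``wrong'' state is exactly what forces the choice at $u_k$ to stabilise on every path and makes the two literal atoms logically incompatible. The remaining checks (correctness of both reductions and their polynomial size, and the routine observation that, since $\univ$ is prefix‑independent and the MDP is strongly connected from $u_1$, evaluating the atoms from $u_1$ suffices) should be straightforward.
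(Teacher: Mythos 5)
Your proposal is correct and follows essentially the same route as the paper: $\sf{coNP}$-hardness via the encoding of Streett games by conjunctions of $\univ(\parobj)$ atoms, and $\sf{NP}$-hardness via a reduction from SAT in which each literal becomes a sure parity atom with an even priority on the ``chosen'' state and an odd priority on its complement, so that the two literal atoms of a variable are mutually exclusive under any single (even randomised, infinite-memory) strategy. The only cosmetic difference is that the paper uses a complete graph on the literal states together with an extra conjunct $\bigwedge_i(\univ(\parobj_{a_i})\vee\univ(\parobj_{\bar a_i}))$ to force a total valuation, whereas you enforce this structurally with a cycle gadget and argue via a partial assignment; both work.
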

\begin{proof}[Proof]
The $\sf{coNP}$-hardness follows from the fragment of $\QPL$ made of conjunctions of $\univ(\parobj)$ atoms that is powerful enough to encode Streett games, as proved in~\cite{CHP07}.

We prove $\sf{NP}$-hardness for the fragment of \QPL made of $\{\wedge,\vee,\univ(p)\}$. 
Given a SAT formula $\varphi$ in negation normal form, we define a \QPL formula $\varphi_{obj}$, and an MDP $\MDP$ both polynomial in $\varphi$ such that there exists a state $s$ of $\MDP$ and there exists a winning strategy for $\varphi_{obj}$ from $s$ on $\MDP$ if and only if $\varphi$ is satisfiable. Let $Var=\{a_1,\dots a_n\}$ be the set of propositional variables of $\varphi$. 
Let $S=\{a_i,\bar{a}_i\pipe i\in[n]\}$ be both the set of states and the set of actions. We define the MDP $\MDP=\{S,\{(a,b,b)\pipe a,b\in S\},S,Pr\}$ where for all $a,b\in S$ we have
$Pr(a,b,b)=1$.
Note that the underlying graph is a complete graph.
For each $i\in [n]$, we define two parity conditions $\parobj_{a_i}$ and  $\parobj_{\bar{a}_i}$ such that $\parobj_{a_i}(a_i)=\parobj_{\bar{a}_i}(\bar{a}_i)=2$, such that $\parobj_{a_i}(\bar{a}_i)=\parobj_{\bar{a}_i}(a_i)=3$ and $\parobj_{a_i}(b)=\parobj_{\bar{a}_i}(b)=1$ if $b\nin\{a_i,\bar{a}_i\}$. 
We define $\psi=\limwedgeone{i\in[1,n]}\univ(\parobj_{a_i})\vee\univ(\parobj_{\bar{a}_i})$. Given the SAT formula $\varphi$ in NNF, we define a \QPL formula $\varphi'$ by transforming $\varphi$ in the following way: each $\neg{a_i}$ is replaced by $\univ(\parobj_{\bar{a}_i})$ and each (non-negated) $a_i$ is replaced by $\univ(\parobj_{a_i})$. 
We define $\varphi_{obj}=\psi\wedge \varphi'$. 
In the MDP $\MDP$, we consider an arbitrary state $s$, a strategy $\sigma$, and the paths in $\paths^{\MDPtoMC{\MDP}{\sigma}}(s)$.
We now show that $\satisfies{s}{\MDP}{\varphi_{obj}}$ iff $\varphi$ is satisfiable.

We first prove the right to left implication. We assume $\varphi$ is satisfiable, by some valuation $\fun{\rho}{[n]}{Var\cup \{\neg a_i\pipe a_i\in Var\}}$ such that $\rho(i)=a_i$ or $\rho(i)=\neg a_i$. 
In the MDP $\MDP$, we consider an arbitrary state $s$, and define a strategy $\sigma$ with a transition system that loops over $n$ modes. At mode $i$, it plays  $a_i$ if $\rho(i)=a_i$, and it plays  $\bar{a_i}$ if $\rho(i)=\neg a_i$. The only path in $\paths^{\MDPtoMC{\MDP}{\sigma}}(s)$ clearly satisfies both $\psi$ and $\varphi'$, hence  $\satisfies{s}{\MDP}{\varphi_{obj}}$.

We now prove the left to right implication. We assume that for some $s\in S$, there exists a strategy $\sigma$ such that $\satisfy{s}{\sigma}{\MDP}{\varphi_{obj}}$. We note that since $\sigma$ may be randomized,  $\paths^{\MDPtoMC{\MDP}{\sigma}}(s)$ may contain more than one path. We take $\pi\in \paths^{\MDPtoMC{\MDP}{\sigma}}(s)$, and consider the following valuation: $\fun{\rho}{[n]}{Var\cup \{\neg a_i\pipe a_i\in Var\}}$ such that for all $i\in[n]$, we have $\rho(i)=a_i$ if $a_i\in\inf(\pi)$ and $\rho(i)=\neg a_i$ if $\bar{a_i}\in\inf(\pi)$. We have that $\rho$ is a well-defined and complete valuation: Since $\varphi_{obj}=\varphi'\wedge \psi$ we have that $\satisfy{s}{\sigma}{\MDP}{\psi}$, and for all $i\in[n]$ either $\rho(i)=a_i$ holds or $\rho(i)=\neg a_i$ holds, but not both. Valuation $\rho$ satisfies $\varphi$ since $\satisfy{s}{\sigma}{\MDP}{\varphi'}$, hence $\varphi$ is satisfiable.
\end{proof}
\begin{acks}                            
  Work partially supported by the PDR project Subgame perfection in graph games (F.R.S.- FNRS), the ARC project Non-Zero Sum Game Graphs: Applications to Reactive Synthesis and Beyond (Fédération Wallonie-Bruxelles), the EOS project Verifying Learning Artificial Intelligence Systems (F.R.S.-FNRS \& FWO), and the COST Action 16228 GAMENET (European Cooperation in Science and Technology).
\end{acks}


\bibliography{references}
%



\end{document}